\newcommand\vldbdoi{10.14778/3430915.3430931}
\newcommand\vldbpages{418 - 430}
\newcommand\vldbvolume{14}
\newcommand\vldbissue{3}
\newcommand\vldbyear{2020}
\newcommand\vldbavailabilityurl{}
\newcommand\vldbpagestyle{empty}
\newcommand{\eat}[1]{}
\newcommand{\shaleen}[2]{}
\newcommand{\anja}[2]{}
\newcommand{\stratis}[2]{}
\newcommand{\cut}[1]{}   
\newcommand{\framework}{\mbox{\textsc{GSum}}\xspace}
\newenvironment{packed_item}{
	\begin{itemize}
		\setlength{\itemsep}{1pt}
		\setlength{\parskip}{0pt}
		\setlength{\parsep}{0pt}
	}
	{\end{itemize}}
\newenvironment{packed_enum}{
	\begin{enumerate}
		\setlength{\itemsep}{1pt}
		\setlength{\parskip}{0pt}
		\setlength{\parsep}{0pt}
	}
	{\end{enumerate}}
\newcommand{\ie}{{\em i.e.}\xspace}
\newcommand{\introparagraph}[1]{\noindent {\bf \em #1.}}  
\newtheorem{definition}{Definition}
\newtheorem{proposition}{Proposition}
\newtheorem{lemma}{Lemma}
\newtheorem{example}{Example}
\newtheorem{theorem}{Theorem}
\providecommand{\norm}[1]{{\lVert#1\rVert}}
\providecommand{\bF}[0]{\mathbf{F}}
\providecommand{\bS}[0]{\mathbf{S}}
\providecommand{\bW}[0]{\mathbf{W}}
\providecommand{\mult}[3]{\textsf{m}_{#1}({#2},{#3})}
\providecommand{\domain}[0]{\mathsf{dom}}
\providecommand{\bq}[0]{\mathbf{q}}
\providecommand{\td}{d}
\let\oldlangle\langle
\renewcommand{\langle}{{\color{purple} \oldlangle}}
\let\oldrangle\rangle
\renewcommand{\rangle}{{\color{purple} \oldrangle}}
\definecolor{darkcyan}{rgb}{0.0, 0.35, 0.85}
\newcommand{\rone}[1]{{\color{black} #1}}
\definecolor{burntorange}{rgb}{1, 0.033, 0.24}
\newcommand{\rtwo}[1]{{\color{black} #1}}
\definecolor{forestgreen}{rgb}{0.13, 0.55, 0.13}
\newcommand{\rthree}[1]{{\color{black} #1}}
\newcommand{\johnc}[1]{{\color{black} #1}}
\begin{document}
	\title{Comprehensive and Efficient Workload Compression}
	\author{Shaleen Deep{$^\dagger$}, Anja Gruenheid{$^\ddagger$}, Paraschos Koutris{$^\dagger$}, Jeffrey Naughton{$^\ddagger$}, Stratis Viglas{$^\ddagger$}}
	\affiliation{	
	\institution{$^\dagger$University of Wisconsin - Madison, $^\ddagger$Google Inc.} 
	}
	\email{ {shaleen, paris}@cs.wisc.edu} 
	\email{  {anjag, naughton, sviglas}@google.com}

	\begin{abstract}
This work studies the problem of constructing a representative workload from a given input \rtwo{analytical} query workload where the former serves as an approximation with guarantees of the latter.
We discuss our work in the context of workload analysis and monitoring.
As an example, evolving system usage patterns in a database system can cause load imbalance and performance regressions which can be controlled by monitoring system usage patterns, i.e.,~a representative workload, over time.
To construct such a workload in a principled manner, we formalize the notions of workload {\em representativity} and {\em coverage}.
These metrics capture the intuition that the distribution of features in a compressed workload should match a target distribution, increasing representativity, and include common queries as well as outliers, increasing coverage. 
We show that solving this problem optimally is computationally hard and present a novel greedy algorithm that provides approximation guarantees. 
We compare our techniques to established algorithms in this problem space such as sampling and clustering, and demonstrate advantages and key trade-offs.
\end{abstract}

	\maketitle
	
	\pagestyle{\vldbpagestyle}
	\begingroup\small\noindent\raggedright\textbf{PVLDB Reference Format:}\\
	Shaleen Deep, Anja Gruenheid, Paraschos Koutris, Jeffrey Naughton, Stratis Viglas. Comprehensive and Efficient Workload Compression. PVLDB, \vldbvolume(\vldbissue): \vldbpages, \vldbyear.\\
	\href{https://doi.org/\vldbdoi}{doi:\vldbdoi}
	\endgroup
	\begingroup
	\renewcommand\thefootnote{}\footnote{\noindent This work is licensed under the Creative Commons BY-NC-ND 4.0 International License. Visit \url{https://creativecommons.org/licenses/by-nc-nd/4.0/} to view a copy of this license. For any use beyond those covered by this license, obtain permission by emailing \href{mailto:info@vldb.org}{info@vldb.org}. Copyright is held by the owner/author(s). Publication rights licensed to the VLDB Endowment. \\
		\raggedright Proceedings of the VLDB Endowment, Vol. \vldbvolume, No. \vldbissue\ %
		ISSN 2150-8097. \\
		\href{https://doi.org/\vldbdoi}{doi:\vldbdoi} \\
	}\addtocounter{footnote}{-1}\endgroup
	
	\ifdefempty{\vldbavailabilityurl}{}{
		\vspace{.3cm}
		\begingroup\small\noindent\raggedright\textbf{PVLDB Artifact Availability:}\\
		The source code, data, and/or other artifacts have been made available at \url{\vldbavailabilityurl}.
		\endgroup
	}
	
	\section{Introduction}
\label{sec:intro}

Performance tuning has been at the core of database system development and deployment since its inception.
To facilitate effective system design and development, we need to understand how the system is used over time. 
For example, if a system is developed for transactional workloads but is increasingly used for analytical workloads, its usage patterns significantly shift, potentially resulting in performance regression.
The first step towards a holistic understanding of system usage is to perform an in-depth analysis of the query workloads it is serving. 
However, logs from production database systems are far too large to examine manually. 
To be able to identify key components of the workload, we propose to  create and monitor a subset of the input workload which {\em closely} represents the original workload.
To that end, our work presents a \rtwo{semi-supervised} framework to {\em compress} \rtwo{analytical} workloads.

\smallskip
\introparagraph{Problem Motivation}
The need for our work stems from the complexity of contemporary database deployments, which have scaled with the number of customers and the workloads to serve. 
In order to predict the performance of a \textsf{RDBMS} on a large workload, it is common to evaluate it on a {\em benchmark workload} that resembles the target workload. 
Historically, benchmarked workloads are either standardized (such as \textsf{TPC-H}~\cite{tpch}, \textsf{SSB}~\cite{o2007star}, \textsf{YCSB}~\cite{cooper2010benchmarking}, and \textsf{Wisconsin Benchmark}~\cite{dewitt1993wisconsin}) or created by domain experts who manually curate queries.
If we wish to construct a custom benchmark for every use case of each customer, the second solution becomes unsustainable.
Thus, automatic workload characterization and subsequently workload compression are a means to address this issue in an efficient and scalable way. 

\smallskip
\introparagraph{Approach}
With commercial deployments serving billions of queries per day, the size of a system's workload-to-analyze quickly escalates.
Instead of using every query of the workload, we propose to use a smaller sample of the workload while qualitatively not degrading the result of the process. We call this {\em workload compression} or {\em summarization}\footnote{We will be using the terms {\em workload compression} and {\em workload summarization} interchangeably, as we will for the terms {\em compressed workload} and {\em summary workload}}. 
Constructing a compressed workload is challenging for several reasons.
First, there is no universal set of goals to consider as representative for the workload: the output changes depending on the metric we optimize for. 
Second, the production workload to compress often does not fit any well-known statistical distribution, thereby making workload synthesis extremely challenging.
Lastly, there are a variety of variables to take into account in a real production deployment: different job sizes, a wide range of query run times, observable skew due to temporal or spatial locality, query complexity. It is, therefore, unclear what features are the salient ones when it comes to characterizing a workload adequately.
The two following examples illustrate the diverse characteristics of workload compression.

\begin{example}
	 \rone{Consider the developers of an application who use a query engine in production. The developers would like to create performance accountability of the query engine, i.e.,~they would like to create compliance benchmarks to track the query engine performance.} Suppose that application generates a workload mainly consists of short look-up queries but contains a handful of long-running queries, then both of these query types must be present in the benchmark.
	 Thus, the benchmark workload must have {\em high coverage} by including queries with differing run times \rone{to track the query engine performance on both types of queries.}
\end{example}

\begin{example}
	System performance can be tuned by recommending indexes to speed up query processing.
	However, the complexity of index recommendation grows quadratically with the workload size.
	Therefore, we would like to find a compressed workload that is {\em highly representative}, i.e.,~it has the same performance characteristics as the original input workload and use it for index recommendation instead.
\end{example}
%

Finding a workload with both high representativity and high coverage is a challenging combinatorial problem for the following reasons: 
\begin{inparaenum}[(\itshape i\upshape)]
	\item {\em Defining metrics formally.} It is unclear what it means to have high coverage and representativity since these metrics are dependent on the application context.
    \item {\em High workload heterogeneity and variability.} While manual queries are smaller and easier to deal with, it is not uncommon that workload queries are machine-generated with more than 50 joins in production workloads.
    A good approximation of the input workload needs to contain all types of queries.
    \item {\em Increasing workload scale.}
    Production database systems routinely serve billions of queries every day, which makes any analysis challenging.
    Moreover, a significant fraction of these workloads is over ad hoc queries and tables rather than carefully designed schemas, making pattern mining over large, unstructured data sources even more difficult. 
\end{inparaenum}

\smallskip
\introparagraph{Prior work} Previous work on workload compression has used a variety of techniques ranging from random sampling and clustering to sophisticated ML models.
For instance,~\cite{chaudhuri2003primitives,chaudhuri2002compressing} employ clustering by defining a customized distance function for each application.
More recently,~\cite{jain2018query2vec} explores machine learning for workload compression. 
The insight here is to train a model specifically for SQL queries (similar to \textsc{Word2Vec}).
Most closely related to our setting is the query log summarization framework, \textsc{Ettu}~\cite{kul2016ettu, kul2016summarizing}.
\textsc{Ettu} summarizes query logs by parsing the syntax tree of queries and performing hierarchical clustering where the distance metric between queries is based on the number of common subexpressions. All of the above proposals have certain limitations.
The techniques in~\cite{chaudhuri2003primitives, chaudhuri2002compressing} are not scalable even for medium-sized workloads owing to high time complexity of $O(n^2)$ where $n$ is the input workload size.
While~\cite{jain2018query2vec} does not suffer from quadratic complexity, it requires expensive preprocessing to train the machine learning model.
The approach described in~\cite{kul2016summarizing} ignores query execution statistics, templated queries, and stored procedures.
Additionally, input workloads are often skewed in some way, and it is critical to ensure that the summary exhibits the same kind of characteristics as the input workload; this is not possible without some notion of representativity. 
\rone{Note that prior work, including our own, ignore any query execution impact of concurrently executed queries. 
For instance, they may compete for the same set of resources which in turn affects the performance.
Designing representative workloads with provable guarantees that also take such effects into account remains a challenging open problem.}

\smallskip
\introparagraph{Contributions and organization} In this work, we introduce a novel, generic framework for workload compression of analytical queries that applies to a wide variety of performance tuning tasks. We design, implement, and evaluate our workload compressor with robust guarantees for representativity and coverage. More specifically, we 
\begin{compactitem}
	\item formally define representativity and coverage to formulate workload compression as an optimization problem ( Section~\ref{sec:framework}) and propose a set of error metrics.
	\item prove that maximizing representativity is a hard problem even in restricted settings (Section~\ref{sec:hardness}).
	\item propose a novel objective that exploits submodularity to provide provable guarantees about the quality of a compressed workload (Section~\ref{sec:main}). Our algorithm allows for a smooth trade-off between representativity and coverage.
	\item apply optimizations to improve the performance of the algorithm and describe how submodularity can be exploited for efficient computation (Section~\ref{sec:parallel}). 
	\item evaluate our approach by comparing to sampling and clustering-based methods in Section~\ref{sec:eval}. We experimentally demonstrate that our framework and metrics are powerful enough by applying them to three practical use cases of $(i)$ schema index tuning; $(ii)$ materialized view recommendation; and $(iii)$ index and view recommendation. We show that our techniques require two orders of magnitude less time to create a compressed workload with better representativity and coverage as compared to clustering-based approaches. 
\end{compactitem}

	\section{Problem Setting}
\label{sec:framework}

\rtwo{In this paper, we assume that a query log $L$ contains a finite collection of queries. 
For each query $q \in L$, we will create a representation that {\itshape featurizes} the query, such as finding predicates in the \texttt{\color{blue}WHERE} clause or table names present in the \texttt{\color{blue}FROM} clause. 
We assume that the universe of features in both a log and a query is enumerable and finite.
This requirement is essential in order to define appropriate metrics.} \rone{We also assume that the log contains execution statistics that can be looked over query-by-query. Such statistics are recorded by all DBMS (\cite{sqlserver} shows the statistics recorded by SQL Server).}

\subsection{Notation}
\label{subsec:cq}

We define the input workload as a {\em multiset} $\mathbf{W} = \{ q_1, \dots, q_n\}$ that consists of $n$ queries. A workload is a multiset, since the same query may occur many times in the workload. 
\rtwo{Each $q_i$ is a log entry that contains the SQL text of the query and its execution information.}

\smallskip
\introparagraph{Features} 
We summarize a workload with respect to its feature set. 
We consider two types of features: 
\begin{compactitem}
    \item {\em Categorical features} ($\bF_{\text{categorical}}$): these features capture values derived from the syntax tree of a \textbf{SQL} query.
    \item {\em Numeric features} ($\bF_{\text{numeric}}$): these features capture numeric values that are derived from profiling statistics of the query.
\end{compactitem}

The feature set is defined as $\bF =  \bF_{\text{numeric}} \cup \bF_{\text{categorical}}$. We use $\domain(\bS, f)$ to denote the active domain of feature $f$ for some workload $\bS$ and use {\em token} to refer to feature values in the active domain. 

\smallskip
\introparagraph{Feature value multiplicity} To design a generic approach, we consider not only single-valued features but also multi-valued features. 
In other words, features of a query $\bq$ can have multiple domain values associated with it (that can also occur multiple times). 
\rtwo{For example, consider the function calls present in the \texttt{\color{blue}SELECT} clause.
Since a function call such as \texttt{\color{blue}SUM} can be present multiple times in the SQL statement, it is a multi-valued feature}.
To formally model multi-valued features, we represent the value of a feature $f$ of query $\bq$ as a multiset of tokens $f(\bq)$. 
The size of a query $\bq$, denoted $\norm{\bq}$, is the total number of tokens across all features, $\norm{\bq} = \sum_{f \in \bF} |f(\bq)|$. 
The size of a workload $\norm{\bS}$ is the sum of sizes of all queries in the workload.
For a workload $\bS$, we define $f(\bS) = \boldsymbol{\Large \biguplus}_{q \in \bS} f(\bq)$.
Finally, for a token $t \in \domain(\bS,f)$, its {\em frequency}, denoted $\mult{\bS}{t}{f}$ is the number of times the token appears in the multiset $f(\bS)$.

\subsection{Encoding Queries}

\label{subsec:encoding}

For the purpose of this paper, we consider a limited set of features that can be derived from a typical database system log entry, i.e.,~the SQL query text and its execution statistics. The features used throughout this paper are:

\vspace{0.5em}
\introparagraph{Categorical features} These are features derived from parsing the query statement, we choose:
\begin{inparaenum}[(1)]
    \item Function calls in the \texttt{\color{blue}SELECT} clause (such as \texttt{\color{blue}AVG,MAX} or some stored procedure),
    \item tables in the \texttt{\color{blue}FROM} clause of the (sub-)query,
    \item columns in the \texttt{\color{blue}GROUP BY} clause, and 
    \item columns in the \texttt{\color{blue}ORDER BY} clause.
\end{inparaenum}

\vspace{0.5em}
\introparagraph{Numeric features} These are features that describe the performance of a query, we choose:
\begin{inparaenum}[(1)]
    \item The total execution time of the query,
    \item planning time of the query,
    \item total size of the input to the query,
    \item total output rows of the query,
    \item CPU time spent executing the query, and 
    \item the number of joins as parsed from the query.
\end{inparaenum}
It is likely that for some numeric features such as execution time, no two queries have identical values.
To sparsify numeric features, we normalize the values such that they are in $[0, 1]$ by leveraging the largest and smallest values in the active domain.
Using this methodology allows us to reason about their discrete distribution. We transform the scaled numeric values into a histogram by assigning a bucket id $b_i \in \{ 0, \dots, H\}$ to each numeric value $v_i \in [0, 1]$ such that $b_i = \lfloor v_i \cdot H \rfloor$.
Observe that this transformation also changes the active domain to $\{ 0, \dots, H\}$ for the respective numeric features.

\vspace{0.5em}
\introparagraph{Example}
	Consider the workload $\bW = \{Q_1, Q_2, Q_3 \}$:
	\begin{center}
		\hspace*{-0.8em} $Q_1 = $ \texttt{\color{blue}SELECT} {\color{red} a}, \texttt{\color{blue}AVG}({\color{red} b}), \texttt{\color{blue}MAX}({\color{red} c}), \texttt{\color{blue}MAX}({\color{red} d}) \texttt{\color{blue}FROM} { \color{red} ${ {T_1}}$} \texttt{\color{blue}GROUP BY} {\color{red} a}
		
		\hspace*{-2.5em} $Q_2 = $ \texttt{\color{blue}SELECT} \texttt{\color{blue}COUNT}(*) \texttt{\color{blue}FROM} { \color{red} ${T_1}$}, { \color{red} ${T_2}$} \texttt{\color{blue}WHERE} { \color{red} ${T_1.a} = {T_2.a}$}
		
		\hspace*{-1em}
		$Q_3 = $\texttt{ \color{blue}SELECT} * \texttt{\color{blue}FROM} { \color{red}${T_1}$}, {\color{red}$T_2$}, {\color{red}${ {T_3}}$} \texttt{ \color{blue}ORDER BY} { \color{red} ${T_1.a}, {T_2.b},  {T_3.c}$}
	\end{center}
	
\noindent The corresponding domains for the categorical features are:
\begin{align*}
\domain(\bW, \mathit{f^c_1 = \mathsf{function\_call}}) & = \{ \texttt{\color{blue}AVG}, \texttt{\color{blue}MAX}, \texttt{\color{blue}COUNT}\} \\
\domain(\bW, \mathit{f^c_2 = \mathsf{table\_reference}}) & = \{ {\color{red}{T_1, T_2, T_3}} \} \\ 
\domain(\bW, \mathit{f^c_3 = \mathsf{group\_by}}) & = \{ {\color{red}{T_1.a}} \} \\
\domain(\bW, \mathit{f^c_4 = \mathsf{order\_by}}) & = \{{\color{red}T_1.a}, {\color{red}T_2.b}, {\color{red}T_3.c}\} 
\end{align*}

\noindent Table~\ref{syn:cat} shows the feature values per query. 
Note that for query $Q_1$, the token \texttt{\color{blue}MAX} appears twice, since it occurs two times in the selection clause. 

\begin{table}[t!]
	\centering
	\scalebox{0.9}{
	\begin{tabular}{c|c|c|c|c}
		\toprule[0.1em]
		& $f^c_1$ & $f^c_2$ & $f^c_3$ & $f^c_4$  \\ \midrule[0.1em]
		$Q_1$ & $\texttt{\color{blue}AVG}$,  $\texttt{\color{blue}MAX}$, $\texttt{\color{blue}MAX}$ & $\texttt{\color{red}$T_1$}$ &  $\texttt{\color{red}$T_1.a$}$ &   \\ \midrule
		$Q_2$ & $\texttt{\color{blue}COUNT}$ & $\texttt{\color{red}$T_1$}$, $\texttt{\color{red}$T_2$}$ & &  \\ \midrule
		$Q_3$ &  & 
		$\texttt{\color{red}$T_1$}$, $\texttt{\color{red}$T_2$}$, $\texttt{\color{red}$T_3$}$
		& &  
		\texttt{\color{red}$T_1.a$},
		 $\texttt{\color{red}$T_2.b$}$,
		 $\texttt{\color{red}$T_3.c$}$ \\				
		\bottomrule
	\end{tabular}%
	}
	\caption{Categorical features.} \label{syn:cat}
\end{table}

\noindent Additionally, we observe profile statistics as shown in Table~\ref{syn:fv}.
The numeric features $f^n_1, \dots f^n_6$ correspond to the total execution time, planning time, size of the input, total output rows, total CPU time, and the number of joins. 
For all numeric features, we normalize the values into a histogram with $H=10$, the resulting bucket assignment is shown in the lower part of Table~\ref{syn:fv}.

\begin{table}[t!]
	\centering
	\scalebox{0.9}{
	\begin{tabular}{c|c|c|c|c|c|c}
		\toprule[0.1em]
		& $f^n_1$ & $f^n_2$ & $f^n_3$ & $f^n_4$ & $f^n_5$ & $f^n_6$  \\ \midrule[0.1em]
		$Q_1$ & $5$ms & $4$ms& $5$MB & $100$ & $2$ms & 0  \\ \midrule
		$Q_2$ & $10$ms & $2$ms & $10$MB & $1000$ & $3$ms & 1  \\		\midrule
		$Q_3$ & $8$ms & $5$ms & $20$MB & $500$ & $4$ms & 2 \\	
		\bottomrule
		\multicolumn{6}{l}{} \\
		\toprule[0.1em]
		$Q_1$ & $0$ & $6$ & $0$ & $0$ & $0$ & 0 \\ \midrule
		$Q_2$ & $10$ & $0$ & $6$ & $10$ & $5$ & 5 \\		\midrule
		$Q_3$ & $6$ & $10$ & $10$ & $4$ & $10$ & 10 \\	
		\bottomrule
	\end{tabular}%
	}
	\caption{Extracted numeric feature values.} \label{syn:fv}
\end{table}

\noindent The sizes of the queries are as follows: $\norm{Q_1} = 11$, i.e.,~six numeric feature tokens, three tokens in $f_1^c$, and one in $f_2^c$ and $f_3^c$ respectively, $\norm{Q_2} = 9$ and $\norm{Q_3} = 12$. The size of the workload is $\norm{\bW} = 32$.

\rtwo{
\noindent \introparagraph{Extension to other features} Our techniques are not limited to presented features, but we simply choose these for demonstration purposes. 
Unlike the popular summarization scheme introduced by Aligon et al.~\cite{aligon2014similarity}, we do not restrict the features to relation names and columns in the \texttt{{\color{blue} WHERE}}, \texttt{{\color{blue}SELECT}}, or \texttt{\color{blue}FROM} clause. 
Thus, in principle, any feature can be used in our framework. 
However, in practice, the choice of features is limited by the hardware and available resources.
For example, if a \texttt{GPU} is used for some of the queries, it would be useful to add features such as \texttt{average\_memory\_bandwidth\_used}. 
One could also create a higher-order feature derived from two different features $f = f_1 \times f_2$ that captures the co-occurrence of $\langle t_1, t_2 \rangle$ where $t_1$ and $t_2$ are tokens of $f_1$ and $f_2$. 
Capturing cross-feature information may lead to improved summaries, but defining such features requires a principled approach to feature engineering (see Section~\ref{sec:limitations} for more details). 
It is also possible to encode features such as query plan fragments, indexes used during query evaluation, and physical execution operators using standard techniques of 1-hot encoding and transforming a query plan into a tree of vectors (see Sec 3.2 in~\cite{marcus2019neo}) for more details.}

\subsection{Metrics}
\label{sec:ps}

We next formalize the definitions of the coverage and representativity metrics that are subsequently used throughout this paper.

\smallskip
\introparagraph{Coverage} 
Given a feature $f \in \bF$, the coverage factor $\alpha_f$ is defined as the fraction of tokens covered by the compressed workload for feature $f$. 
To generalize this to a metric across all features, we can either compute the minimum or average $\alpha_f$. 
Formally:

\begin{definition}[Coverage]
Let $\bS$ be a summary of the workload $\bW$. The {\em coverage factor} for a feature $f \in \bF$ is defined as
$\alpha_f = \frac{\vert \domain(\bS, f) \vert}{ \vert \domain(\bW,f) \vert}$. The {\em minimum coverage factor} and 
{\em average coverage factor} are respectively defined as:
\begin{align*}
\alpha_{\min} = \min_{f \in \bF} \alpha_f, \quad \quad \alpha_{\text{avg}} = \sum_{f \in \bF} \alpha_f /\vert \bF \vert
\end{align*}
\end{definition}

Observe that both the minimum and average coverage values are always in $[0,1]$. A score of $1$ means that the coverage is perfect.

\smallskip
\introparagraph{Representativity} 
A representative summary of the workload must capture the structural properties of the original workload. 
Specifically, workload $\bW$ induces a discrete distribution $p_{\bW}(\cdot)$ over the tokens present in the features of the queries in the workload. 
In particular, for any token $t$, 
\begin{align*}
p_{\bW}(t) = \frac{\mult{\bW}{t}{f}} {\sum_{f} \sum_{v \in \domain(\bW, f)}  \mult{\bW}{v}{f}} 
\end{align*}
In other words, $p_{\bW}(t)$ denotes the probability of selecting token $t$ if we choose a token from $\bW$ uniformly at random. 
The summary $\bS$ will induce a distribution $p_{\bS}(\cdot)$; the representativity metric then measures the distance between $p_{\bS}$ and $p_{\bW}$. 
In general, we wish $p_{\bS}$ to be as close to some target distribution $\td(\cdot)$.
Note that the target distribution can be different from $p_{\bW}$ if wanted.

\begin{definition}[Representativity]
Given a target token distribution $\td$, the {\em representativity} w.r.t. $\td$ is defined using the following two metrics:
	\begin{align*}
		\rho_{1}(\td) &= 1 - \frac{1}{2}  \sum_f {\sum_{t \in \domain(\bW, f)} \vert p_{\bS}(t) - \td(t) \vert } \\
		\rho_{\infty}(d) &= 1 - \max_f \max_{t \in \domain(\bW, f)} \vert p_{\bS}(t) - \td(t) \vert 
	\end{align*}
\end{definition}

The $\rho_1$ metric essentially measures the total variation distance between the two distributions and is a popular distance metric for graph visualizations~\cite{macke2018adaptive}. $\rho_{\infty}$ metric captures the largest deviation in the distribution across all features.
If the representativity score is $1$, we say that the compressed workload is perfectly representative. Note that there are other possible definitions of representativity. We refer the reader to~\autoref{sec:a} for more discussion.

\smallskip
\introparagraph{User-specific modifications} 
If users have specific domain knowledge, they may want to use a $(i)$ weighted version of computing these metrics and/or, $(ii)$ target distribution for representativity.
Both of these modifications are supported in our framework.
For the former, we require that each feature is assigned a weight $w_f$ such that $\sum{w_f} = 1$.
For the latter, we define a general version of the metrics w.r.t. an arbitrary target distribution $\td$; the case where $\td$ is the same as the input distribution becomes a special case. \rtwo{This functionality is useful in applications such as test workload generation. Developers  frequently use queries to test their code while developing the functionality in RDBMS. However, instead of choosing from a set of predefined queries, it is more desirable to choose the test workload from a set of production queries, which increases more confidence in the testing of the functionality. Due to  lack of space, we refer the reader to the appendix for more details}.

\vspace{0.5em}
\introparagraph{Example}  
Consider the setup from our running example and let $\bS = \{Q_1, Q_3\}$. 
The normalizing factor for $\bW$ and $\bS$ is $14 + 18 = 32$ ($14$ categorical tokens and $18$ numeric feature tokens) and $11 + 12 = 23$ respectively. 
\Cref{tab:fcf} and \Cref{tab:trf} show the $p_{\bW}$ and $p_{\bS}$ distributions (target $\td$ is set to be the input distribution) for the function call and table reference features.

\smallskip
The reader can verify that $\rho_{1} = 1 - \frac{1}{2} \big\{ 18 (\frac{1}{23} - \frac{1}{32}) + \frac{14}{32} - \frac{5}{23}\big\} \approxeq 0.779$ and $\rho_{\infty} = 1 - \frac{1}{32} \approxeq 0.96$.

\smallskip
Similarly, Table~\ref{syn:coverage} shows $\alpha_f$ for each feature. For example, only two tokens of the function call feature are covered by $\bS$ and $\langle${\color{blue} \texttt{COUNT}}$\rangle$ is missed since $Q_2 \not \in \bS$. All numeric features have $\alpha_f = \frac{2}{3}$. Thus, the minimum coverage factor is $\alpha_{\min} = \frac{2}{3}$ and the average coverage factor is $\alpha_{\mathit{avg}} = \frac{23}{30}$.

\begin{table}
\centering
\scalebox{0.95}{
    \begin{subtable}{0.48\columnwidth}
    \centering
        \begin{tabular}{c|c|c}
			\toprule[0.1em]
			token & $p_\bW(t)$ & $p_\bS(t)$ \\ 
			\midrule[0.1em]
			\texttt{\color{blue}AVG} & $0.031$ & $0.043$ \\ 
			\midrule
			\texttt{\color{blue}MAX} & $0.062$ & $0.086$ \\ 
			\midrule
			\texttt{\color{blue}COUNT} & $0.031$ & $0$ \\			
			\bottomrule
		\end{tabular}
        \caption{Function call feature.}
        \label{tab:fcf}
    \end{subtable}}
    \hfill
    \scalebox{0.95}{
    \begin{subtable}{0.48\columnwidth}
    \centering
        \begin{tabular}{c|c|c}
		\toprule[0.1em]
		token & $p_\bW(t)$ & $p_\bS(t)$ \\ 
		\midrule[0.1em]
		\texttt{\color{red}$T_1$} & $0.093$ & $0.086$ \\ 
		\midrule	
		\texttt{\color{red}$T_2$} & $0.062$ & $0.043$ \\ 
		\midrule				
		\texttt{\color{red}$T_3$} & $0.031$ & $0.043$ \\
		\bottomrule
	\end{tabular}
        \caption{Table reference feature.}
        \label{tab:trf}
    \end{subtable}}
    \scalebox{0.95}{
    \begin{subtable}{\columnwidth}
    \centering
        \begin{tabular}{ c|l|l|l|l|l|l|l|l|l|l}
		\toprule[0.1em]
		cov & $f^c_1$ & $f^c_2$ & $f^c_3$ & $f^c_4$ & $f^n_1$ & $f^n_2$ & $f^n_3$ & $f^n_4$ & $f^n_5$ & $f^n_6$  \\ \midrule[0.1em]
		$\alpha_f$ & $\frac{2}{3}$ & $1$ & $1$ & $1$ & $\frac{2}{3}$ & $\frac{2}{3}$ & $\frac{2}{3}$ & $\frac{2}{3}$ & $\frac{2}{3}$ & $\frac{2}{3}$ \\				
		\bottomrule
	\end{tabular}%
	\caption{Feature coverage for workload $\bS$.} \label{syn:coverage}
    \end{subtable}}
    \caption{Representativity and coverage computation.}
\end{table}

\subsection{Problem Statement} \label{sec:statement}
Given an input workload $\bW = \{ q_1, \dots, q_n\}$ where each query $\bq$ in $\bW$ is associated with a non-negative cost $c(\bq)$ such as the \rtwo{size} of the summary workload.

Assuming a target distribution $\td(\cdot)$ over these tokens, a budget constraint $B \geq 0$, and a parameter $\beta \in [0,1]$, our goal is to construct a {\em summary workload}  $\bS \subseteq \bW$ such that:
\begin{compactitem}
\item the cost of the summary workload is less than the budget, $\sum_{\bq \in \bS} c(\bq) \leq B$; and 
\item the quantity $\beta \cdot \alpha+ (1- \beta) \cdot \rho(d)$ is maximized, where $\alpha \in \{\alpha_{min}, \alpha_{avg}\}$ and $\rho \in \{ \rho_1, \rho_\infty\}$.
\end{compactitem}

Here, the user-specified parameter $\beta$ controls the trade-off between the coverage and representativity metrics. 
If $\beta=0$, we optimize for representativity only, and if $\beta =1$, we optimize for coverage. 
The compression ratio, $\eta = 1 - c(\bS)/c(\bW)$, is the fraction of queries that have been pruned.
Observe that the larger the value of $\eta$, the smaller the compressed workload.

\section{Design Considerations}
\label{subsec:desiderata}
Several desiderata are important to consider when compressing a workload, and these form a rich design space.
We now describe these desiderata and their corresponding trade-offs.

\smallskip
 \introparagraph{High Coverage} 
 High coverage is desirable to ensure that long-tail feature values are part of the compressed workload. Ideally, we would like to maximize the coverage subject to certain budget constraints. Maximizing coverage is an NP-hard problem~\cite{nemhauser1978analysis, feige1998threshold} that can be efficiently approximated~\cite{nemhauser1978analysis}. 

\smallskip
 \introparagraph{High Representativity} 
 High representativity implies that the compressed workload must faithfully reproduce the target distribution, which can be either derived from the input workload's feature distribution or specified by the user.
 This is a key requirement for successful workload compression when used, for example, in the context of performance analysis in a database system.

\smallskip
 \introparagraph{Customizability} Representativity and coverage are competing objectives, which makes the task of maximizing both metrics simultaneously hard. 
 For instance, simple random sampling achieves high representativity but may miss long-tail queries. 
 On the other hand, set cover algorithms maximize coverage but will not pick queries whose features have already been covered. 
 It is therefore desirable that the user can control this trade-off smoothly, which we realize through the parameter $\beta$.

\smallskip
 \introparagraph{User-Specific Constraints} Users may want to specify constraints on some property of the compressed workload. 
 For instance, the user may want to limit the size, total execution time, or the representativity target distribution of the compressed workload. 
 The framework should be flexible enough to allow users to specify these constraints on-the-fly. 
 For simplicity, we restrict ourselves to two types of constraints:
$(i)$ specifying the desired feature distribution of the summary workload; and 
$(ii)$ non-negative modular constraints (i.e.,~a knapsack constraint) of the form $\sum_{\bq \in \bS} c(\bq) \leq B$, where $c(\cdot)$ can be any cost function.

\smallskip
 \introparagraph{Scalability} Efficient computation of the compressed workload is a key requirement for any framework to be deployed in practice. 
 Ideally, the compression algorithm must compute the compressed workload fast and scale effectively to large input workloads. 
 This will also allow the user to find the correct parameter settings for fine-tuning the representativity and coverage trade-off dynamically. 

\smallskip
 \introparagraph{Incremental Computation} Consider a user who wants to analyze how the workload is changing over time with respect to a set of features. 
For this case, we want to avoid computing the compressed workload from scratch every day; instead, it would be better to create a summary for each day, and then merge them. 
 In other words, we would like to construct {\em mergeable} compressed workloads.

	\section{Hardness Results}
\label{sec:hardness}

In this section, we show that our problem is computationally hard for any parameter $\beta \in [0,1]$, even for the simple case where the cost function is the same constant for every query, \ie,~$c(\bq)=1$. 
We note that~\cite{feige1998threshold} already shows that the problem is NP-complete when $\beta=1$ (\ie,~we want to maximize only coverage) for both coverage metrics $\alpha_{min}$ and $\alpha_{avg}$. The next theorem shows that the NP-hardness result extends for any choice of the parameter $\beta$.

\begin{theorem} \label{thm:hardness}
Let $\alpha \in \{\alpha_{min}, \alpha_{avg}\}$, $\rho \in \{ \rho_1, \rho_\infty\}$, $\beta \in [0,1]$, and $\td(\cdot)$ be a target distribution. 
Then, the problem of finding a summary $\bS \subseteq \bW$ such that $\vert \bS \vert \leq B$ and the quantity $\beta \cdot \alpha+ (1- \beta) \cdot \rho(d)$ is maximized is NP-complete. 
In particular, the problem remains NP-complete when $\td$ is the input distribution $p_{\bW}$, and there exists only one multi-valued feature.
\end{theorem}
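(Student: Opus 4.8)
The plan is to establish membership in NP and then prove NP-hardness by a reduction from \textsc{Subset Sum}; the known hardness of pure coverage maximization (cited from~\cite{feige1998threshold}) will cover the endpoint $\beta = 1$, so the new content is a single family of instances that is hard for every $\beta \in [0,1)$. Membership is routine: a summary $\bS$ is a polynomial-size certificate, and given $\bS$ one can compute $|\bS|$, the induced distribution $p_\bS$, the chosen $\rho \in \{\rho_1,\rho_\infty\}$, and the chosen $\alpha \in \{\alpha_{\min},\alpha_{\text{avg}}\}$ in polynomial time (all quantities are rationals), hence verify both the budget and that $\beta\cdot\alpha + (1-\beta)\cdot\rho(\td) \ge \tau$ for a given threshold $\tau$.

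For hardness I would use a single multi-valued feature $f$ whose active domain consists of exactly two tokens $t_0, t_1$, and take $\td = p_\bW$. The crucial observation is a clean algebraic characterization of \emph{perfect} representativity. Writing $x_\bq$ and $y_\bq$ for the multiplicities of $t_1$ and $t_0$ in a query $\bq$, and $X = \sum_{\bq\in\bW} x_\bq$, $Y = \sum_{\bq\in\bW} y_\bq$, one checks that for any nonempty $\bS$ both $\rho_1(\td)=1$ and $\rho_\infty(\td)=1$ hold iff $p_\bS = p_\bW$, i.e.\ iff the ratio of $t_1$- to $t_0$-tokens in $\bS$ equals the global ratio $X:Y$. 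Expanding this condition turns it into the single linear constraint $\sum_{\bq \in \bS} w_\bq = 0$, where $w_\bq = x_\bq\,Y - y_\bq\,X$; note that $\sum_{\bq\in\bW} w_\bq = XY - YX = 0$ identically, so whatever weights I realize must sum to zero. Thus driving representativity to its perfect value $1$ is equivalent to a \emph{zero-sum subset} question: is there a nonempty subset of queries whose weights sum to $0$?

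It then remains to realize an NP-hard zero-sum-subset instance with nonnegative integer token counts. Starting from a \textsc{Subset Sum} instance $(a_1,\dots,a_n; T)$ with $0<T<\Sigma := \sum_i a_i$, I would take target weights $a_1,\dots,a_n$ together with two gadget weights $-T$ and $-(\Sigma-T)$ (so that the weights sum to $0$), and realize a weight $w$ as a query with $x_\bq = K + w$ copies of $t_1$ and $y_\bq = 1$ copy of $t_0$, where $K$ exceeds all $|w|$ so every count is positive. This yields $N = n+2$ queries with $X = NK$, $Y = N$, and, for any nonempty $\bS$, the ratio condition reduces exactly to $\sum_{\bq\in\bS} w_\bq = 0$. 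A short case analysis shows a nonempty proper zero-sum subset must use exactly one gadget and a block of item queries summing to $T$ (or, symmetrically, to $\Sigma-T$), while using both gadgets forces all items and hence the full set; setting the budget to $B = N-1$ excludes that full set. Hence a feasible $\bS$ with $\rho(\td)=1$ exists iff the \textsc{Subset Sum} instance is a \textsc{yes}-instance. Finally, because every query carries both tokens, every nonempty $\bS$ has $\alpha_{\min}=\alpha_{\text{avg}}=1$, so for every $\beta\in[0,1)$ the objective equals $\beta + (1-\beta)\,\rho(\td)$ and attains the value $1$ iff a perfectly representative feasible summary exists; choosing $\tau=1$ finishes the reduction, since in \textsc{no}-instances no subset matches $p_\bW$ exactly and the objective is strictly below $1$.

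The main obstacle is the reduction itself, and within it two coupled design goals: (i) proving that $\rho(\td)=1$ is \emph{exactly} the linear zero-sum condition, i.e.\ converting an $\ell_1$/$\ell_\infty$ distance into a single arithmetic constraint; and (ii) engineering the token multiplicities so this constraint is satisfiable within budget precisely for \textsc{yes}-instances while ruling out the trivial solutions (the empty set, the full workload, and any unintended zero-sum subsets). The identity $\sum_\bq w_\bq = 0$ is exactly why the hard instance takes the target distribution equal to the input distribution, and it is what makes the two padding gadgets necessary. The endpoint $\beta = 1$ is not hard on this particular instance (coverage is constant there), so for it I would instead invoke the standard NP-hardness of maximum coverage, which is itself expressible with a single multi-valued feature; together these give NP-completeness for all $\beta\in[0,1]$.
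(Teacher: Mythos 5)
Your reduction is algebraically sound at the level of abstract feature-count vectors: the observation that, with a single two-token feature and $\td = p_{\bW}$, perfect representativity ($\rho_1 = 1$ or $\rho_\infty = 1$) collapses to the single linear condition $\sum_{\bq \in \bS} (x_\bq Y - y_\bq X) = 0$ is correct, and your gadget analysis (one gadget forces an item subset summing to $T$ or $\Sigma - T$; both gadgets force the full workload, which the budget $B = N-1$ excludes) is complete. The genuine gap is the complexity of the reduction itself. \textsc{Subset Sum} is only \emph{weakly} NP-hard: it is solvable in pseudo-polynomial time $O(n\Sigma)$, so hard instances necessarily have values $a_i$ that are super-polynomial in $n$ and must be given in binary. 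Your construction realizes the weight $a_i$ as a query containing $K + a_i$ occurrences of the token $t_1$. In this paper's model a token's multiplicity is the number of times it literally occurs in the parsed query (recall the running example, where \texttt{MAX} counts twice because it appears twice in the \texttt{SELECT} clause, and $\norm{\bq}$ is defined as the total token count), so merely writing down such a query takes $\Theta(K + a_i)$ space --- exponential in the bit-length of the \textsc{Subset Sum} input. Hence the map is not a polynomial-time reduction under the paper's encoding; it becomes one only if the problem input is re-defined as sparse feature vectors with binary-encoded counts, an assumption the theorem statement does not grant. Put differently, on instances whose total token count is polynomial in the number of queries, your zero-sum formulation is solvable by dynamic programming, so no NP-hardness for the problem as modeled follows from your argument.

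The paper sidesteps this pitfall by reducing from a \emph{strongly} structured source: the XSAT variant (all-positive clauses of size exactly $k \geq 3$, each variable in exactly $l \geq 3$ clauses). Each variable becomes a query, each clause a token, and $f(\bq_x)$ consists of the clauses containing $x$, so every multiplicity is $0$ or $1$ and the reduction is polynomial under any reasonable encoding. The logical skeleton is then the same as yours: $p_{\bW}$ is uniform ($1/m$ per token), the target is $p_{\bW}$, the budget is $B = m$; for $\beta < 1$ an objective value of $1$ forces $\rho_1 = 1$, and then an uncovered clause would give $p_{\bS}(t_c) = 0$ while a doubly-covered clause would give $p_{\bS}(t_c) \geq 2/m > 1/m$, both contradictions --- exactly your ``ratio matching'' idea, but enforced with unit multiplicities. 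To repair your proof you would need to replace \textsc{Subset Sum} with a strongly NP-hard source problem realized by $0/1$ multiplicities, which essentially lands you back at an exact-cover-style reduction like the paper's. (Your NP-membership argument is fine and is a detail the paper leaves implicit.)
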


Since the problem of maximizing the objective function with respect to a cost constraint is NP-hard, an approximation algorithm with theoretical guarantees is required to solve the problem. 
Indeed, if we restrict to optimize only for coverage (so $\beta=1$) and a single feature, there exists a greedy algorithm (by~\cite{feige1998threshold}) that achieves an $(1-1/e)$-approximation ratio. 
Next, we show that the problem is APX-hard for any choice of parameter $\beta$.
 \begin{theorem} \label{thm:hardness:apx}
	Let $\alpha \in \{\alpha_{min}, \alpha_{avg}\}$, $\rho \in \{ \rho_1, \rho_\infty\}$, $\beta \in [0,1]$, and $\td(\cdot)$ a target distribution.  Then,
	the problem of finding a summary $\bS \subseteq \bW$ such that $\vert \bS \vert \leq B$ and the quantity 
	$\beta \cdot \alpha+ (1- \beta) \cdot \rho(d)$ is maximized is APX-hard.
\end{theorem}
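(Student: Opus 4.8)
I would prove APX-hardness by a gap-preserving reduction from \textsc{Max-$k$-Cover}, which by Feige's theorem~\cite{feige1998threshold} is NP-hard to distinguish instances with optimum $N$ from those with optimum at most $(1-1/e+\epsilon)N$, and is therefore APX-hard. Given a \textsc{Max-$k$-Cover} instance (universe $U$ with $|U|=N$, sets $S_1,\dots,S_m$, budget $k$), I build a workload with a \emph{single} multi-valued feature $f$ whose token domain is $U$: query $\bq_i$ carries $f(\bq_i)=S_i$ (each element of $S_i$ as one token), every cost is $c(\bq_i)=1$, and the budget is $B=k$. With only one feature we have $\alpha_{\min}=\alpha_{\text{avg}}=\alpha_f=|\domain(\bS,f)|/N$, i.e. exactly the covered fraction, so the coverage part of the objective is literally the normalized \textsc{Max-$k$-Cover} value and the $\beta=1$ endpoint is immediately APX-hard.

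\textbf{Locking representativity to coverage.} The crux is to control $\rho$ for \emph{all} $\beta$ at once, including $\beta=0$. Taking the target $\td$ to be uniform on $U$ and writing $C=\domain(\bS,f)$ for the covered tokens, I would first establish the unconditional bound
\[
\rho_1(\td)=1-\tfrac12\sum_{t}\bigl|p_\bS(t)-\tfrac1N\bigr|\le 1-\td(U\setminus C)=\frac{|C|}{N}=\alpha_f,
\]
which holds because the uncovered tokens, together with the excess mass they force onto the covered tokens (the $p_\bS(t)$ must sum to $1$), contribute a total-variation deficit of at least $\td(U\setminus C)$. Conversely, I would invoke the \emph{structured} form of Feige's hard instances, in which a YES instance admits $k$ \emph{disjoint} sets that partition $U$: selecting them makes every token appear exactly once, so $p_\bS\equiv 1/N=\td$ and $\rho_1=\alpha_f=1$. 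Combining the two facts, in a YES instance the optimum of $\beta\alpha+(1-\beta)\rho_1$ is $1$, while in a NO instance every feasible $\bS$ has $\alpha_f\le 1-1/e+\epsilon$ and hence $\rho_1\le\alpha_f\le 1-1/e+\epsilon$, so the objective is at most $1-1/e+\epsilon$ for \emph{every} $\beta\in[0,1]$. This $\beta$-independent constant gap transfers Feige's inapproximability to all four $(\alpha,\rho_1)$ combinations simultaneously.

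\textbf{Main obstacle ($\rho_\infty$).} The step I expect to be delicate is the $\rho_\infty$ metric, because the lock $\rho_1\le\alpha_f$ has no $\rho_\infty$ analogue. Under a uniform target, a single uncovered token moves $\max_t|p_\bS(t)-\td(t)|$ by only $1/N$, and any feasible $\bS$ can keep its covered tokens within $O(1/N)$ of $1/N$, so $\rho_\infty$ stays within $O(1/N)$ of $1$ no matter how poor the coverage is; the gap collapses as $N\to\infty$ precisely at $\beta=0$. To repair this I would replace the uniform target by a \emph{concentrated} one — a constant number of ``heavy'' tokens each carrying $\Theta(1)$ target mass — together with a gadget that forces any summary missing the right queries to over- or under-represent a heavy token by a constant, so that both $\alpha$ and $1-\rho_\infty$ incur aligned $\Theta(1)$ gaps. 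Producing a single instance in which coverage and $\rho_\infty$ are driven to their extremes for \emph{every} $\beta$ (not merely at the endpoints $\beta=0$ and $\beta=1$), and confirming that Feige's YES instances genuinely supply the disjoint partition cover that pins $p_\bS$ to $\td$, are the two places where the argument needs the most care.
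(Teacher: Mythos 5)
Your reduction for the $\rho_1$ cases is correct, but it takes a genuinely different route from the paper's. The paper reduces from the restricted \textsf{MAX-3DM} problem (Petrank), using a single multi-valued feature whose tokens are the elements of $A \cup B \cup C$, uniform target $\td(t)=1/3p$, and budget $p$; the heart of its argument is a counting step: writing $n_0,n_1,n_2$ for the numbers of tokens covered zero times, exactly once, and at least twice, it bounds $\rho_1 \le 1 - (n_0+n_2)/6p$ and $\alpha_{min} \le 1 - n_0/6p$, deduces $n_0+n_2 \le 6p\delta/(1-\beta)$ from near-optimality of the summary, and converts the summary into a matching of size at least $(1-20\delta/(1-\beta))\cdot p$ using the bounded-occurrence property that each element lies in at most $3$ triples. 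Your route instead goes through Feige's max-$k$-cover gap with partition-system YES instances, and replaces the paper's counting step with the clean lock $\rho_1 \le \alpha_f$, which is valid for a uniform target because the uncovered mass, plus the excess mass it forces onto covered tokens, costs at least $2\td(U\setminus C)$ in total variation. Your argument buys a $\beta$-independent gap of $1$ versus $1-1/e+\epsilon$ (in fact tight $(1-1/e)$-type inapproximability), and it subsumes $\beta=1$; the paper's PTAS reduction has a constant that degrades as $\beta\to 1$ and handles $\beta=1$ separately, but in exchange it only needs the black-box APX-hardness of \textsf{MAX-3DM} rather than the structural disjoint-partition property of Feige's YES instances, which your proof must verify.

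On $\rho_\infty$ you have correctly identified a real difficulty, and it is worth noting that the paper does not resolve it either: its proof treats $\rho_1$ explicitly and closes by asserting that the $\rho_\infty$ case is ``not hard to show.'' In fact your diagnosis applies verbatim to the paper's own construction: there, every summary of $p$ queries has all token frequencies $c_t \le 3$ and $\norm{\bS}=3p$, so every deviation $\vert p_{\bS}(t)-\td(t)\vert$ is $O(1/p)$ and $\rho_\infty \ge 1 - O(1/p)$ for \emph{every} feasible summary, which means the YES/NO gap at $\beta=0$ vanishes as $p$ grows. For any fixed $\beta>0$ both reductions still yield a constant gap through the coverage term, but at $\beta=0$ the $\rho_\infty$ claim genuinely requires a different construction, such as the concentrated-target gadget you sketch; that piece remains unproven both in your proposal and in the paper.
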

%
%
%
%
%
The problem gets even more complex if representativity is taken into account.
As the next lemma shows, neither $\rho_1$ or $\rho_\infty$ metrics satisfy desirable properties from an optimization perspective.

\begin{lemma}
The $\rho_{1}$ and $\rho_\infty$ metrics are not monotone or submodular.
\end{lemma}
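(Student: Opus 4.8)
The plan is to refute both properties by one minimal counterexample rather than by abstract reasoning, since both failures spring from the same structural fact: representativity rewards \emph{matching} a fixed target $\td$, so adding queries shifts $p_\bS$ and may move it either toward or away from $\td$. I would work in the simplest regime, echoing the single-multi-valued-feature case of \Cref{thm:hardness}: one feature $f$ whose active domain is two tokens $a$ and $b$, with the uniform target $\td(a)=\td(b)=1/2$. The key simplification is that with exactly two tokens and a symmetric target one always has $|p_\bS(a)-1/2| = |p_\bS(b)-1/2|$, so $\rho_1(\bS)=\rho_\infty(\bS)$ on every subset of this instance; hence one construction simultaneously settles both metrics, and I may write $\rho$ for either.

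For non-monotonicity, I would take $A$ to consist of one query carrying token $a$ and one carrying token $b$, so $p_A$ is uniform and $\rho(A)=1$. Adding a third query carrying $a$ yields a superset $A'$ with $p_{A'}(a)=2/3$, giving $\rho(A')=5/6$. Since $A \subsetneq A'$ but representativity strictly decreases, neither metric is monotone.

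For non-submodularity, I would keep the balanced $A$ above and set $B = A \cup \{q_4,q_5\}$, where $q_4$ and $q_5$ each carry token $b$, so $B$ is skewed toward $b$ with $\rho(B)=3/4$. Let $x$ be a query carrying token $a$. Adding $x$ to the already-balanced $A$ overshoots and gives marginal gain $\rho(A\cup\{x\})-\rho(A)=5/6-1=-1/6$, whereas adding the same $x$ to the $b$-skewed $B$ partially rebalances it and gives $\rho(B\cup\{x\})-\rho(B)=9/10-3/4=3/20$. Because $A\subseteq B$, $x\notin B$, yet the marginal gain at $A$ is strictly smaller than at $B$, the diminishing-returns inequality $\rho(A\cup\{x\})-\rho(A)\ge\rho(B\cup\{x\})-\rho(B)$ fails; so neither metric is submodular.

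The only real subtlety — and the step I would verify carefully — is confirming that the two-token gadget is legitimate inside the full definition, where the sum in $\rho_1$ and the maximum in $\rho_\infty$ range over \emph{all} features and all tokens of $\domain(\bW,f)$. I would pad any additional features or tokens with queries common to all four sets $A, A', B, B\cup\{x\}$, so that their contribution is identical across the sets and cancels in every difference computed above, leaving the two-token behavior to dominate. The conceptual takeaway, which makes the counterexamples unsurprising, is that the obstruction is intrinsic: matching a distribution is non-monotone because extending a perfect match can only hurt, and non-submodular because whether an added query helps depends on the current imbalance of $p_\bS$, which need not shrink as $\bS$ grows.
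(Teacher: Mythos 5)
Your proposal is correct and takes essentially the same approach as the paper: the paper's own proof is likewise a minimal counterexample over a single feature with a two-token domain and uniform target, differing only in that it verifies non-submodularity via the equivalent form $\rho(A)+\rho(B) \geq \rho(A \cup B) + \rho(A \cap B)$ (using a query that carries both tokens) rather than your marginal-gain formulation. Your closing worry about padding extra features is unnecessary --- the workload may simply be defined to have exactly one feature, as the paper does --- and is in fact the one shaky step of your write-up, since contributions of padded features would not literally cancel in differences: the normalization of $p_{\bS}$ couples all features through a common denominator.
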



	\section{Problem Solution}
\label{sec:main}

There are two often-applied methods to solve the summarization problem: {\em clustering} and {\em random sampling}.
While clustering-based methods (such as k-medoids\footnote{K-medoids is an iterative greedy algorithm that chooses $k$ cluster centers, assigns all points to the closest center  and iteratively refines the points in each cluster.} and hierarchical clustering\footnote{Hierarchical clustering is a top-down approach where all points start in a single cluster and the algorithm recursively splits the points into $k$ disjoint clusters.}) identify the patterns in the workload, they suffer from the following drawbacks: $(i)$ $O(n^2)$ time complexity, $(ii)$ sensitivity to the distance function and $(iii)$ the number of clusters $k$ is required as an input. 
The best value for $k$ is not known a priori. 
To address this problem, one commonly used idea is to run the clustering algorithm several times, where the cluster size is doubled in every iteration. 
However, this may be far from optimal because of the sensitivity of the metrics to the size constraint. To address the drawbacks of clustering and random sampling, we present a new approach to summarization.
We define a new objective function (Section~\ref{subsec:obj}) that can be parametrized to control the trade-off between coverage and representativity followed by efficient algorithm (Section~\ref{sec:greedy}). 

\subsection{A New Objective Function} \label{subsec:obj}

Instead of using the initial objective of the summarization problem, we replace it with the following objective, where 
$\gamma \in (0,1]$ is a smoothing parameter that controls the trade-off between representativity and coverage:
\begin{align} \label{eq:objective}
G(\bS, \gamma) = \sum_f \sum_{t \in \domain(\bW, f)} \td(t) \cdot \log  \left( \dfrac{\mult{\bS}{t}{f} +\gamma} { \gamma} \right)
\end{align}

Before we explain the intuition of choosing this objective, we show that it satisfies several properties that make it amenable to optimization. 
In particular, $G(\bS, \gamma)$ is a non-negative, monotone and submodular set function. 

\begin{proposition}
For any value $\gamma \in (0,1]$,  the set function $G(\bS, \gamma)$ is non-negative, monotone and submodular.
\end{proposition}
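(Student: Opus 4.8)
The plan is to reduce all three properties to elementary facts about the individual token summands and then invoke closure of the relevant function classes under non-negative linear combinations. Write
$G(\bS,\gamma) = \sum_f \sum_{t \in \domain(\bW,f)} \td(t)\, g_{t,f}(\bS)$, where $g_{t,f}(\bS) = \log\big((\mult{\bS}{t}{f}+\gamma)/\gamma\big)$. Since $\td(t) \geq 0$ for every token, and since non-negativity, monotonicity, and submodularity are all preserved under non-negative weighted sums, it suffices to prove that each $g_{t,f}$ is non-negative, monotone, and submodular. Non-negativity is immediate: $\mult{\bS}{t}{f} \geq 0$ and $\gamma > 0$ force the argument of the logarithm to be at least $1$, so $g_{t,f}(\bS) \geq 0$.

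The key structural observation is that the token frequency is a \emph{modular} (additive) set function of $\bS$. For a query $\bq$, let $m_{\bq}(t,f)$ denote the multiplicity of token $t$ in the multiset $f(\bq)$; then $\mult{\bS}{t}{f} = \sum_{\bq \in \bS} m_{\bq}(t,f)$ with every $m_{\bq}(t,f) \geq 0$. In particular, for any $\bS$ and any $\bq \notin \bS$, adding $\bq$ increases the frequency by exactly $m_{\bq}(t,f)$, a quantity that does not depend on $\bS$: $\mult{\bS \cup \{\bq\}}{t}{f} = \mult{\bS}{t}{f} + m_{\bq}(t,f)$. This lets me write $g_{t,f} = \phi \circ \mult{\cdot}{t}{f}$ with $\phi(x) = \log((x+\gamma)/\gamma)$ on $[0,\infty)$.

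Monotonicity then follows because $\phi$ is non-decreasing and $\mult{\cdot}{t}{f}$ is monotone in $\bS$. For submodularity I would verify the diminishing-returns inequality directly: for $\bS \subseteq \bS'$ and $\bq \notin \bS'$, the two marginal gains are $\phi(a + \delta) - \phi(a)$ and $\phi(a' + \delta) - \phi(a')$, where $a = \mult{\bS}{t}{f} \leq a' = \mult{\bS'}{t}{f}$ and, crucially, the \emph{same} increment $\delta = m_{\bq}(t,f) \geq 0$ appears on both sides by additivity. Since $\phi$ is concave (its second derivative $-1/(x+\gamma)^2$ is negative), the map $a \mapsto \phi(a+\delta) - \phi(a)$ is non-increasing, so the gain at $\bS$ is at least the gain at $\bS'$, which is exactly submodularity of $g_{t,f}$.

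The only place with any real content is this composition step, i.e.,~the general fact that a concave non-decreasing $\phi$ applied to a non-negative modular function yields a monotone submodular function. The subtlety to get right is that the diminishing returns of $\phi$ must be compared against the \emph{same} per-query increment $\delta$ on both sides, and this is precisely what the additivity of $\mult{\cdot}{t}{f}$ supplies; without it, the increment could differ between $\bS$ and $\bS'$ and the argument would break. Everything else—non-negativity and the closure of the monotone, submodular, and non-negative function classes under the $\td(t)$-weighted sum—is routine.
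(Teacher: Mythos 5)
Your proof is correct and complete. The paper in fact states this proposition without any proof (none appears in the body or the appendix), so there is no argument of the paper's to compare against; your route---writing $G$ as a $\td(t)$-weighted sum of terms $\phi\bigl(\mult{\bS}{t}{f}\bigr)$ with $\phi(x)=\log\bigl((x+\gamma)/\gamma\bigr)$, observing that $\mult{\cdot}{t}{f}$ is a non-negative modular function of $\bS$, and invoking the standard fact that a non-decreasing concave function composed with a non-negative modular function is non-negative (since $\phi(0)=0$), monotone, and submodular---is precisely the standard argument for claims of this form, and you correctly isolate the one step with real content, namely that additivity makes the per-query increment $m_{\bq}(t,f)$ identical at $\bS$ and $\bS'$, so that concavity of $\phi$ yields the diminishing-returns inequality.
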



\smallskip
\introparagraph{Analysis of the Objective}
To understand the intuition behind the choice of the objective function, we first discuss how  $G(\bS, \gamma)$ behaves for very small values of $\gamma$.

\begin{lemma} \label{lem:sum:comparison}
Let $\bS_1, \bS_2$ be two summaries of a workload $\bW$ such that
$ \bigcup_f \domain(\bS_1, f)  \subsetneq  \bigcup_{f} \domain(\bS_2, f)$.
Then, for $\gamma \rightarrow 0$ we have $G(\bS_1, \gamma) < G(\bS_2, \gamma)$.
\end{lemma}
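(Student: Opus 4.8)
The plan is to analyze the asymptotic behaviour of $G(\bS,\gamma)$ as $\gamma \to 0$ and to show that its leading term is governed purely by which tokens are \emph{covered} by $\bS$. First I would split the inner sum of \eqref{eq:objective} according to whether a token is covered. For a pair $(f,t)$ with $t \in \domain(\bS,f)$ we have $\mult{\bS}{t}{f} \geq 1$, so $\log\!\big((\mult{\bS}{t}{f}+\gamma)/\gamma\big) = -\log\gamma + \log \mult{\bS}{t}{f} + O(\gamma)$; for an uncovered token $\mult{\bS}{t}{f}=0$ and the term is exactly $\log(\gamma/\gamma)=0$. Hence only covered tokens contribute, and I can write
\[
G(\bS,\gamma) = (-\log\gamma)\,\Phi(\bS) + \Psi(\bS) + O(\gamma),
\]
where $\Phi(\bS) = \sum_f \sum_{t \in \domain(\bS,f)} \td(t)$ is the total target weight of the covered tokens and $\Psi(\bS) = \sum_f \sum_{t \in \domain(\bS,f)} \td(t)\,\log \mult{\bS}{t}{f}$ is a finite quantity independent of $\gamma$.

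Next I would subtract the two expansions:
\[
G(\bS_2,\gamma) - G(\bS_1,\gamma) = (-\log\gamma)\big(\Phi(\bS_2) - \Phi(\bS_1)\big) + \big(\Psi(\bS_2) - \Psi(\bS_1)\big) + O(\gamma).
\]
Since $-\log\gamma \to +\infty$ while $\Psi(\bS_2)-\Psi(\bS_1)$ stays bounded, the sign of this expression for small $\gamma$ is determined by the sign of $\Phi(\bS_2) - \Phi(\bS_1)$. The crux therefore reduces to showing that the hypothesis $\bigcup_f \domain(\bS_1,f) \subsetneq \bigcup_f \domain(\bS_2,f)$ forces $\Phi(\bS_2) > \Phi(\bS_1)$ strictly.

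To establish this I would read the covered set as a set of (feature, token) pairs, so that the strict inclusion says every pair covered by $\bS_1$ is also covered by $\bS_2$ while at least one pair $(f^\star,t^\star)$ is covered by $\bS_2$ but not by $\bS_1$. Then $\Phi(\bS_2) \geq \Phi(\bS_1) + \td(t^\star)$, and since the target distribution assigns positive mass to every token that actually occurs in $\bW$ — in particular when $\td = p_{\bW}$ — we get $\td(t^\star) > 0$ and hence $\Phi(\bS_2) > \Phi(\bS_1)$. Substituting back, $G(\bS_2,\gamma) - G(\bS_1,\gamma) \to +\infty$, so in particular it is positive once $\gamma$ is small enough, which is exactly the claim $G(\bS_1,\gamma) < G(\bS_2,\gamma)$ as $\gamma \to 0$.

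The step I expect to be the main obstacle is this last one: pinning down the precise meaning of the covered-set inclusion and ensuring that the extra covered token carries strictly positive target weight. If $\td$ were allowed to vanish on an occurring token, a summary could gain coverage without increasing $\Phi$, the two expansions would agree to leading order, and one would be forced into the delicate comparison of the bounded terms $\Psi$. I would therefore make explicit the standing assumption that $\td(t) > 0$ for every $t \in \domain(\bW,f)$ (which holds automatically for the input distribution), and I would verify that the $O(\gamma)$ remainders are uniform enough that they cannot overturn a strictly positive leading coefficient.
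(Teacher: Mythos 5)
Your proof is correct and takes essentially the same route as the paper: both arguments isolate the $\log(1/\gamma)$ divergence contributed by tokens covered by $\bS_2$ but not by $\bS_1$ and show that the remaining (bounded) terms cannot overturn it --- the paper packages this as the single lower bound $G(\bS_2,\gamma)-G(\bS_1,\gamma) \geq q_{min}\cdot\log\frac{1+\gamma}{\gamma} - \log\left(\norm{\bW}+\gamma\right)$, which is exactly your leading-coefficient comparison written as an inequality chain rather than an asymptotic expansion. Your explicit flagging of the standing assumption $\td(t)>0$ is the same condition the paper uses implicitly when it takes $q_{min}$, the smallest target probability, to be positive.
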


Lemma~\ref{lem:sum:comparison} tells us that when the parameter $\gamma$ tends to zero, a summary that covers strictly more tokens will always have a better value for the objective $G$, independent of the representativity of each summary. 
This implies that if there exists a summary $\bS$ within the budget $B$ that covers all tokens of $\bW$, then an optimal solution for $G$ will always cover all tokens as well.
Now, let us consider a summary $\bS$ that achieves perfect coverage. We can then write:
\begin{align*}
& \lim_{\gamma \rightarrow 0} \{ G(\bS,\gamma) + \log \gamma \} = 
\sum_f \sum_{t \in \domain(\bW, f)} \td(t) \cdot \log \mult{\bS}{t}{f} \\
&= - \sum_f \sum_{t \in \domain(\bW, f)} \td(t) \cdot \log \dfrac{\td(t)}{p_{\bS}(t)} - H(\td) + \log \norm{\bS} \\
&= -KL(\td \lVert p_{\bS}) - H(\td) + \log \norm{\bS}
\end{align*}
where $KL(d \lVert p)$ is the Kullback-Leibler (KL) divergence, a metric that captures the difference of the two distributions:
\begin{align*}
	KL(\td \lVert p) = \sum_{x \in \Omega} \td(x) \ln \frac{\td(x)}{p(x)}
\end{align*} 

Thus, when $\gamma \rightarrow 0$, among all summaries with the same size and perfect coverage, the objective prefers the one that minimizes the KL divergence between the target distribution and the summary. 
We should note here that KL divergence is related to the total variation distance by the well-known Pinkser's inequality: $TV(\td, p) \leq \sqrt{\frac{1}{2} KL(\td \lVert p)}$. 
If the summaries do not have the same size, then the summary size will also influence the objective.


As $\gamma$ increases from 0 to 1, the penalty for not covering a token decreases. 
Hence, an optimal solution will focus less on maximizing coverage and more on maximizing representativity. 
For larger values of $\gamma$, the objective function will choose the summary that minimizes the KL divergence between the target distribution $\td$ and the `smoothed' summary distribution where the probability of a token is proportional to $\mult{\bS}{t}{f} + \gamma $ instead of the frequency $\mult{\bS}{t}{f}$. 
Intuitively, one can think of the case of $\gamma =1$ as if each token already starts with a count of 1 as the summary is constructed.

\subsection{A Greedy Algorithm} \label{sec:greedy} 
We now present an algorithm that solves our optimization problem which can be formally stated as follows:
\begin{align*}
 \textbf{maximize} & \quad G(\bS,\gamma) \\
  \textbf{subject to } & \quad \sum_{\bq \in \bS} c(\bq) \leq B, \quad \bS \subseteq \bW
\end{align*}

We solve the above optimization problem greedily in Algorithm~\ref{algo:baseline}.
The algorithm starts with an empty summary $\bS_0 = \emptyset$. 
At the $i$-th iteration of the main loop, it adds the query from the workload that maximizes the {\em normalized marginal gain} $\Delta(\bq \mid \bS_{i-1})$ to the current summary $\bS_{i-1}$.
The normalized marginal gain is defined as 
$$ \Delta(\bq \mid \bS) =  \frac{G(\bS \cup \{ \bq \}, \gamma) - G(\bS, \gamma)}{c(\bq)}.$$
In other words, the algorithm greedily chooses the query with the best gain per unit of cost.
To increase the efficiency of the algorithm, we apply a common optimization~\cite{leskovec2007cost} which skips the computation of the normalized gain $ \Delta(\bq \mid \bS_{i-1})$ of a query $\bq$ at round $i-1$ if we know that the gain can not be larger than the query with highest gain so far (line~\ref{line:skip}). 
This optimization works because submodularity tells us that the gain can only decrease as the size of the summary grows (hence, values of $\Delta(\bq \mid \bS_{k})$ for $k < i-1$ are an upper bound to the gain).
Experiments in the appendix show that  this lazy strategy can substantially speed up execution.
Algorithm~\ref{algo:baseline} considers additionally the best single element solution, and chooses the best of the two (line~\ref{line:bestof2}). 
Since $G$ is a monotone, non-negative and submodular function, it can be shown that Algorithm~\ref{algo:baseline} achieves an $1/2(1 - 1/e)$ approximation guarantee~\cite{leskovec2007cost, krause2005note}.

\begin{algorithm}[htp]
	\DontPrintSemicolon
	\LinesNumbered
	\SetNoFillComment
	\SetKwInOut{Input}{\textsc{input}}\SetKwInOut{Output}{\textsc{output}}
	\Input{input workload $\bW$, cost function $c$, budget $B$, parameter $\gamma \in (0,1]$}
	\Output{summary workload $\bS$}
	$\bS \leftarrow \emptyset$ \;
	$\forall \bq \in \bW : \Delta(\bq) \leftarrow 0$ \;
	\While{$\bW \neq \emptyset$}{
		$\Delta^* \leftarrow -1$ \;
		\ForEach{$\bq \in \bW$}{
		\If{$\Delta(\bq) > \Delta^*$ \label{line:skip} }{
		$\Delta(\bq) \leftarrow  \frac{G(\bS \cup \{\bq\}, \gamma) - G(\bS,\gamma)}{c(\bq)}$ \;}
		\If{$\Delta(\bq) > \Delta^*$}{
			$\Delta^*\leftarrow \Delta(\bq)$ \; 
			$\bq^* \leftarrow \bq$ \;
			}
		}
		\If{$c(\bS) + c(\bq^*) \leq B$}{$\bS \leftarrow \bS \cup \{\bq^*\}$ \;}
		$\bW \leftarrow \bW \setminus \{ \bq^* \}$ \;
	}
	$\bS' \leftarrow \mathit{argmax}_{\bq \in \bW}\{G(\{\bq\},\gamma) \mid c(\bq) \leq B \}$ \;
	\KwRet{$\mathit{argmax}_{\bS,\bS'} \{G(\bS,\gamma), G(\bS',\gamma) \}$} \label{line:bestof2}
	\caption{Greedy algorithm}
	\label{algo:baseline}
\end{algorithm}

\smallskip
\introparagraph{Runtime Analysis} The runtime cost of the algorithm is dominated by the cost of the main loop. During each iteration, the algorithm needs to compute the normalized marginal gain for each of the $n$ queries (in the worst case). Since the feature vector is sparse, each iteration of the main loop can be performed in $O(n)$ time. The number of iterations can be as large as $n$, resulting in a worst-case runtime of $O(n^2)$. 
However, we can obtain better bounds depending on the budget constraint $B$ and the cost function $c(\cdot)$. 
For example, if $c(\cdot)$ is the unit cost function, then the number of iterations can be at most $B$, and the runtime becomes $(n \cdot B)$. 
In general, if $c_{min}$ is the smallest possible cost of the query, then the number of iterations is upper bounded by $B/c_{min}$. If we want to optimize for our original score function $\alpha \beta + (1-\beta) \rho(q)$, observe that the summary we obtain at the end of the algorithm may not be the best one. We can slightly modify Algorithm~\ref{algo:baseline} by recording the best score and the corresponding set that achieves it at every iteration without any impact on the total running time. 


\section{Parallelization and Incremental Computation} \label{sec:parallel}
Our algorithm is inherently parallelizable and well suited for incremental computation.

\smallskip
\introparagraph{Parallelization} Consider the problem where our cost function is $c(\bq)=1$, and the budget is $B$.
In order to parallelize Algorithm~\ref{algo:baseline}, we partition the input workload into $B$ machines and run the algorithm on each machine. 
Each run results in $B$ different summary workloads, one from each machine. 
We then merge each machine's summary workload, which will act as the new input workload to generate the final summary. 
Observe that the first step of generating $B$ different summary workloads takes $O(\frac{n}{B} \cdot B) = O(n)$ time in parallel, while the second step of merging requires time $O(B^2 \cdot B) = O(B^3)$. 
Hence, we obtain a faster runtime if $B^3 \leq n \cdot B$, i.e $B \leq \sqrt{n}$.
For values of $B \geq \sqrt{n}$, there exist algorithms that allow for parallelization with slightly worse approximation guarantees.
We refer the reader to~\cite{badanidiyuru2014streaming, mirzasoleiman2016fast, mirzasoleiman2015lazier, mirzasoleiman2013distributed} for a more detailed discussion on parallelizing submodular maximization problems. 

\smallskip \introparagraph{Incremental Computation} 
Supporting incremental computation of a summary is critical in the case where the workload that needs to be summarized is not provided at once, but instead constantly grows. Since we want to perform summarization {\em across} multiple workloads over time, we need to normalize for numeric features in a consistent way, \ie,~the maximum and minimum values used to normalize need to be fixed a priori.  In order to do so, we fix the largest and smallest value for all numeric features explicitly. 
Although this assumption may feel restrictive, in our experience, setting the maximum and minimum values for a feature by looking at historical workloads works very well. 
For instance, it is safe to assume that the number of joins in a query will be smaller than $1000$ in ad-hoc workloads. 
Suppose now that we have computed a summary $\bS$ of $\bW$, and a new set of queries $\bW'$ is added (with the same feature set) to the current workload. 
Instead of computing directly the summary of $\bW \boldsymbol{\Large \biguplus} \bW'$, we can compute a summary $\bS'$ of $\bW'$, and merge the two summaries to obtain 
$\bS \boldsymbol{\Large \biguplus} \bS'$. 
The next two lemmas describe how the merged summary behaves:

\begin{lemma} \label{lem:merge}  
Let $\bS$ be a summary for $\bW$ with $\alpha_{min} = \alpha$ and $\rho_\infty(p_{\bW}) = \rho$.
Also, let $\bS'$ be a summary for $\bW'$ with $\alpha_{min} = \alpha'$ and $\rho_\infty(p_{\bW'}) = \rho'$.
Then, $\bS \boldsymbol{\Large \biguplus} \bS'$ is a summary of $\bW \boldsymbol{\Large \biguplus} \bW'$ 
 that has $\alpha_{min} \geq  \min\{\alpha, \alpha'\} / 2$, $\rho_\infty \geq \min\{\rho, \rho'\}$ and cost $c(\bS) + c(\bS')$. 
\end{lemma}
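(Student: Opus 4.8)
The plan is to verify the three asserted properties of $\bS \uplus \bS'$ separately, since they rest on different pieces of machinery, and in each case to track how active domains and token frequencies behave under multiset union. The cost claim is immediate from modularity of $c(\cdot)$: every query of $\bS \uplus \bS'$ is a query of $\bS$ or of $\bS'$ with its multiplicity preserved, so $c(\bS \uplus \bS') = \sum_{\bq \in \bS} c(\bq) + \sum_{\bq \in \bS'} c(\bq) = c(\bS) + c(\bS')$. The real work is the coverage and representativity bounds.

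For coverage I would fix an arbitrary feature $f$ and use $\domain(\bS \uplus \bS', f) = \domain(\bS, f) \cup \domain(\bS', f)$ together with the analogous identity for $\bW \uplus \bW'$. Setting $\mu = \min\{\alpha, \alpha'\}$, the hypotheses give $|\domain(\bS, f)| \ge \mu\,|\domain(\bW, f)|$ and $|\domain(\bS', f)| \ge \mu\,|\domain(\bW', f)|$. I would lower-bound the numerator union by $\max\{|\domain(\bS, f)|, |\domain(\bS', f)|\} \ge \tfrac{1}{2}(|\domain(\bS, f)| + |\domain(\bS', f)|)$, and upper-bound the denominator union by the sum $|\domain(\bW, f)| + |\domain(\bW', f)|$. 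Combining, the merged coverage factor for $f$ is at least $\tfrac{\mu}{2}\,(|\domain(\bW, f)| + |\domain(\bW', f)|) / (|\domain(\bW, f)| + |\domain(\bW', f)|) = \mu/2$. As $f$ was arbitrary, minimizing over features gives $\alpha_{min} \ge \min\{\alpha, \alpha'\}/2$; the factor of two is exactly the price of pairing a $\max$ lower bound on the numerator with a sum upper bound on the denominator.

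For representativity the key observation is that both the summary distribution and the target distribution combine as \emph{mediants}. Since multiset union adds frequencies and sizes, for any token $t$ of feature $f$ we have $p_{\bS \uplus \bS'}(t) = \tfrac{\norm{\bS}}{\norm{\bS}+\norm{\bS'}}\, p_{\bS}(t) + \tfrac{\norm{\bS'}}{\norm{\bS}+\norm{\bS'}}\, p_{\bS'}(t)$, a convex combination of $p_{\bS}(t)$ and $p_{\bS'}(t)$, and $p_{\bW \uplus \bW'}(t)$ is the analogous convex combination of $p_{\bW}(t)$ and $p_{\bW'}(t)$. I would then estimate $|p_{\bS \uplus \bS'}(t) - p_{\bW \uplus \bW'}(t)|$ using the per-summary deviation hypotheses $|p_{\bS}(t) - p_{\bW}(t)| \le 1 - \rho$ and $|p_{\bS'}(t) - p_{\bW'}(t)| \le 1 - \rho'$. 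In the regime where the two convex combinations share a common mixing weight (equivalently $\norm{\bS}/\norm{\bS'} = \norm{\bW}/\norm{\bW'}$), a single triangle inequality yields $|p_{\bS \uplus \bS'}(t) - p_{\bW \uplus \bW'}(t)| \le \max\{1-\rho, 1-\rho'\} = 1 - \min\{\rho, \rho'\}$, and maximizing over all tokens and features gives $\rho_\infty \ge \min\{\rho, \rho'\}$.

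The step I expect to be the main obstacle is precisely this weight-matching. In general the summary weight $\norm{\bS}/(\norm{\bS}+\norm{\bS'})$ differs from the target weight $\norm{\bW}/(\norm{\bW}+\norm{\bW'})$, and decomposing the difference as $\lambda_S\,(p_{\bS}(t)-p_{\bW}(t)) + (1-\lambda_S)\,(p_{\bS'}(t)-p_{\bW'}(t)) + (\lambda_S - \lambda_W)\,(p_{\bW}(t)-p_{\bW'}(t))$ leaves the residual term $(\lambda_S - \lambda_W)(p_{\bW}(t)-p_{\bW'}(t))$, governed by the disagreement $\max_t |p_{\bW}(t) - p_{\bW'}(t)|$ between the two per-workload targets. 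I therefore anticipate that the clean $\min\{\rho, \rho'\}$ bound needs an auxiliary hypothesis that controls this residual — either that the two target distributions agree (so the term vanishes) or that the summaries are produced with comparable compression, forcing $\lambda_S \approx \lambda_W$. Identifying the exact condition under which the stated bound holds, as opposed to a weaker one degraded by $\max_t|p_{\bW}(t)-p_{\bW'}(t)|$, is the crux of the representativity half of the proof.
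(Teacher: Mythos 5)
Your cost and coverage arguments are correct, and for these two parts you are essentially aligned with the paper. For coverage the paper fixes a feature $f$, assumes WLOG $\vert\domain(\bW,f)\vert \geq \vert\domain(\bW',f)\vert$, bounds the denominator union by $2\vert\domain(\bW,f)\vert$, and keeps only $\vert\domain(\bS,f)\vert$ in the numerator; your symmetric variant (union $\geq$ max $\geq$ average of the two summary domains, denominator $\leq$ sum of the two workload domains) buys the same factor of $2$ without the case split. Either way $\alpha_{min}\geq \min\{\alpha,\alpha'\}/2$ follows, and the cost claim is additivity in both treatments.

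On representativity, your hesitation is the correct instinct, and the comparison here actually favors you. The paper does not perform your convex-combination decomposition; it sandwiches with the mediant inequality, $p_{\bW\uplus\bW'}(t) \leq \max\{p_{\bW}(t),p_{\bW'}(t)\}$ and $p_{\bS\uplus\bS'}(t)\geq\min\{p_{\bS}(t),p_{\bS'}(t)\}$, and then asserts that $\max\{p_{\bW}(t),p_{\bW'}(t)\}-\min\{p_{\bS}(t),p_{\bS'}(t)\}\leq 1-\min\{\rho,\rho'\}$. That last step is justified only when the max and the min are attained by a matching workload/summary pair ($\bW$ with $\bS$, or $\bW'$ with $\bS'$); in the cross case it compares, say, $p_{\bW}(t)$ against $p_{\bS'}(t)$, which the hypotheses do not relate --- this is exactly the residual term $(\lambda_S-\lambda_W)\bigl(p_{\bW}(t)-p_{\bW'}(t)\bigr)$ you isolated. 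Moreover, the residual is not removable by a cleverer argument: take $\bW$ to be ten copies of a single query and $\bS$ one copy of it (so $p_{\bS}=p_{\bW}$ and $\rho=1$), and take $\bW'=\bS'$ to be a workload whose token distribution differs from $p_{\bW}$ (so $\rho'=1$); the merged summary mixes $p_{\bS}$ and $p_{\bS'}$ with weight $\norm{\bS}/(\norm{\bS}+\norm{\bS'})$, which differs from the workload-side weight $\norm{\bW}/(\norm{\bW}+\norm{\bW'})$, and its $\rho_\infty$ falls strictly below $\min\{\rho,\rho'\}=1$. So the clean bound needs precisely the auxiliary hypothesis you anticipated --- matched token-level compression ratios $\norm{\bS}/\norm{\bW}=\norm{\bS'}/\norm{\bW'}$ (or identical per-workload target distributions) --- under which your convex-combination proof is complete. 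In short: you did not miss the paper's idea; the paper's proof silently assumes the matched pairing at its second inequality, and the obstacle you flagged is a genuine gap in the paper's argument rather than in yours.
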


\begin{lemma}
	Consider two summaries $\bS_1$ and $\bS_2$ for $\bW_1, \bW_2$ respectively) with identical budget $B$. Then, we can produce a summary $\bS$ that is a subset of $\bS_1 \boldsymbol{\Large \biguplus} \bS_2$, such that its cost is at most $B$ and its objective value is 
	at most an $1/2(1-1/e)$ factor away from the optimal solution of $G$ for $\bW_1 \boldsymbol{\Large \biguplus} \bW_2$.
\end{lemma}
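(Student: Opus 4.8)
The plan is to reduce the claim to the single-workload guarantee of Algorithm~\ref{algo:baseline} by exploiting the submodularity of $G$. First I would invoke the proposition establishing that $G(\cdot,\gamma)$ is non-negative, monotone and submodular; together with $G(\emptyset,\gamma)=0$ this makes $G$ \emph{subadditive}, i.e.\ $G(A\boldsymbol{\Large \biguplus}B,\gamma)\le G(A,\gamma)+G(B,\gamma)$. Throughout I would fix a single token domain and target distribution $\td$ (the one induced by $\bW_1\boldsymbol{\Large \biguplus}\bW_2$), so that $G$ is one and the same set function whether evaluated on candidates drawn from $\bW_1$, from $\bW_2$, or from the merged workload; a token absent from $\bW_i$ simply contributes $\td(t)\log 1 = 0$, so the local greedy runs optimize exactly this global $G$. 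This is the normalization convention already adopted for incremental computation.

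Next I would decompose the optimum. Let $\bS^\star$ be an optimal budget-$B$ summary for $\bW_1\boldsymbol{\Large \biguplus}\bW_2$, and split it by provenance as $\bS^\star=\bS^\star_1\boldsymbol{\Large \biguplus}\bS^\star_2$ with $\bS^\star_i\subseteq\bW_i$. Since $c(\bS^\star_i)\le c(\bS^\star)\le B$, each $\bS^\star_i$ is a feasible budget-$B$ summary of the local workload $\bW_i$. Subadditivity then gives $G(\bS^\star,\gamma)\le G(\bS^\star_1,\gamma)+G(\bS^\star_2,\gamma)\le 2\max_i G(\bS^\star_i,\gamma)$, so one of the two local workloads already admits a feasible solution whose value is at least $\tfrac12 G(\bS^\star,\gamma)$. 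This subadditive split is exactly where the factor $\tfrac12$ in the statement comes from.

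I would then apply the per-workload guarantee: because $\bS_i$ is produced on $\bW_i$ by the greedy procedure and $\bS^\star_i$ is a feasible competitor there, $G(\bS_i,\gamma)\ge (1-1/e)\,G(\bS^\star_i,\gamma)$ (the unit-cost / cardinality regime of Algorithm~\ref{algo:baseline} assumed in Section~\ref{sec:parallel}). Taking $\bS$ to be whichever of $\bS_1,\bS_2$ has the larger objective — a subset of $\bS_1\boldsymbol{\Large \biguplus}\bS_2$ of cost at most $B$ — so that $G(\bS,\gamma)=\max_i G(\bS_i,\gamma)$, and combining with the previous step yields
\begin{align*}
G(\bS,\gamma)\ge (1-1/e)\max_i G(\bS^\star_i,\gamma)\ge \tfrac12(1-1/e)\,G(\bS^\star,\gamma),
\end{align*}
which is the asserted approximation. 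Equivalently, one may run Algorithm~\ref{algo:baseline} once over the merged candidate set $\bS_1\boldsymbol{\Large \biguplus}\bS_2$ and then return whichever of that output, $\bS_1$, and $\bS_2$ has the largest objective.

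The main obstacle is the factor bookkeeping. It is tempting to \emph{re-run} greedy on $\bS_1\boldsymbol{\Large \biguplus}\bS_2$ and bound the result against the optimum of that merged set, but since $\bS^\star_i$ need not be contained in $\bS_i$, the merged ground set does not contain $\bS^\star$, and a naive second greedy pass would cost an \emph{additional} $(1-1/e)$ factor, degrading the guarantee below $\tfrac12(1-1/e)$. The argument must therefore spend its single $(1-1/e)$ on the local greedy runs and its single $\tfrac12$ on the subadditive split, without compounding two greedy losses; retaining the unchanged input summaries as candidates for the final answer is precisely what prevents this compounding. A secondary point to verify carefully is that $G$ really is the same function across the local and global instances, which is why I would fix the domain and target distribution a priori.
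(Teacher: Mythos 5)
Your proposal cannot be checked against the paper's own argument because there is none: this lemma is stated in Section~\ref{sec:parallel} without proof, and the appendix never proves it (the second appendix block titled ``Proof of Lemma~\ref{lem:merge}'' is actually the counterexample showing that $\rho_1$ and $\rho_\infty$ are neither monotone nor submodular, i.e., the proof of the lemma in Section~\ref{sec:hardness}). Judged on its own, your proof is correct and is the natural one --- it is the two-partition case of the distributed greedy (GreeDi-style) analysis from the very works cited in Section~\ref{sec:parallel}. All three ingredients are sound: fixing one token domain and one target distribution $\td$ so that $G(\cdot,\gamma)$ is a single non-negative, monotone, submodular set function with $G(\emptyset,\gamma)=0$ (hence subadditive, and local evaluations agree with the global objective since absent tokens contribute $\td(t)\log 1=0$); splitting the global optimum $\bS^\star$ by provenance into locally feasible $\bS^\star_1,\bS^\star_2$, so that $\max_i G(\bS^\star_i,\gamma)\ge\tfrac{1}{2}G(\bS^\star,\gamma)$; and spending the single $(1-1/e)$ on the local greedy runs before returning the better of $\bS_1,\bS_2$, which is indeed a subset of $\bS_1\uplus\bS_2$ of cost at most $B$. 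Your remark that one must not rely on a second greedy pass over $\bS_1\uplus\bS_2$ for the guarantee (since $\bS^\star$ need not survive into that ground set, this would compound two greedy losses) is exactly the right subtlety.

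Two caveats should be made explicit if your argument is adopted as the proof. First, the lemma's phrase ``two summaries'' must be read as ``two summaries produced by Algorithm~\ref{algo:baseline}''; for arbitrary $\bS_1,\bS_2$ no approximation bound can hold, and your inequality $G(\bS_i,\gamma)\ge(1-1/e)\,G(\bS^\star_i,\gamma)$ uses this. Second, the constant: the chain delivers $\tfrac{1}{2}(1-1/e)$ only in the regime where greedy attains $(1-1/e)$ on each local instance, i.e., unit costs / a cardinality budget, which is what Section~\ref{sec:parallel} assumes for parallelization. Under a general knapsack cost, the paper's own guarantee for Algorithm~\ref{algo:baseline} is itself $\tfrac{1}{2}(1-1/e)$, and your decomposition then yields only $\tfrac{1}{4}(1-1/e)$; so the lemma as literally stated (arbitrary costs, budget $B$) is broader than what this argument gives. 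You flag the unit-cost assumption rather than silently using it, which is the correct thing to do, but the restriction deserves to be stated in the lemma itself.
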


As we will see in the experimental evaluation, the worst-case bounds do not occur in practice and incremental merging of the summaries works well.

\rtwo{\section{End-To-End Framework} \label{sec:endtoend}
Benchmarking is an important problem to solve in a structured manner as it allows developers and users to reason about the performance of a system over time.
\texttt{DIAMetrics}~\cite{diametrics} is an end-to-end benchmarking system developed at Google for query engine-agnostic, repeatable benchmarking that is indicative of large-scale production performance.
In essence, it allows users to $(a)$ anonymize production data, $(b)$ move data between different file storage systems, $(c)$ execute preset workloads on specific systems automatically, and $(d)$ visualize the results of executed benchmarks.
\texttt{DIAMetrics} provides the context for which \framework was prototyped. 

One of the biggest barriers of entry to benchmarking a system is that teams are often unable to provide a concise benchmark that represents their production workload.
Although clustering and simple frequency-based analysis has been sufficient for some cases, in a majority of the cases it is infeasible to manually create accurate benchmarks. 
Workload compression provides a powerful means to generate a subset of production queries with formal guarantees.
In addition to its usefulness for benchmarking,~\framework can be deployed daily to build workload summaries which are used to monitor workload patterns over time. 
A shift in these patterns can entail several issues such as changing resource usage or execution regressions which need to be addressed in a timely manner.
A full detailed description of \texttt{DIAMetrics} is beyond the scope of this paper and we refer the interested reader to~\cite{diametrics} for more details and use cases.}


\section{Evaluation} \label{sec:eval}


In this section, we empirically evaluate the techniques discussed throughout this paper. More specifically, we
\begin{packed_item}
	\item validate that the summarization framework is useful for index tuning, materialized view recommendations, and test workload generation.
	\item evaluate the runtime of all algorithms for varying workloads and summary size constraints.
	\item compare the coverage and representativity metrics of our solution with k-medoids, hierarchical clustering, and random sampling both on real production workloads and standardized workloads and its scalability.
	\item evaluate the trade-off between representativity and coverage.
\end{packed_item}

\begin{figure*}[t]

	\begin{subfigure}{0.32\linewidth}
		\includegraphics[scale=0.35]{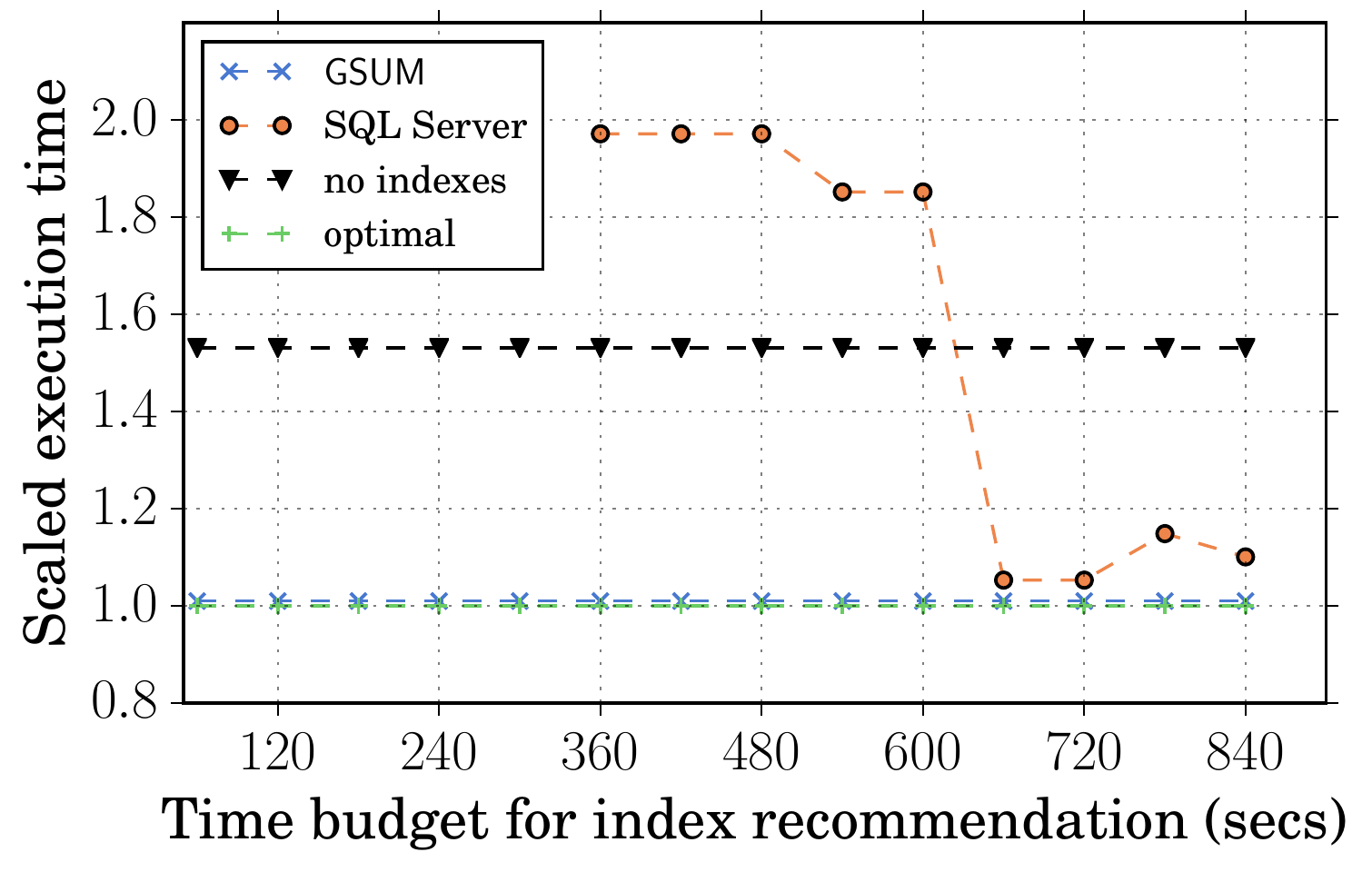}
		\caption{\textsf{TPC-H} workload} \label{fig:index:tpch:uniform}
	\end{subfigure}
	\hspace{0.5em}
	\begin{subfigure}{0.32\linewidth}
		\includegraphics[scale=0.35]{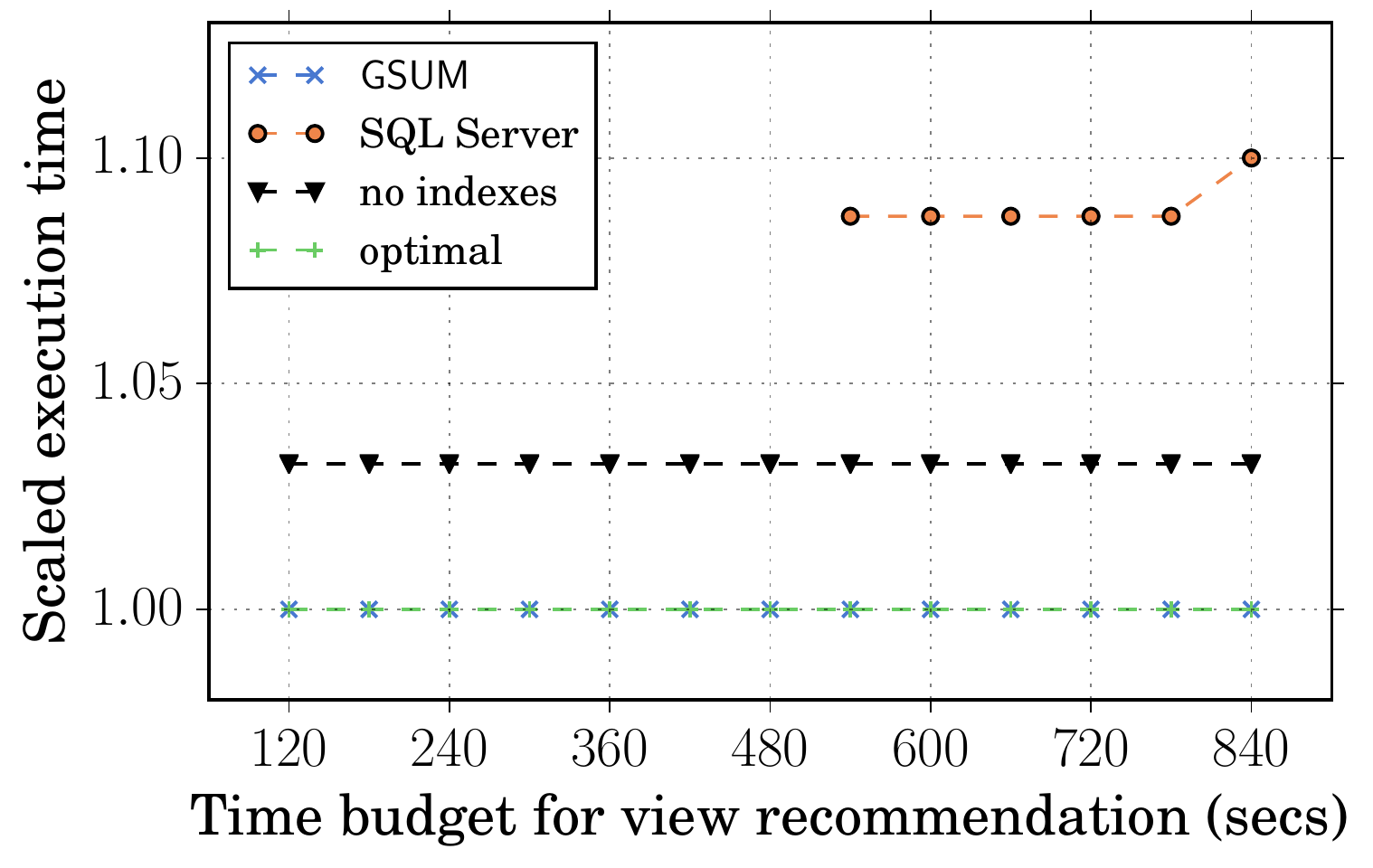}
		\caption{\textsf{TPC-H} workload} \label{fig:index:tpch:uniform:view}
	\end{subfigure}
	\hspace{0.5em}
	\begin{subfigure}{0.32\linewidth}
		\includegraphics[scale=0.35]{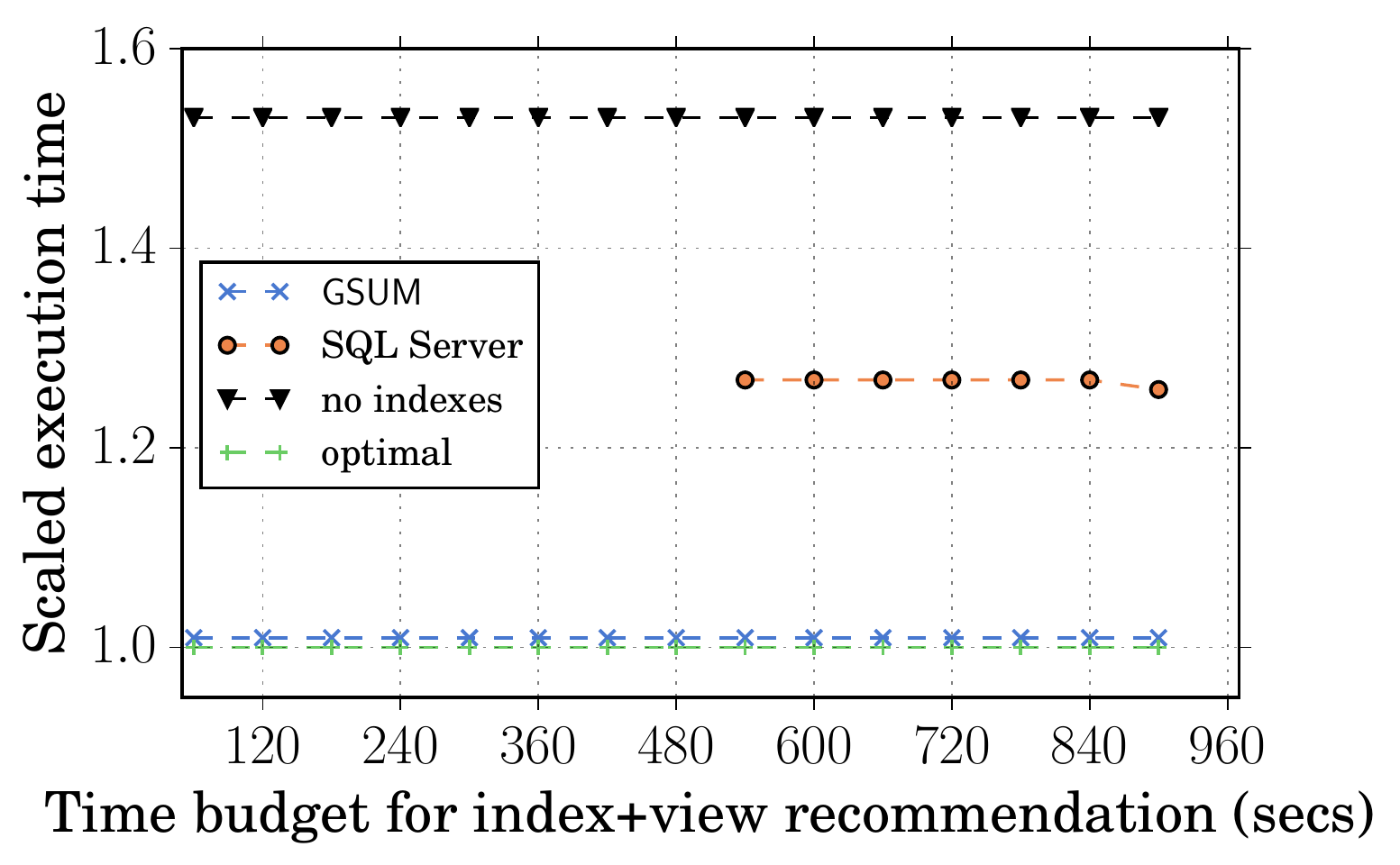}
		\caption{\textsf{TPC-H} workload} \label{fig:indexandview:tpch}
	\end{subfigure}

	\begin{subfigure}{0.32\linewidth}
		\includegraphics[scale=0.35]{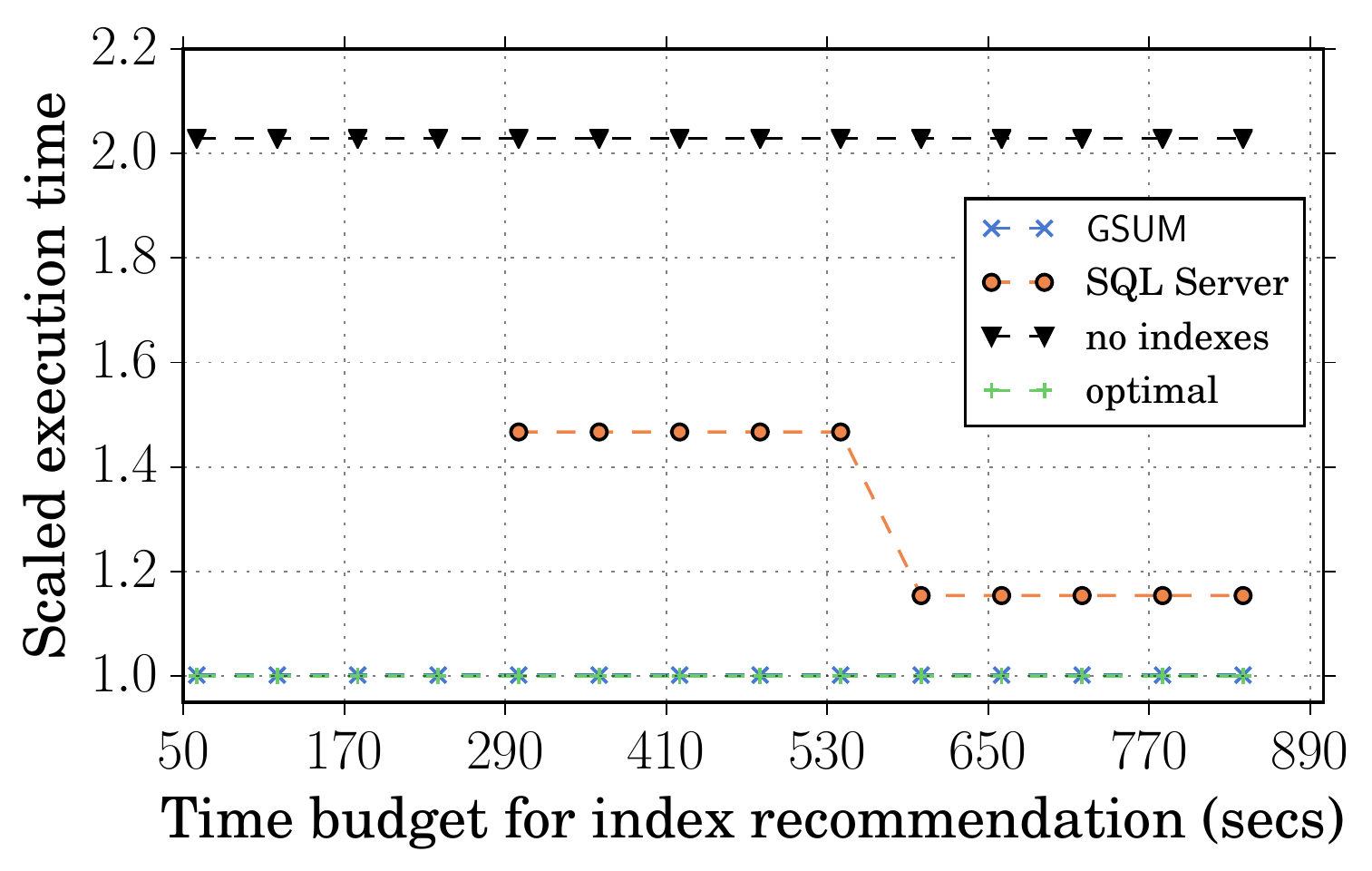}
		\caption{\textsf{SSB} workload} \label{fig:index:ssb:uniform}
	\end{subfigure}
	\hspace{0.5em}
	\begin{subfigure}{0.32\linewidth}
		\includegraphics[scale=0.35]{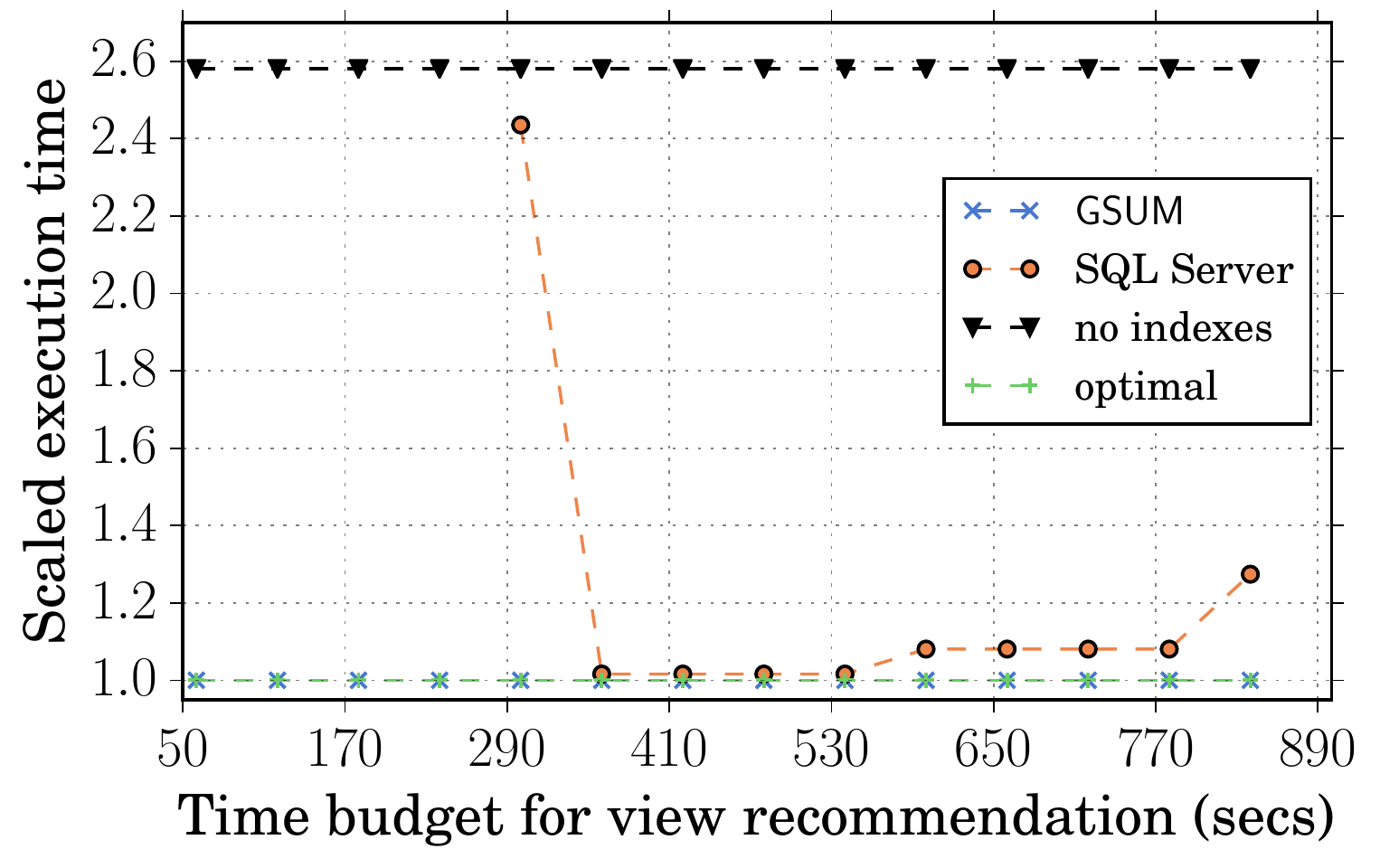}
		\caption{\textsf{SSB} workload} \label{fig:index:ssb:uniform:view}
	\end{subfigure}
	\hspace{0.5em}
	\begin{subfigure}{0.32\linewidth}
		\includegraphics[scale=0.35]{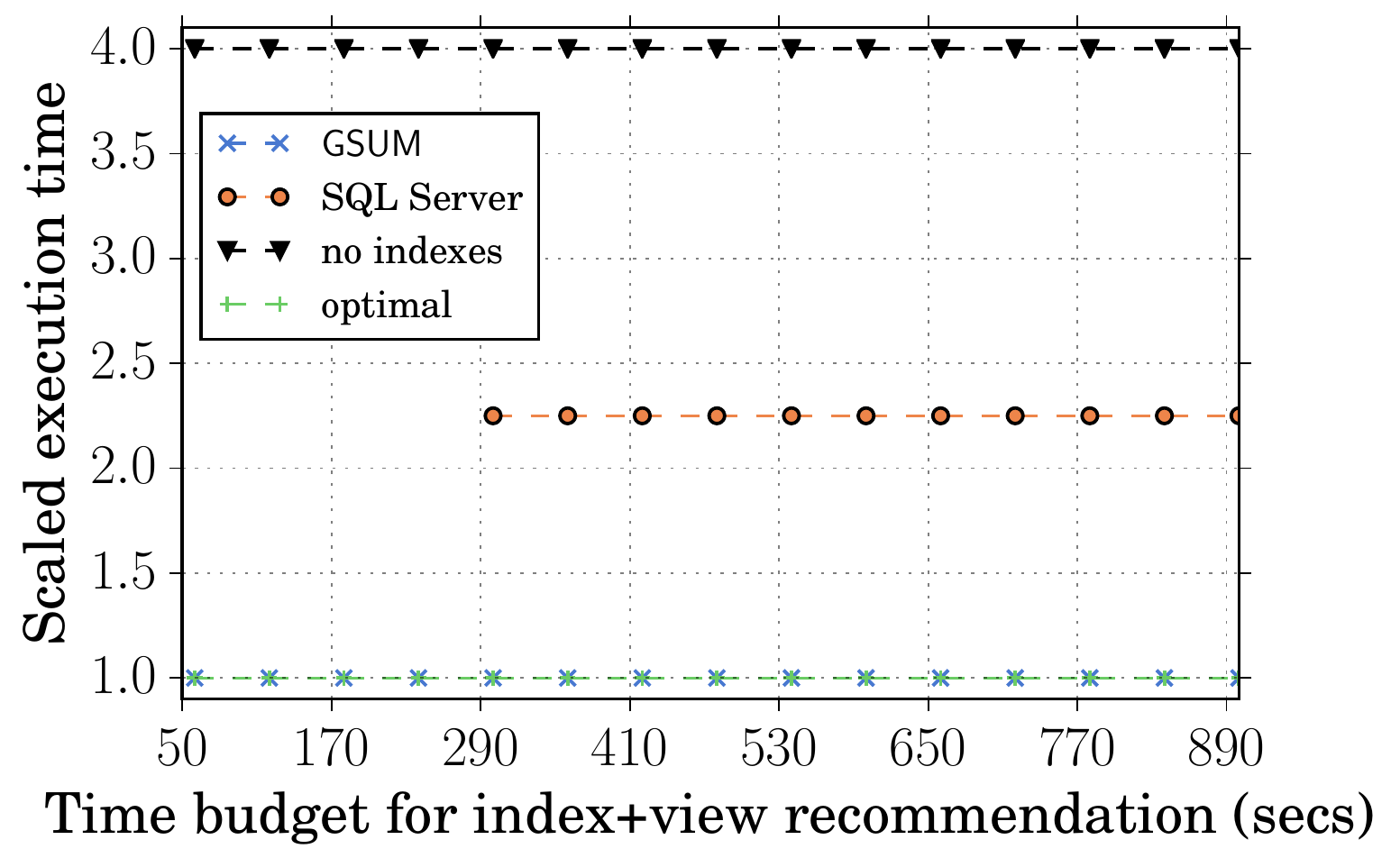}
		\caption{\textsf{SSB} workload} \label{fig:indexandview:ssb}
	\end{subfigure}
	\caption{Experimental results for three use cases: index tuning (a,d), materialized views (b,e) and indexes and indexed views (c,f). Execution time is scaled using the total running time when using optimal indexes to show the comparative slowdown.}
\end{figure*}

\rtwo{For all workloads, we assume that the query log is available through the DBMS.} All running time related experiments report the mean of the three observations that are closest to the median over a total of five runs. 
To normalize the numeric features, we set $H = 1000$. Unless specified otherwise, we choose $\gamma \rightarrow 0$ and $\beta = 0.5$. We refer to the compressed workload generated by our technique as \framework (short for \textsf{G}oogle \textsf{SUM}marized workload). We perform our experiments over $3$ datasets:

\begin{packed_enum}
	\item \texttt{DataViz}: A dataset of $512796$ ad-hoc data visualization queries issued against F1. 
	This workload contains references to $2729$ relations in total. 
	The largest join query contains $19$ joins and the workload has $106$ unique function calls. Most expensive query in the workload takes $6$ hours to execute.
	\item \texttt{TPC-H}~\cite{tpch}: A benchmark for performance metrics over systems operating at scale. We use a workload of $2200$ queries with \textsf{SF=1} and uniform data distribution.
	\item \texttt{SSB}~\cite{o2007star} : A benchmark designed to measure performance of database in support of  data warehousing applications.
\end{packed_enum}

\subsection{Use cases} \label{sec:use:cases}
As described earlier in this paper, there are several use cases for a summarization framework.
We now explore three of these use cases, index tuning, materialized view recommendation and both of them together, to show the validity of our framework and demonstrate that coverage and representativity metrics are useful in practice. Due to a lack of space, we defer the experiments for test workload generation to the appendix.

\smallskip
\johnc{\introparagraph{Experimental Setup} We use the SQL Server DTA utility as the baseline, which is a state-of-the-art industrial-strength tool that has been shipped in SQL Server for more than a decade~\cite{agrawal2005database}. All experiments in this section are run on a $m5a.8xlarge$ \textsf{AWS} \textsf{EC2} instance using a single core.}

\subsubsection{Index Tuning}
Index tuning is the task of selecting appropriate indexes for a workload that improve its overall runtime.
Summarization can be used in this context to determine a subset of relevant queries from the input workload and then generating indexes from the subset rather than the whole workload.

\smallskip
\introparagraph{Methodology} 
To evaluate summarization in the context of index tuning, we leverage the same evaluation strategy as Chaudhuri et al.~\cite{chaudhuri2003primitives} and subsequently used by Jain et al.~\cite{jain2018query2vec}.
That is, we first measure the execution time of a workload without indexes ($t_{orig}$) and then apply an index recommendation engine, determine the recommended indexes, create these indexes, and then measure the runtime again ($t_{sub}$).
As a baseline, we use SQL Server 2016, which comes with a built-in Database Engine Tuning Advisor.
Our experiments for SQL Server show our measurements for $t_{sub}^{SQL}$ under different temporal budget constraints for the tuning advisor, i.e.,~we vary the time allotted to the advisor that determines the indexes to create.
For comparison, we run \framework to create a summary workload to generate $\bS$ with constraint $|\bS| \leq \sqrt{|\bW|}$ while maximizing representativity and coverage, which we then use as input for the SQL Server tuning advisor.
Using these indexes, we can then measure $t_{sub}^{GSUM}$, i.e.,~the time it takes to run the input workload while using index suggestions based on the compressed workload generated by our algorithm.
We use all categorical and numeric features described in Section~\ref{sec:framework}.

\smallskip
\introparagraph{Results} 
\Cref{fig:index:tpch:uniform} to \Cref{fig:index:ssb:uniform} show the execution time of varying benchmarks given different index recommendation budgets with two baselines: {\em no indexes} and {\em optimal}, which uses indexes based on the 22 \textsf{TPC-H} query templates.
Diving deeper into \Cref{fig:index:tpch:uniform}, we note that under tight index generation budget constraints (between 6 and 11 minutes), SQL Server may give recommendations that result in worse performance than using no indexes at all.
The reason is that the \textsf{TPC-H} workload is large, so the advisor is unable to recommend appropriate indexes within these time constraints.
We further note that the tuning advisor's behavior is non-monotonic, which explains an increase in $t_{sub}^{SQL}$ with a larger index recommendation time budget.
In contrast, using \framework results in a much smaller workload for the advisor to interpret, reducing the time it takes to find index suggestions significantly.
With \framework, we can obtain the first suggestion for indexes within $1$ minute, while it takes SQL Server $6$ minutes to derive its first result.
SQL Server suggests same indexes as \framework at $11$ minutes. However, due to its non-monotonic behavior, an additional time budget worsens $t_{sub}^{SQL}$ marginally.
For \textsf{SSB} (\Cref{fig:index:ssb:uniform}), we observe that that the first index is recommended after $5$ minutes which is subsequently improved when the budget is $10$ minutes. \rtwo{We also verified that even after running the full workload with a budget of $60$ minutes, the recommended indexes were no better than the indexes recommended under $1$ minute. 
This demonstrates the benefit of using a compressed workload as opposed to the full workload. 
For both \textsf{TPC-H} and \textsf{SSB} workloads, the compression ratio is $\eta > 0.95$ since the compressed workload is always between size $10$ to $50$.}

\cut{Next, we compare our compression technique against index recommendation based on random sampling.
In this experiment, we vary the size of the summary workload.
As above, each of the output summaries is used for index recommendation and subsequent workload execution, resulting in a comparison of $t_{sub}^{GSUM}$ with $t_{sub}^{RAN}$, i.e.,~the execution time of the workload based on random sampling.
We observe in \Cref{fig:index:tpch:uniform:varysize} that \framework always outperforms random sampling independent of the summary size.
Moreover, after $k=30$, the performance of random sampling starts to degrade. The reason for this degradation is that choosing summaries for $30 < k < 39$ leads to less representativity as some query templates are present twice as often, which introduces skew whereas the input distribution is uniform. 
}

\begin{table*}[t]
	{\scriptsize
	\scalebox{0.90}{
		\begin{tabular}{ c|c|c|c|c|c|c|c}
			\toprule[0.1em]
			Task $\downarrow$ Feature $\rightarrow$  & \texttt{execution\_time} &  \texttt{output\_size} &  \texttt{\#joins} &\texttt{\#joins}$+$\texttt{output\_size}&
			\texttt{execution\_time}$+$\texttt{\#joins} & \texttt{execution\_time}$+$\texttt{output\_size} & \texttt{all numeric}  \\ \midrule[0.1em]
			\textsf{SSB} index & $1.1 \times$ &  $1.23 \times$ & $1.23 \times$ & $1.23 \times$& $1.1 \times$ & $1 \times$ & $1 \times$\\ \midrule
			\textsf{SSB} views & $1.08 \times$ &  $1.31 \times$ & $1.58 \times$ & $1.29 \times$ & $1.08 \times$ &$1.09 \times$ & $1.09 \times$\\ \midrule
			\textsf{SSB} indexes+views & $1.13 \times$ &  $1.35 \times$ & $1.85 \times$ & $1.35 \times$ & $1.13 \times$ & $1.08 \times$ & $1.08 \times$ \\			
			\bottomrule
	\end{tabular}}}
\caption{Using Numeric Features: Slowdown compared to using all categorical and numeric features for compression} \label{table:features:numeric}
\end{table*}

\begin{table}[t]
	{\scriptsize
	\scalebox{0.90}{
		\begin{tabular}{ c|c|c|c|c}
			\toprule[0.1em]
			Task $\downarrow$ Feature $\rightarrow$  & \texttt{function\_call} &  \texttt{table\_reference} &  \texttt{group\_by} & \texttt{order\_by}  \\ \midrule[0.1em]
			\textsf{SSB} index & $1.56 \times$ &  $1.23 \times$ & $1.36 \times$ & $1.36 \times$  \\ \midrule
			\textsf{SSB} views & $1.88 \times$ &  $1.58 \times$ & $1.17 \times$ & $1.41 \times$ \\ \midrule
			\textsf{SSB} indexes+views & $3.2 \times$ &  $1.85 \times$ & $1.95 \times$ & $1.5 \times$ \\			
			\bottomrule
	\end{tabular}}}
	\caption{Using Categorical Features: Slowdown compared to using all categorical and numeric features for compression} \label{table:features:categorical}
\end{table}

\subsubsection{Materialized view recommendation}
We utilize \framework to suggest materialized views using the inbuilt materialized view recommendation tool of SQL Server.

\smallskip
\introparagraph{Results} 
Figure~\ref{fig:index:ssb:uniform:view} and~\ref{fig:index:tpch:uniform:view} show the results for \textsf{SSB} and \textsf{TPC-H}.
For \textsf{SSB}, using \framework results in materialized views within $1$ minute that are better up to $2.5\times$ faster than the {\em no indexes} baseline. For both the workloads, even allowing up to $15$ minutes of time does not improve the recommended views. After $15$ minutes, SQL Server recommends the same views as \framework. 
\rtwo{In fact, for some time budgets, using the recommended views is slower as evidenced by the increasing execution time for SQL Server for both \textsf{TPC-H} and \textsf{SSB}. }

\rtwo{\subsubsection{Index and view recommendations}
Finally, SQL Server allows a third setting where both indexes and views can be recommended together. 
This usually allows for more indexes over the recommended views that further improve the workload performance. 

\smallskip
\introparagraph{Results} For \textsf{SSB} and \textsf{TPC-H} datasets, the performance is $2.5\times$ and $1.5 \times$ better respectively when using compressed workloads from \framework\ by using $5\times$ lesser time for generating recommendations as compared to when using the full workload. 
For \textsf{TPC-H}, we observed that even after $30$ minutes, the recommendations produced using the full workload are no better than the recommendation generated after $9$ minutes. We also observed the non-monotonic behavior of SQL Server tuning advisor when using the full workload. For some values of tuning time close to $30$ mins, the indexes and recommended views increases the workload execution time, again highlighting that allowing more time does not necessarily improve the quality of recommendations, further evidencing the advantage of using \framework.}

\begin{figure*}[t]
	\begin{subfigure}{0.46\linewidth}
	    \includegraphics[scale=0.40]{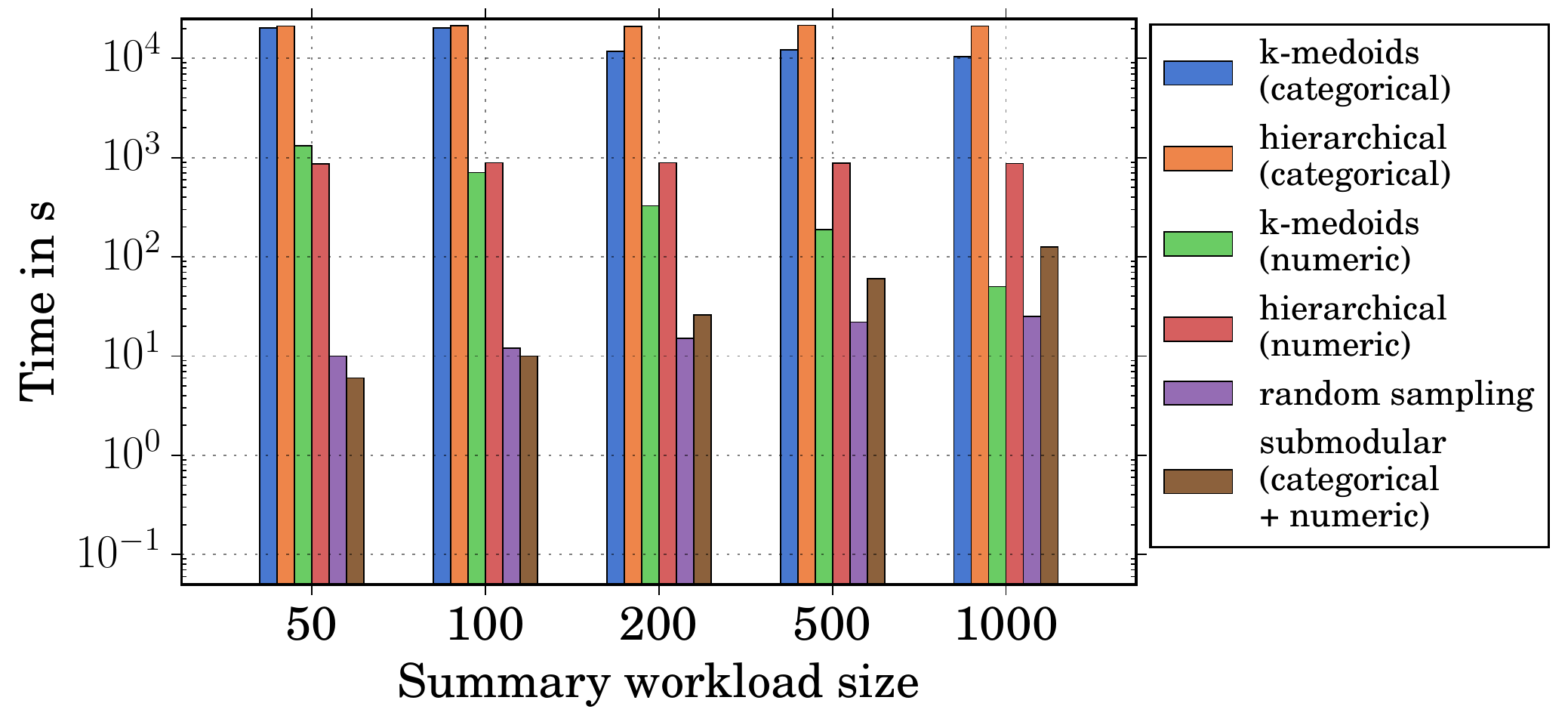}
		\caption{Runtime} \label{fig:timetaken}
	\end{subfigure}
	\hspace{2em}
	\begin{subfigure}{0.46\linewidth}
	    \includegraphics[scale=0.40]{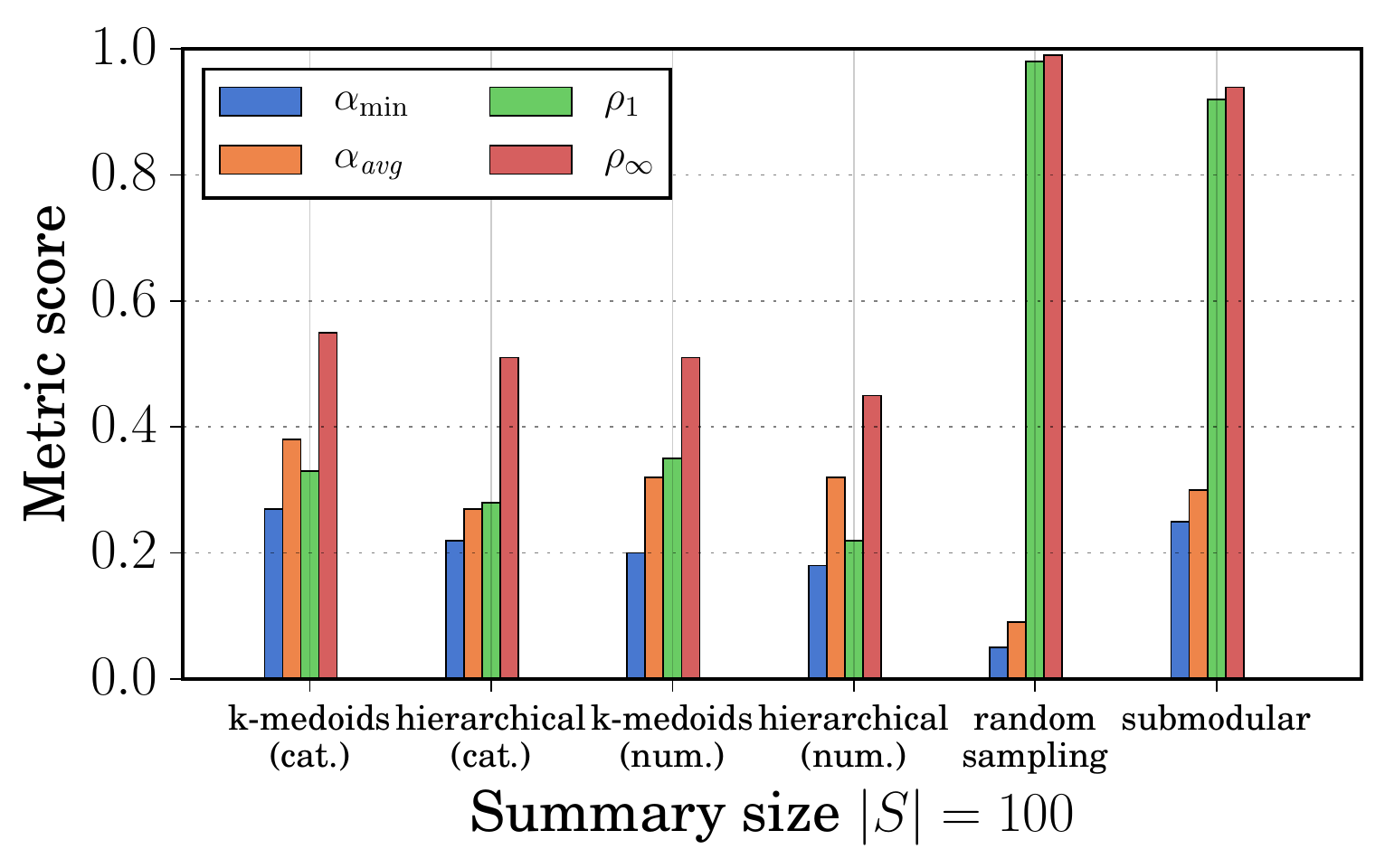} 
	    \caption{Coverage and representativity scores ($|\bS| = 100)$.} \label{fig:repcov}
	\end{subfigure}
	\caption{Runtime and metric scores for varying algorithms with $|\bW| = 5000$ on \texttt{DataViz}.}
\end{figure*}

\rthree{\subsection{Impact of Features} \label{sec:features}
Next, we perform an ablation study to see the impact of using a subset of features for the purpose of compression and compare the quality of the compressed workload \johnc{using the same experimental setup as in the previous section}. The metric we will use is slowdown in workload execution time using the indexes and views recommended using the compressed workload obtained with subset of features vs using all categorical and numeric features.~\autoref{table:features:numeric} and~\autoref{table:features:categorical} show the impact of using a subset of features for compression. The first observation is that all features (except \texttt{execution\_time}) when used in isolation fail to identify a good compressed workload. \texttt{function\_call} performs the worst because all queries in the \textsf{SSB} workload contain exactly one function call ({\color{blue} \textsf{SUM}}) which gives no useful information to the summarization task. The \texttt{table\_reference} feature is able to extract $4$ templates from the total $13$ \textsf{SSB} templates, while the \texttt{order\_by} and \texttt{group\_by} features extract the largest number of templates from the input workload. We remark that even when using all categorical features together, the average slowdown for all three tasks is $1.4 \times$. On the other hand, the numeric feature \texttt{execution\_time} performs very well and is able to identify almost all templates. This is because even for queries with high syntactic similarity (ex. $Q1.1, Q1.2, Q1.3$), the execution time varies enough for the algorithm to identify that they originate from different templates. This demonstrates the importance of numeric features in the algorithm. Using \texttt{execution\_time} with \texttt{output\_size} further improves the compressed workload quality slightly. However, using \texttt{output\_size} with \texttt{\#joins} does not perform well. Our conclusions for \textsf{TPC-H} are similar and we defer the experiments to the appendix.}

\subsection{Microbenchmarks} \label{sec:exp}
In this section, we study the impact of different algorithms on the metrics, explore the effect of parameter $\gamma$ on the objective function, study the impact of algorithmic optimization, and provide empirical evidence of \framework's scalability.

\smallskip
\johnc{\introparagraph{Experimental Setup} For all experiments in this section, we have implemented our summarization framework on top of the F1 database~\cite{shute2013f1} within Google. It consists of two distinct modules: the featurization module that  transforms and materializes the feature vectors of all queries that have been executed on the DBMS, and the summarization module that uses the materialized feature vectors and the input from the user to generate the workload summary. We use the \texttt{DataViz} dataset as the basis for our comparison, as it is a representative production workload. All experiments in this section are executed on a single machine running Ubuntu $16.04$ with $60$GB RAM and $12$ cores.}

\subsubsection{Algorithm Comparison} \label{sec:algo}
In the previous sections, we have compared against an industry system.
However, there are several other algorithms that can be used in the context of workload summarization such as clustering techniques.
To examine these, we have implemented k-medoids and hierarchical clustering (average linkage) in addition to random sampling and \framework.

\smallskip
\introparagraph{Methodology}
In this set of experiments, we use $5000$ randomly chosen \texttt{DataViz} queries\footnote{A small sample is chosen to make sure that clustering algorithms can finish running.} and vary the summary workload between $50$ and $1000$.
We compute the Euclidean distance over numeric features and the Jaccard distance over categorical features as distance function for clustering. 

\smallskip
\introparagraph{Results}
Figure~\ref{fig:timetaken} shows the runtime of the different  algorithms. 
Our first observation is that using clustering algorithms with categorical features is the most expensive choice (time $> 10,000$ seconds). 
This is because even though the feature vectors are materialized, finding the Jaccard distance takes $O(\norm{\bq})$ time, as compared to $O(1)$ for the distance computation for single-valued numerical features. 
Further, in order to find the representative of each cluster, the number of operations required is quadratic in the cluster size, which amplifies the performance difference. 
Our second observation is that as the summary size increases, the execution time decreases for k-medoids, it increases for \framework, and stays approximately the same for hierarchical clustering and random sampling. Finally, we observe that compared to the two clustering algorithms whose performance depends heavily on the type of features used, the submodular algorithm is less sensitive to the chosen features since the submodular gain computation depends on a single query.

\begin{figure}
	\includegraphics[scale=0.95]{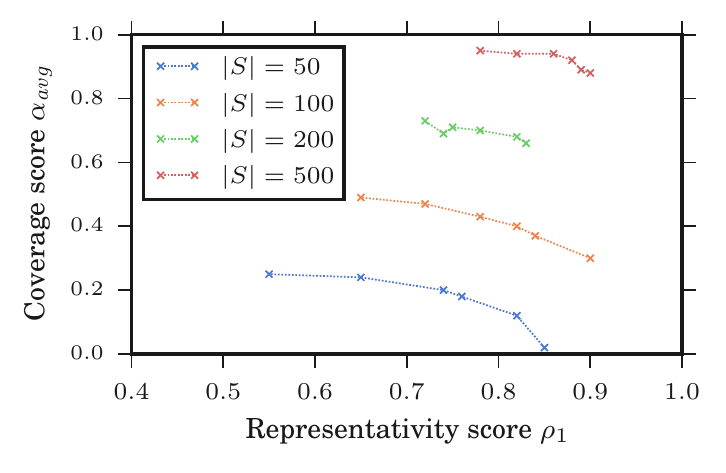}
	\caption{Trade-off between metrics; $\gamma$ is the smoothing parameter that varies between $0$ and $1$ for each curve.} \label{fig:smooth}
\end{figure}

\subsubsection{Representativity and Coverage} \label{sec:rep:cov} We now compare the metrics for the same workload and fix $\td(\cdot)$ to be the input distribution.

\smallskip
\introparagraph{Results}
\Cref{fig:repcov} shows the resulting coverage and representativity scores for all algorithms and summary size $|\bS| = 100$.
As expected, random sampling has the best representativity score and the lowest coverage. 
All clustering algorithms have low representativity scores but achieve good coverage. 
This is not surprising because the cluster initialization step identifies {\em outlier} queries as cluster centers since they have the largest distance from most other queries. 
In comparison, \framework has a slightly lower coverage score than the best clustering algorithms, but performs significantly better in terms of representativity. 
Note that since $\gamma \rightarrow 0$, the submodular algorithm focuses on maximizing coverage first and the chosen input workload of $100$ queries is not able to cover the active domain (as are the clustering techniques). 
Once \framework has reached the best possible coverage for a fixed $\gamma$, $\rho$ improves even further. 
\rtwo{Given the faster running time of random sampling, it is natural to ask why \framework is better than random sampling. 
Note that random sampling provides poor coverage, since in the presence of skew, outlier queries are likely missed from the sample. 
These drawbacks were also identified by~\cite{chaudhuri2002compressing,chaudhuri2003primitives}. 
Secondly, it is not clear how to incorporate user-specified constraints, e.g.,~the execution time of the summary is at most $1$ hour, or a custom target distribution. 
Finally, the right sample size for random sampling is \emph{unknown} a priori. 
Similar to the clustering methods, random sampling needs the sample size as the input. 
This means that we may need to run random sampling for all possible sample sizes which makes it a less attractive choice.}

\begin{figure*}[t]
	\begin{subfigure}{0.3\linewidth}
		\includegraphics[scale=0.35]{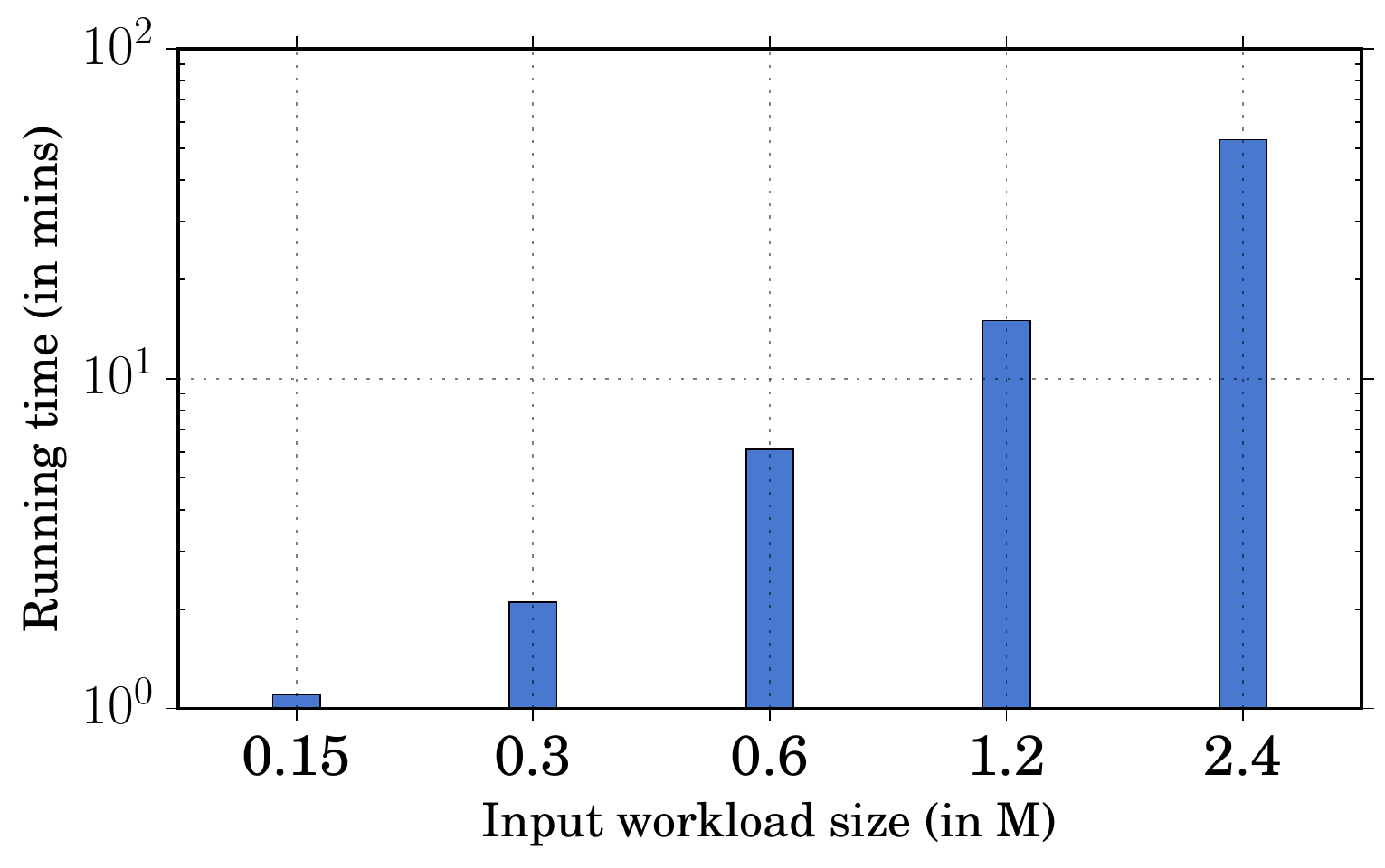}
		\caption{Runtime varying $|\bW|$, $|\bS|=\sqrt{|\bW|}$} \label{fig:scalability:exp1}
	\end{subfigure}
	\hfill
	\begin{subfigure}{0.3\linewidth}
		\includegraphics[scale=0.35]{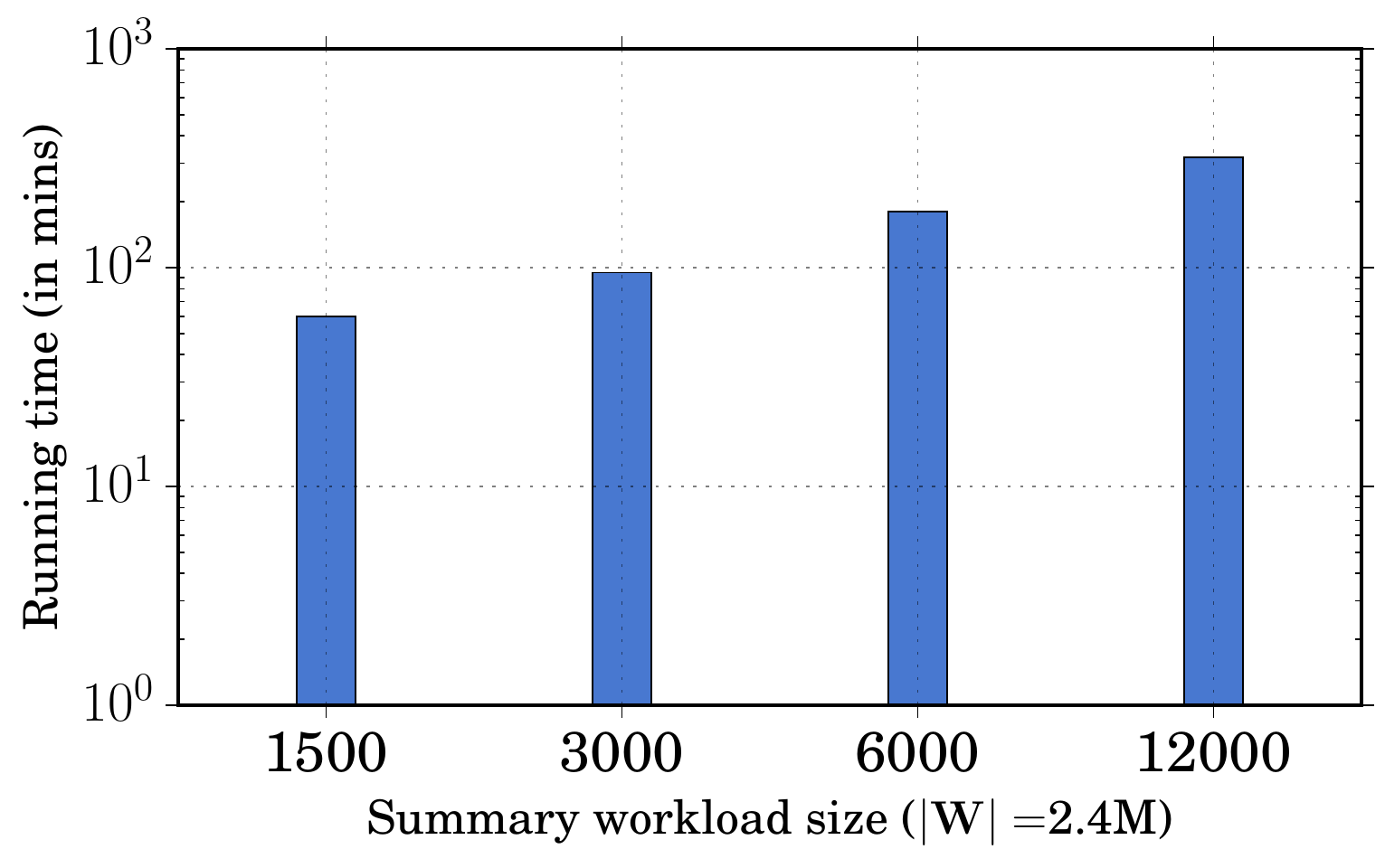}
		\caption{Runtime varying $|\bS|$, $|\bW|=2.4$M} \label{fig:scalability:exp2}
	\end{subfigure}
	\hfill
	\begin{subfigure}{0.3\linewidth}
		\includegraphics[scale=0.35]{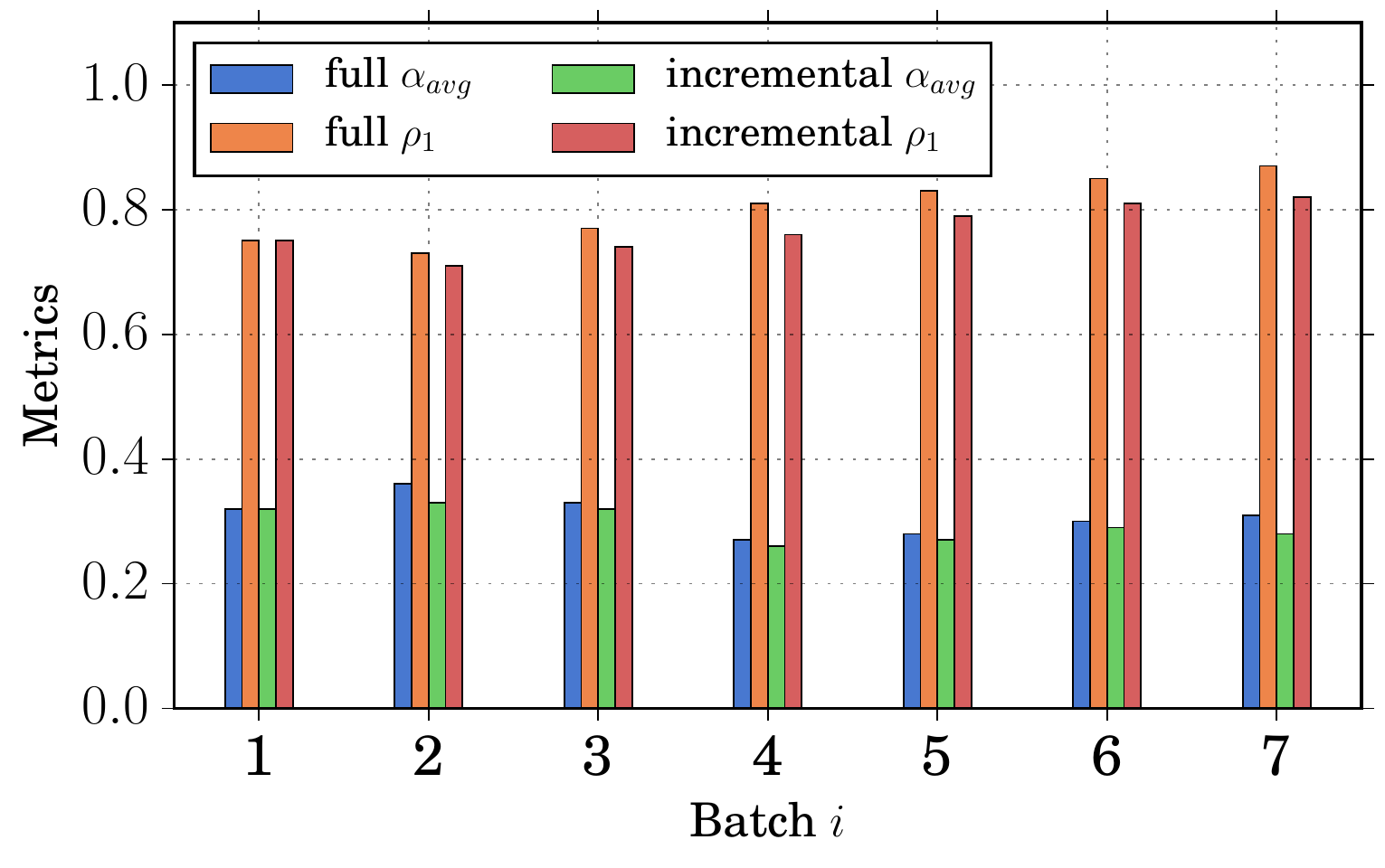}
		\caption{Coverage and representativity scores} \label{fig:incremental:dataviz}
	\end{subfigure}
	\caption{Scalability (a),(b), and incremental computation (c) experiments on \texttt{DataViz}.}
\end{figure*}

\subsubsection{Trade-off between Metrics}
In the next experiment, we empirically verify the impact of the $\gamma$ parameter on \textsf{DataViz} workload and study how it can be used to trade-off between the two metrics. 

\smallskip
\introparagraph{Results}
\Cref{fig:smooth} shows the trade-off between coverage and representativity (by controlling $\gamma$) for different summary sizes. 
Observe that as $\gamma$ decreases, representativity starts dropping and coverage starts to increase. 
Note that the increase in coverage or decrease in representativity is not necessarily monotone due to the complex interaction between features.
Since all features are uniformly weighted, features with more tokens tend to dominate both $\alpha$ and $\rho$. 
In order to balance that, we can set the weight for each feature inversely proportional to its active domain. 
However, in almost all our experiments, no feature dominated a different feature even when each feature had the same weight. 
In other words, the representativity score for each feature shows an empirical monotone behavior as $\gamma$ changes.
\Cref{tradeoff:prod} shows the impact of $\gamma$ on the full \texttt{DataViz} workload. With $|\bW| = 512796$ and $|\bS| = 1000$, we observe that as the value of $\gamma$ decreases, $\rho_{1}$ increases at the expense of $\alpha_{\mathit{avg}}$. 
Using the generated summary, we are able to identify recurring patterns in the input workload. Finally, we observe that the adjustments to $\gamma$ depend on the skew of the workload and should be examined on a case-by-case basis.

 \begin{table}[t]
 	\scalebox{0.99}{
 		\begin{small}
 			\begin{tabular}{@{}lrrrrrr@{}}\toprule
 				\textbf{\# processors $\rightarrow$} & $\mathbf{1}$&$\mathbf{2}$&$\mathbf{3}$&$\mathbf{4}$&$\mathbf{5}$&$\mathbf{6}$  \\ \toprule
 				\textbf{runtime (min)} & 95 & 53 & 36 & 26 & 21 & 18 \\ \midrule
 				\textbf{speedup obtained} & 1x& 1.8x& 2.7x & 3.6x & 4.5x & 5.5x \\\midrule
 				\textbf{coverage} & 1.0 & 1.0 & 1.0 & 1.0 & 1.0 & 1.0 \\\midrule
 				\textbf{representativity} & 0.91 & 0.88 & 0.86 & 0.85 & 0.85 & 0.85 \\
 				\bottomrule
 			\end{tabular}
 		\end{small}
 	}
  	\caption{Runtime (in minutes) and metrics obtained using parallel processing.}
 	\label{table:runtime}
 \end{table}

\begin{table}[t]
	{\scriptsize
	\scalebox{0.95}{
		\begin{tabular}{ c|c|c|c|c|c}
			\toprule[0.1em]
			$\gamma$ & $f^s_1$ &  $f^p_1$ &  $f^p_2$ & $f^p_4$ & $f^p_5$  \\ \midrule[0.1em]
			$10^0$ & $ 0.75 \mid 1.0 $ & $ 0.67\mid 1.0$ & $ 0.57\mid 1.0$ & $0.9\mid 1.0$ & $0.5\mid 1.0$  \\ \midrule
			$10^{-3}$ & $ 0.76\mid 1.0$ & $0.74\mid 0.81$ & $ 0.66\mid 0.88$  & $0.90\mid 1.0$ & $0.6\mid 0.84$  \\		\midrule
			$10^{-10}$ & $ 0.8\mid 1.0$ & $ 0.81\mid 0.68$ & $ 0.73\mid 0.75$ & $ 0.90\mid 1.0$ & $0.69\mid 0.70$  \\	\midrule
			$10^{-15}$ & $ 0.84\mid 1.0$ & $0.87\mid 0.57$ & $0.80\mid 0.61$ & $ 0.95\mid 0.56$ & $ 0.74\mid 0.64$  \\	\midrule
			$10^{-20}$ & $ 0.88\mid 1.0$ & $0.98\mid 0.32$ & $0.94\mid 0.26$ & $ 0.98\mid 0.30$ & $ 0.94\mid 0.18$  \\	\midrule
			$10^{-25}$ & $ 0.91\mid 1.0$ & $0.99\mid 0.07$ & $0.94\mid 0.07$ & $ 0.98\mid 0.07$ & $ 0.99\mid 0.02$  \\
			\bottomrule
	\end{tabular}}}
	\caption{Tradeoff between metrics for production workload. Each cell shows $\rho_{1} \mid \alpha_{\mathit{avg}}$} \label{tradeoff:prod}
\end{table}

\subsubsection{Scalability and Parallel Computation}

To benchmark scalability, we execute \framework on a single thread on a single machine and use all available categorical and numeric features. 
\rtwo{We use a workload consisting of $2.4M$ \textsf{TPC-H} and \textsf{SSB} queries.}

\smallskip
\introparagraph{Results}
\Cref{fig:scalability:exp1} shows the runtime in minutes when the input workload size $|\bW|$ varies and the summary size is fixed to $|\bS| = \sqrt{|\bW|}$. If $|\bW| = 2.4$ million queries, it takes \framework $53$ minutes to execute compression. 
\rtwo{Since the compression algorithm is executed daily, this performance is acceptable in practice. 
As we will see later, using multiple cores can further improve the runtime.}
We generally observe a linear increase in runtime when increasing the workload size. \Cref{fig:scalability:exp2} shows how the choice of summary size impacts scalability.
Here, we set $|\bW| = 2.4$ million queries and observe that creating a summary with $|\bS| = 12000$ takes roughly $5.2$ hours.
Analogous to the results observed when increasing $|\bW|$, we see a linear increase in runtime with an increase in summary size $|\bS|$. \Cref{table:runtime} shows the runtime and the impact of parallelization on the summary workload metrics.  The first column shows the metrics when a single processor calculates the summary. As the number of processors increases, the speed-up obtained is near linear. We further observe no impact on coverage and representativity only drops marginally from $0.91$ to $0.85$.
 
\subsubsection{Incremental Computation} 
Recall that incremental computation computes a local summary for each batch of input queries and merges them, instead of recomputing the summary from scratch. 
To test this behavior, we split \texttt{DataViz} into \rtwo{$7$ different batches by partitioning the query log sequentially} and summarize the workload incrementally, adding one batch at a time.
We compare our incremental results to a from-scratch execution of \framework on the same subset of queries. 
For this experiment, we set $|\bS| = \sqrt{|\bW|}$. 

\smallskip
\introparagraph{Results}
\Cref{fig:incremental:dataviz} shows our findings.
With an increase in the number of processed batches, we observe incremental computation providing marginally worse results than creating a summary from scratch.
At the same time, we observe that the cumulative time spent in creating a summary from scratch ($\approx 480$ minutes) is much larger than the cumulative time of merging summaries across batches ($\approx 60$ minutes).
After receiving batch $i$, the total input workload size has increased by a factor of $i$ and the summary size by a factor of $\sqrt{i}$, increasing the runtime by a factor of $i^{3/2}$ as compared to $i-1^{\mathsf{th}}$ batch.
Using incremental computation, we can avoid this runtime increase as each batch is treated independently.

\subsection{Discussion} \label{sec:discussion}
\noindent \textbf{Choice of Features.} 
So far, we have observed how different knobs and configuration parameters impact various performance metrics and the output of \framework\ but we have not discussed how to choose them in practice. 
The answer to this question depends on $(i)$ the chosen application, and $(ii)$ how the chosen features interact with each other. 
While adding more features certainly provides more signals, it is not necessarily the case that it will guarantee a better result.
For instance, the addition of more features will require a larger summary size to obtain better coverage, which in turn can possibly decrease representativity. 
For our experiments, the choice of features was driven mainly by iterating over the available choices and understanding their impact. 
Two heuristics that were useful to us are: 
$(i)$ We found multiple production workloads that contained  $> 10^5$ table references, most of which were temporary tables. 
For such workloads, including \texttt{table\_reference} and attributes in {\color{blue} \textsf{WHERE}} clause as a feature is not a good idea; 
$(ii)$ if two features $f_1$ and $f_2$ are highly correlated, then it suffices to include only one of these features. 
We found it beneficial to perform multiple iterations by generating multiple workloads with different features and then look at the compressed workload to understand how the summary has changed by using the visualization tools present in \texttt{DIAMetrics}. We found it useful to change feature weights and introduce weights for each token (initially
uniform) and then change in each iteration to boost
metric scores. For features with large domain, setting $\gamma$ closer to $1$ to dampen the effect of coverage also worked well.
For tasks such as choosing platform alternatives to execute a workload, using categorical features is not necessary. Indeed, the execution performance of a workload has little do with SQL syntax and more to do with physical execution plans and operators available on different systems. 

\noindent \textbf{Choice of $\beta$.} The right choice of $\beta$ is determined by the application for which the compressed workload will be used. Applications that require outliers in the compressed workload (such as test workload generation and benchmarks for creating compliance benchmarks) set $\beta$ closer to $1$ whereas applications such as index recommendation require representative workloads. However, even within a specific application the optimal choice of $\beta$ can change. As an example, consider two index recommendation algorithms $A$ and $B$. $A$ may choose to recommend indexes that optimize the execution time of the most expensive queries first. In this case, the compressed workload must contain the most expensive queries in the workload. On the other hand, $B$ may choose to recommend indexes that benefit the common-case queries in the workload but ignore the uncommon long running queries. This example demonstrates the need for a formal specification of the application context that can be integrated into the compression algorithm. Currently, we choose $\beta$ by constructing multiple summaries and then test the performance to find the right threshold.

\section{Limitations and Future Work} 
\label{sec:limitations}

We now discuss limitations of our work and ideas that will drive the agenda for this line of research.

\smallskip
\noindent \textbf{Feature Engineering.} One limitation of our framework is that it only looks at features at {\em query level} but does not incorporate workload level features such as contention between queries for resources. 
Choosing the right set of features for each applications is also a bottleneck. Currently, our features are hand-picked by domain experts such as application and database developers, database administrators and support personnel. Finding the right set of features requires an iterative analysis of the query logs to understand the variability in feature values. 
Since \framework is much faster than clustering algorithms, it allows us to build multiple benchmarks with different sets of features and $\beta$. 
As a part of future work, we plan to utilize machine learning techniques to identify the best features for a given application.

\smallskip
\noindent \textbf{Transactional Workloads.} In this work, we focus only on analytical workloads, ignoring the impact of data. 
For transactional workloads, the runtime of a query changes as the skew in the data changes. Thus, we need more sophisticated techniques to construct compressed workloads that take data updates into account. 

\smallskip
\noindent \textbf{Auto-tuning Knobs.} Since the framework contains many knobs such as the  choice of $\beta$, budget constraint and different application contexts, it is worth exploring how we can find the optimal configuration of the knobs for each application. 
Deep reinforcement learning has had considerable success in performing this task.
	\section{Related Work}
\label{sec:related}
Most prior work focuses on maximizing coverage of information in summary workload as the primary optimization criteria while incorporating notions such as quality, efficiency as additional constraints. 
The property of representativity is more nuanced in comparison to coverage since it is highly dependent on the application context. 
In most cases, a representative summary minimizes the average distance from all items in the input workload~\cite{ku2006opinion, sarkar2009sentence, ranu2014answering}, maximize mutual information between summary and input~\cite{pan2005finding}, maximizes saturated coverage~\cite{mehrotra2015representative} or maximizes coverage and diversity~\cite{mirzasoleiman2016fast, tschiatschek2014learning, dasgupta2013summarization,eick2004using, srinivasan2009efficient, xu2014efficient}. 
In all these cases, the representative metric function is well behaved, i.e, it is monotonic and submodular by definition. 
Our problem setting departs from these works in our definition of representative where we would like the summary workload to mimic a target feature distribution. This makes the summarization problem more challenging. 
We note that our definition of representative has been proposed in previous work but has only been studied empirically as a quality metric whereas our algorithms are designed specifically to optimize for this metric.

\smallskip
\introparagraph{Compressing Workloads} Compressing or summarizing \textsf{SQL} workloads has been studied by Chaudhuri et al.~\cite{chaudhuri2003primitives, chaudhuri2002compressing}. 
In~\cite{chaudhuri2003primitives}, the authors propose multiple summarization techniques including K-Medoids clustering, stratified/random sampling and all pairs query comparison. 
As the authors themselves note, random sampling ignores valuable information about statements in the workload and  misses queries that do not appear often enough in the workload. 
~\cite{chaudhuri2002compressing} proposes a new \textsf{SQL} operator specifically for summarizing workloads but also suffer from the quadratic complexity. 
More recently,~\cite{8352666, jain2018query2vec} propose query structure based clustering methods for workload summarization but both proposed approaches are not scalable and rely only on the syntactic information in the text of the query.
In contrast, our framework is more general in the sense that we also incorporate query execution statistics. 
Ettu~\cite{kul2016ettu, kul2016summarizing} presents a promising approach that clusters workload queries based on query syntax but it has restrictive assumptions with respect to defining query similarity which is based only on the subtree similarities of the query syntax. 
It is geared towards clustering queries written by humans with the purpose of identifying queries that may constitute a security attack. 
While the framework is scalable, it has restrictive assumptions with respect to defining query similarity which is based only on the subtree similarities of the query syntax. 
This is not true in our setting where most queries are generated by a pipeline of processes that increase the size of the query making it difficult to find out if two queries have the same 'intent'.

\smallskip
\introparagraph{Counting Workload Patterns} An orthogonal but related problem to our setting is creating a compressed representation that allows for counting {\em patterns} in a query.
The key idea explored in~\cite{mampaey2012summarizing, xie2018query} is to develop a maximum entropy model over feature values that then allows us to query pattern counts by simply computing the product of probabilities that each feature value is present.

	\section{Conclusion}
\label{sec:conclusion}

In this paper, we propose a novel and tunable algorithm that allows us to summarize workloads for various application domains. 
We show that the proposed solution provides provable guarantees and solves the underlying problem efficiently by exploiting its submodular structure. 
Our solution supports parallel execution as well as incremental computation model and is thus highly scalable.
We show through extensive experimental evaluation that our solution outperforms clustering algorithms and random sampling. We view this work as the first in an exciting research direction to develop automated solutions for workload monitoring at scale. We believe our solution can be extended
to tackle interesting problems such as (but not limited to)
resource prediction and production workload analysis that
can be applied for a variety of (database) systems.

\begin{acks}
	This research was supported in part by National Science Foundation grants CRII-1850348 and III-1910014. We are also grateful to the anonymous referees for their detailed and insightful comments that have greatly aided in improving this work.
\end{acks}
	
	\bibliographystyle{ACM-Reference-Format}
	\balance
	\bibliography{reference}

	\clearpage
	
	\appendix
\section{Proofs} \label{sec:a}

\begin{proof}[Proof of Theorem~\ref{thm:hardness}]
We will first prove the hardness result for $\alpha_{min}$ and $\rho_{\infty}$, and then extend it to the other combinations. 
Fix some $\beta \in [0,1)$; we already know that the problem is NP-hard for $\beta=1$. 

We will show NP-hardness by a reduction from a variant of the XSAT (exact SAT) problem. In XSAT, we are given a CNF formula such that every clause has size exactly $k \geq 3$, and each variable is present in exactly
$l \geq 3$ clauses. Further, each clause has only positive variables (so no negated variables). The goal is to find a satisfying assignment such that exactly one variable in each clause is true and all other variables in that clause are false. This problem is known to be NP-hard (\cite{porschen2014xsat}, Theorem 29).

Let the input instance for XSAT be a formula $\phi$ with $m$ clauses.
We construct a workload $\bW$, by introducing a query $\bq_x$ for each
variable $x$ that occurs in $\phi$. We consider a single multivalued feature $f$, where each token $t_c$ corresponds to a clause $c$.  For a query $\bq_x$, $f(\bq_x)$ contains exactly the clauses (tokens) that contain variable $x$. Since each clause contains exactly $k$ variables, we have that for any token $t$, $p_{\bW}(t) = k/(k \cdot m) = 1/m$. We set the target distribution to be
$q(t) = p_\bW(t)$, and choose the budget to be $B = m$. We claim that XSAT has a solution if and only if the quantity $\beta \alpha_{min} + (1-\beta) \rho_1$ has maximum $1$.

\smallskip
\noindent $\boldsymbol{\Rightarrow}$ Suppose there exists a solution to XSAT. We construct a summary $\bS$ by choosing the queries of the variables that are set to true. Since at most $m$ variables are true, we have $|\bS| \leq m$. Since for each clause (token) exactly one variable is true, the coverage for feature $f$ is  $\alpha_f = 1$, and thus $\alpha_{min} = 1$. Additionally,  each token appears exactly once in the solution, so the distribution is $p_{\bS}(t) = 1/m$. Hence, $\rho_1(q) = 1$. Summing up, for summary $\bS$ we have $\beta \alpha_{min} + (1-\beta) \rho_1 =\beta + (1-\beta) = 1$.
	
\smallskip
\noindent $\boldsymbol{\Leftarrow}$
For the reverse direction, suppose there exists a summary $\bS$ with $\beta \alpha_{min} + (1-\beta) \rho_1 = 1$. 
Since $\beta < 1$, it must be the case that the representativity is $\rho_1 = 1$. 
We construct an XSAT solution by setting a variable $x$ to true if $\bq_x \in \bS$ and false otherwise. We will show that the solution is an exact cover. 
Indeed, suppose that some clause $c$ is not covered by any true variable; then, $p_{\bS}(t_c) =0$, which would imply that $\rho_1 < 1$, a contradiction. Similarly, suppose that clause $c$ is covered by at least 2 variables; then, $p_{\bS}(t_c) \geq 2/m > p_{\bW}(t_c)$, which again leads to a contradiction.

\smallskip
Since the construction has only one feature $f$, we have $\alpha_{min} = \alpha_{avg} = \alpha_f$, so the same NP-hardness proof holds for $\alpha_{avg}$. Similarly, it is straightforward to see that the reduction holds for $\rho_\infty$ as well.
\end{proof}

\begin{figure*}[t]
	\begin{subfigure}{0.3\linewidth}
		\includegraphics[scale=0.65]{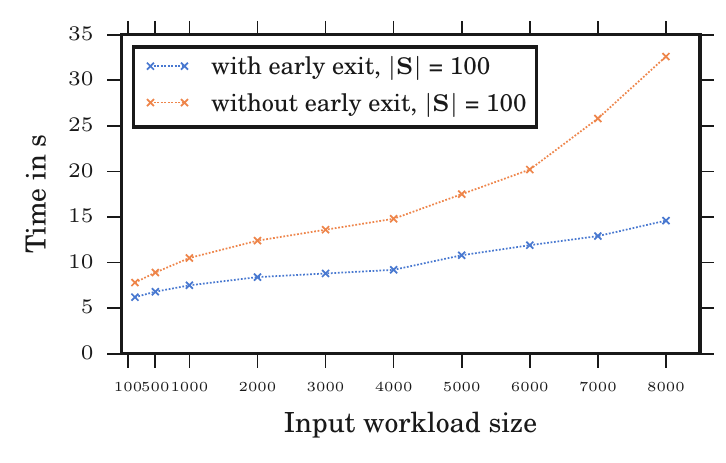}
		\caption{Runtime with varying workload size, $|\bS|=100$} \label{fig:opt1}
	\end{subfigure}
	\hspace{0.5em}
	\begin{subfigure}{0.3\linewidth}
		\includegraphics[scale=0.65]{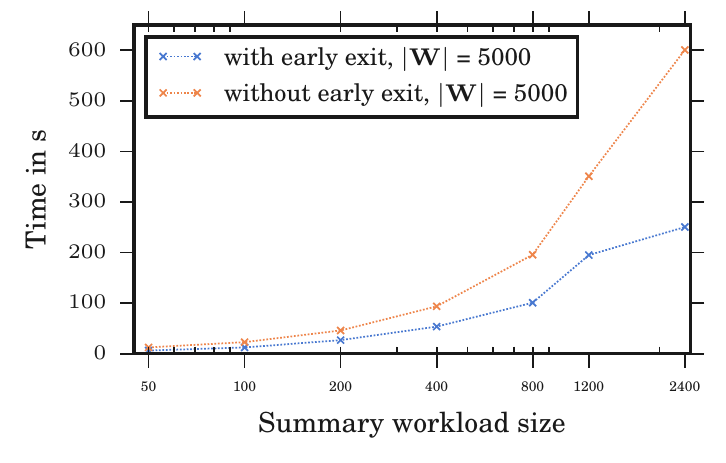} 
		\caption{Runtime with varying summary size, $|\bW|=5000$} \label{fig:opt2}
	\end{subfigure}
	\hspace{0.5em}
	\begin{subfigure}{0.3\linewidth}
		\includegraphics[scale=0.65]{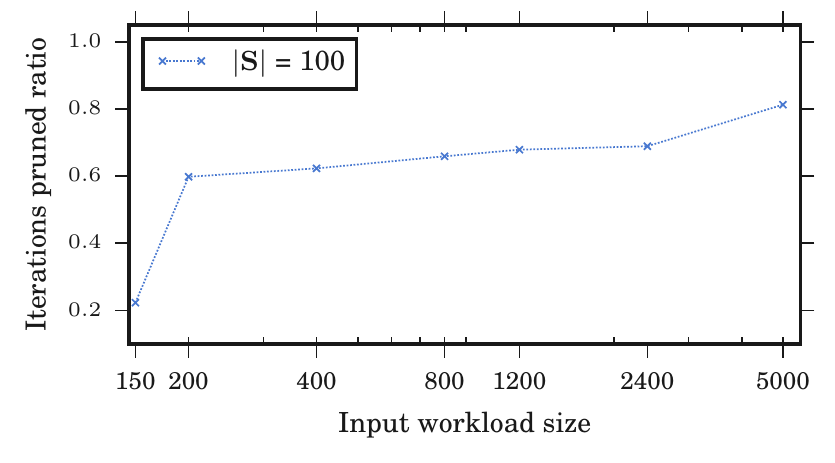} 
		\caption{Pruning ratio with varying workload size, $|S|=100$} \label{fig:opt3}
	\end{subfigure}
	
	\begin{subfigure}{0.3\linewidth}
		\includegraphics[scale=0.65]{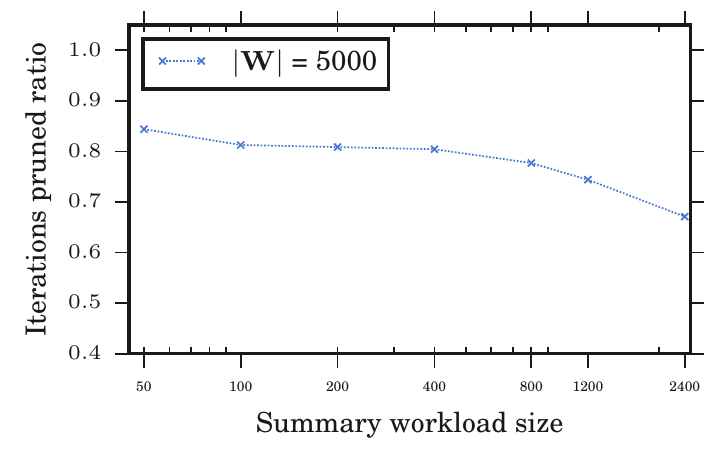} 
		\caption{Pruning ratio with varying\\summary size, $|\bW|=5000$} \label{fig:opt4}
	\end{subfigure}
\end{figure*}

\begin{proof}[Proof of Theorem~\ref{thm:hardness:apx}]

The problem is known to be APX-hard for $\beta = 1$, since it reduces to maximum coverage problem~\cite{feige1998threshold}. 
We will now show the hardness of maximizing the objective $\beta \alpha_{min} + (1-\beta) \rho_1$ for any $\beta \in [0,1)$ using a gap preserving (PTAS) reduction from the restricted \textsf{MAX-3DM} problem which we describe next.

 \smallskip
\noindent \textbf{Input.}  Three disjoint sets $A = \{a_1, \dots, a_p\}, B = \{b_1, \dots, b_p\}$ and $C = \{c_1, \dots, c_p\}$, together with a subset of triples $T \subseteq A \times B \times C$. Any element from $A,B,C$ occurs in at most 3 triples in $T$. This implies $p \leq |T| \leq 3\cdot p$.

 \smallskip
\noindent \textbf{Goal.} Find a subset $T' \subseteq T$ of maximum cardinality such that no two triples of $T'$ agree in any coordinate.
\smallskip

 Petrank~\cite{petrank1994hardness} has shown that \textsf{Max-3DM} is APX-hard even if one only allows instances where the optimal solution consists of $p = |A| = |B| = |C|$ triples; in the following we will only consider this additionally restricted version of \textsf{Max-3DM}. 
	
To achieve the PTAS reduction, we map an instance $I$ of \textsf{Max-3DM} to an instance $I'$ of the summarization problem as follows. We introduce a single multi-valued feature with domain $A \cup B \cup C$. For each triple, we create a query with feature set the values that appear in that triple. We pick the target distribution to be $\td(t) = 1/3p$ for every token $t \in A \cup B \cup C$, and use the constraint $|S| = p$.

From our instance restriction, we have $\mathsf{OPT}(I) = p$. It is also easy to observe that an optimal solution to the matching problem will produce a summarization of size $p$ where each token is covered exactly once. This implies that $\alpha_{min}=1$, $\rho_1(\td) = 1$, and hence $\mathsf{OPT}(I') = 1$.

Suppose that there exists a summary $\bS$ for $I'$ that is within $(1-\delta)$ of $\mathsf{OPT}(I')$.
Let  $c_i$ denote the number of occurrences of token $i$ in summary $\bS$. 
Further, let $V_0$ be the set of tokens with $c_i=0$, $V_1$ be the set of tokens with $c_i=1$, and $V_2$ be the remaining tokens. Denote also $n_0 = |V_0|$, $n_1= |V_1|$ and $n_2 = |V_2|$. 
Clearly we have $n_0 + n_1 + n_2  = 3p$.

Now, observe that
\begin{align*}
\rho_1 = 1 - \frac{1}{2} \sum_i |\frac{c_i}{3p} - \frac{1}{3p}| \leq 1 - \frac{1}{6p} (n_0 + n_2) 
\end{align*}

We also can write
\begin{align*}
\alpha_{min} = \frac{n_1 + n_2}{3p} = \frac{3p-n_0}{3p} \leq 1 - \frac{1}{6p} n_0
\end{align*}
Since $\beta \alpha_{min} + (1-\beta) \rho_1 \geq 1-\delta$, we have:
\begin{align*}
 \beta \cdot (1 - \frac{1}{6p} n_0) + (1-\beta) \cdot (1 - \frac{1}{6p} (n_0 + n_2) )  \geq 1-\delta
\end{align*}
from which we obtain that $n_0 + n_2 \leq 6p\delta/(1-\beta)$.

We now transform the summary $\bS$ to a matching solution $M$ for $I$ as follows: we keep only the triples for which the corresponding query covers only tokens in $V_1$. We next bound the size of the matching. Note that to cover the tokens in $V_1$ we need at least $n_1/3$ queries. Additionally, the queries that cover more than one token can be at most $3 n_2$, since every token belongs in at most 3 queries. Hence:
\begin{align*}
|M| & \geq n_1/3 - 3n_2  = (3p - n_0-n_2)/3 - 3n_2 \\
& \geq p - 10/3 (n_0 + n_2)  \geq p -20p \delta/(1-\beta) \\
& = (1-20\delta/(1-\beta)) \cdot \mathsf{OPT}(I)
\end{align*}

Choosing our $\delta$ to be $\delta(\epsilon) = \epsilon(1-\beta)/20$ completes the PTAS reduction.
To extend the hardness to the other metrics, observe that
since we have only one feature, $\alpha_{min} = \alpha_{avg}$.
It is also not hard to also show that the problem is APX-hard for $\rho_\infty$.
 \end{proof}

\begin{proof}[Proof of Lemma~\ref{lem:sum:comparison}]
We will show that there exists a value $\epsilon$ such that $G(\bS_1, \gamma) < G(\bS_2, \gamma)$ for every $\gamma \in (0, \epsilon)$. For a feature $f$, let $T_f$ denote the set of tokens that are covered by $\bS_2$ but not by $\bS_1$, and $U_f$ the set of tokens that are covered by both summaries. From the assumption for $\bS_1, \bS_2$, at least one set $T_f$ will be non-empty. Denote by $q_{min}$ the smallest probability $q(t)$ over all tokens. We can now write:
\begin{align*}
 & G(\bS_2, \gamma) -  G(\bS_1, \gamma)   =  
  \sum_f \sum_{t \in \domain(\bW, f)} q(t) \log  \left( \dfrac{\mult{\bS_2}{t}{f} +\gamma} { \mult{\bS_1}{t}{f} +\gamma} \right)  \\
  & \geq \sum_f \sum_{t \in T_f} q(t) \log  \left( \dfrac{1 +\gamma} {\gamma} \right) +
  \sum_f \sum_{t \in U_f} q(t) \log  \left( \dfrac{1} {\norm{\bW} + \gamma} \right) \\
  & \geq q_{min} \cdot \log  \left( \dfrac{1+\gamma} {\gamma} \right)  - \log  \left( \norm{\bW} + \gamma \right) 
\end{align*}
It is now easy to see that we can always pick a value $\gamma >0$ small enough to make the quantity positive.
\end{proof}

\begin{proof}[Proof of Lemma~\ref{lem:merge}]
We will begin by showing the that coverage for the merged summaries is $ \geq  \min \{\alpha, \alpha'\}/2$. Consider some feature $f$, and without loss of generality assume that $\vert \domain(\bW, f) \vert \geq \vert \domain(\bW', f) \vert$. Then, the coverage for feature $f$ for the merged summary is:
\begin{align*}
& \frac{\vert\domain(\bS, f) \; \cup \; \domain(\bS', f)\vert}{\vert \domain(\bW, f) \; \cup \; \domain(\bW', f) \vert} 
 \geq \frac{\vert\domain(\bS, f) \cup \domain(\bS', f)\vert}{2\vert\domain(\bW, f)\vert} \\
 & \geq \frac{\vert\domain(\bS, f)\vert}{2\vert\domain(\bW, f)\vert}
 = \frac{1}{2} \alpha_f 
 \geq \frac{1}{2} \alpha \geq \frac{1}{2} \min \{\alpha, \alpha' \} 
\end{align*}
	
Next, we will show that $\rho_\infty$ for the merged summary is $\geq \min\{\rho, \rho' \}$. 
From the definition of representativity, it follows that for every feature $f$ and token $t \in \domain(\bW, f)$, we have
$\vert p_{\bS}(t) - p_{\bW}(t)\vert \leq 1- \rho$ (and similarly for $\bW'$).
 Let $t^{\star} \in \domain(\bW \boldsymbol{\Large \biguplus} \bW', f^{\star})$ be the token for feature $f^{\star}$ that results in the largest deviation. Let $\mathit{m_{\bW}}$ be a shorthand notation for $\mult{\bW}{t^{\star}}{f^{\star}}$. 
 Then, the $\rho_\infty$ representativity for the merged summary is:
\begin{align*}
	&1 - \vert p_{\bW \boldsymbol{\Large \biguplus} \bW'}(t) - p_{\bS \boldsymbol{\Large \biguplus} \bS'}(t)  \vert  \\
	&= 1 - \bigg \vert \frac{\mathit{m_{\bW}} + \mathit{m_{\bW'}}}{\norm{\bW} + \norm{\bW'}}  - \frac{\mathit{m_{\bS}} + \mathit{m_{\bS'}}}{\norm{\bS} + \norm{\bS'}}\bigg \vert \\
	&\geq 1 - \bigg \vert \max \bigg\{ \frac{\mathit{m_{\bW}}}{\norm{\bW}}, \frac{\mathit{m_{\bW'}}}{\norm{\bW'}} \bigg\} - \min \bigg\{ \frac{\mathit{m_{\bS}}}{\norm{\bS}}, \frac{\mathit{m_{\bS'}}}{\norm{\bS'}} \bigg\}  \bigg \vert \\
	&\geq \min \{\rho, \rho'\}
\end{align*}
The first inequality follows from the assumption (without loss of generality) that $ p_{\bW \boldsymbol{\Large \biguplus} \bW'}(t) \geq p_{\bS \boldsymbol{\Large \biguplus} \bS'}(t)$ and the fact that sum of fractions are at least(at most) the smallest (largest) fraction. The second inequality follows from fixing each term with the max and min values. If $\frac{\mathit{m_{\bW}}}{\norm{\bW}} \geq \frac{\mathit{m_{\bW'}}}{\norm{\bW'}}$, then we can fix the min term to $\frac{\mathit{m_{\bW}}}{\norm{\bW}}$ since $ - \min \bigg\{ \frac{\mathit{m_{\bS}}}{\norm{\bS}}, \frac{\mathit{m_{\bS'}}}{\norm{\bS'}} \bigg\} \geq -\frac{\mathit{m_{\bS}}}{\norm{\bS}}$.
	
The cost of the merged summary is a direct consequence of the additivity of the cost function.
\end{proof}

Note that the representativity metrics can also be defined per feature. Instead of treating all tokens as the same, we can define $p_\bW(t,f) = \frac{\mult{\bW}{t}{f}} {\sum_{v \in \domain(\bW, f)}  \mult{\bW}{v}{f}}$. Given a target token distribution $d(f)$ for each feature $f$, we can now appropriately define $\rho_{1}(d(f), f)$ and $\rho_{\infty}(d(f), f)$ as in the paper and then finally set the overall $\rho_{1}$ and $\rho_{\infty}$ as the average over all the features. However, it is straightforward to show that our problem remains APX-hard even with these definitions.

\begin{proof}[Proof of Lemma~\ref{lem:merge}]
	Without loss of generality, consider a single feature $f$ with domain $\{v_1, v_2\} $. Let $\bW = \{\bq_1, \bq_2, \bq_3\}$ with feature vectors $\{ v_1\}, \{  v_2 \},$ $\{ v_1, v_2 \}$ for $\bq_1, \bq_2$ and $\bq_3$ respectively. 
	We first show that the metrics are not monotone; in other words, choosing a larger summary does not necessarily improve the representativity metric. Indeed, we have $\rho_{1}(\{ \bq_3\}) = \rho_{\infty}(\{ \bq_3\})  = 1$ but $\rho_{1}(\{\bq_1, \bq_3\}) = \rho_{\infty}(\{\bq_1, \bq_3\}) = 5/6$.
	Next, we give a counterexample to show that $\rho_{1}, \rho_\infty$ are not submodular.
	We have $\rho_{1}(\{\bq_1,\bq_3\}) + \rho_{1}(\{\bq_2,\bq_3\}) = 5/6 + 5/6$, but  $\rho_{1}(\{\bq_1,  \bq_2, \bq_3\}) + \rho_{1}(\{\bq_3\}) =  1+1 > 10/6$. 
	Exactly the same holds for $\rho_\infty$.
\end{proof}

\begin{figure}
	\includegraphics[scale=0.50]{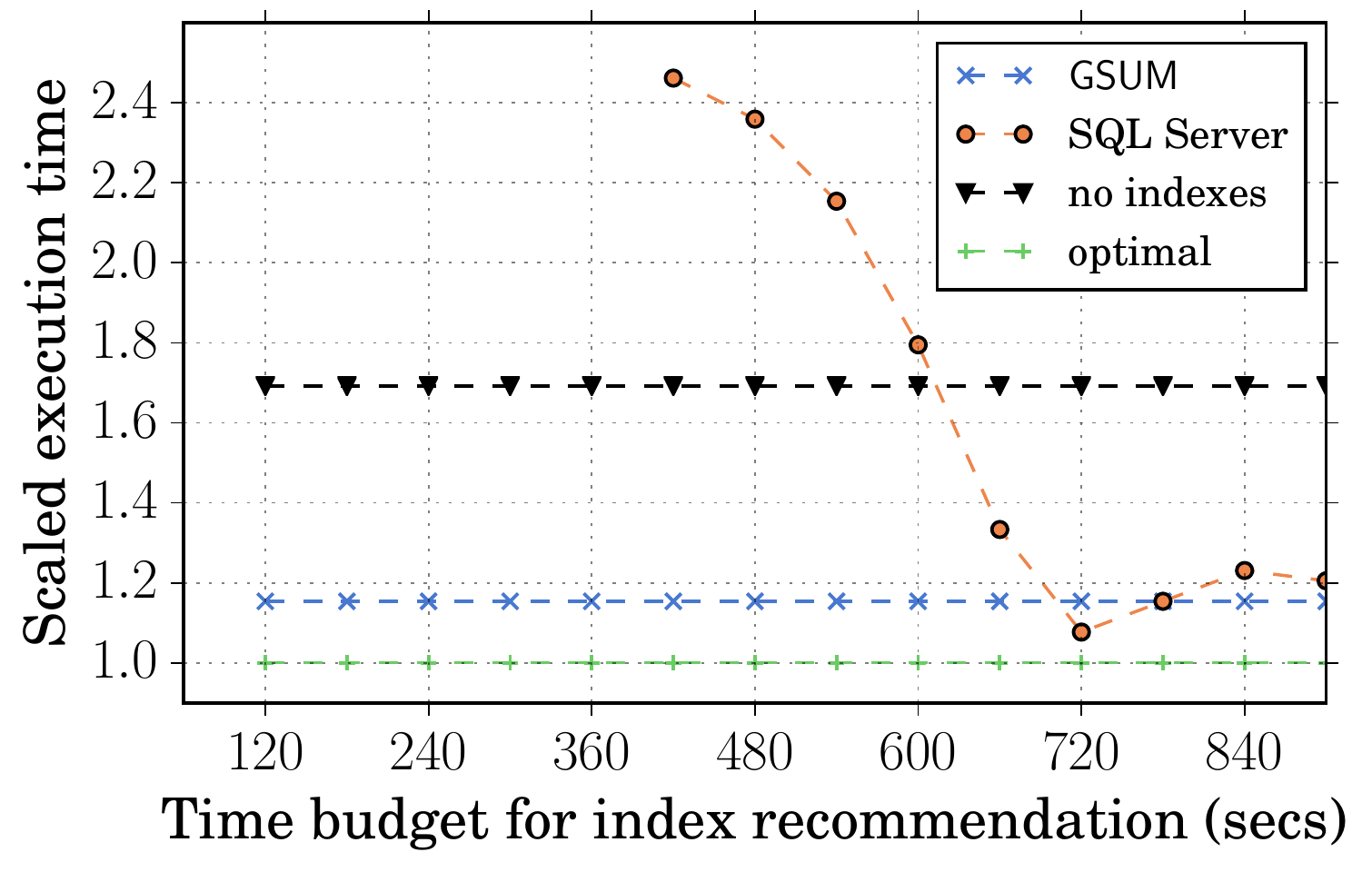}
	\caption{Index tuning time to \textsf{TPC-H skewed} workload}
	\label{fig:tpch:skew} 
\end{figure}

\begin{figure*}[t]
	\begin{subfigure}{0.3\linewidth}
		\includegraphics[scale=0.35]{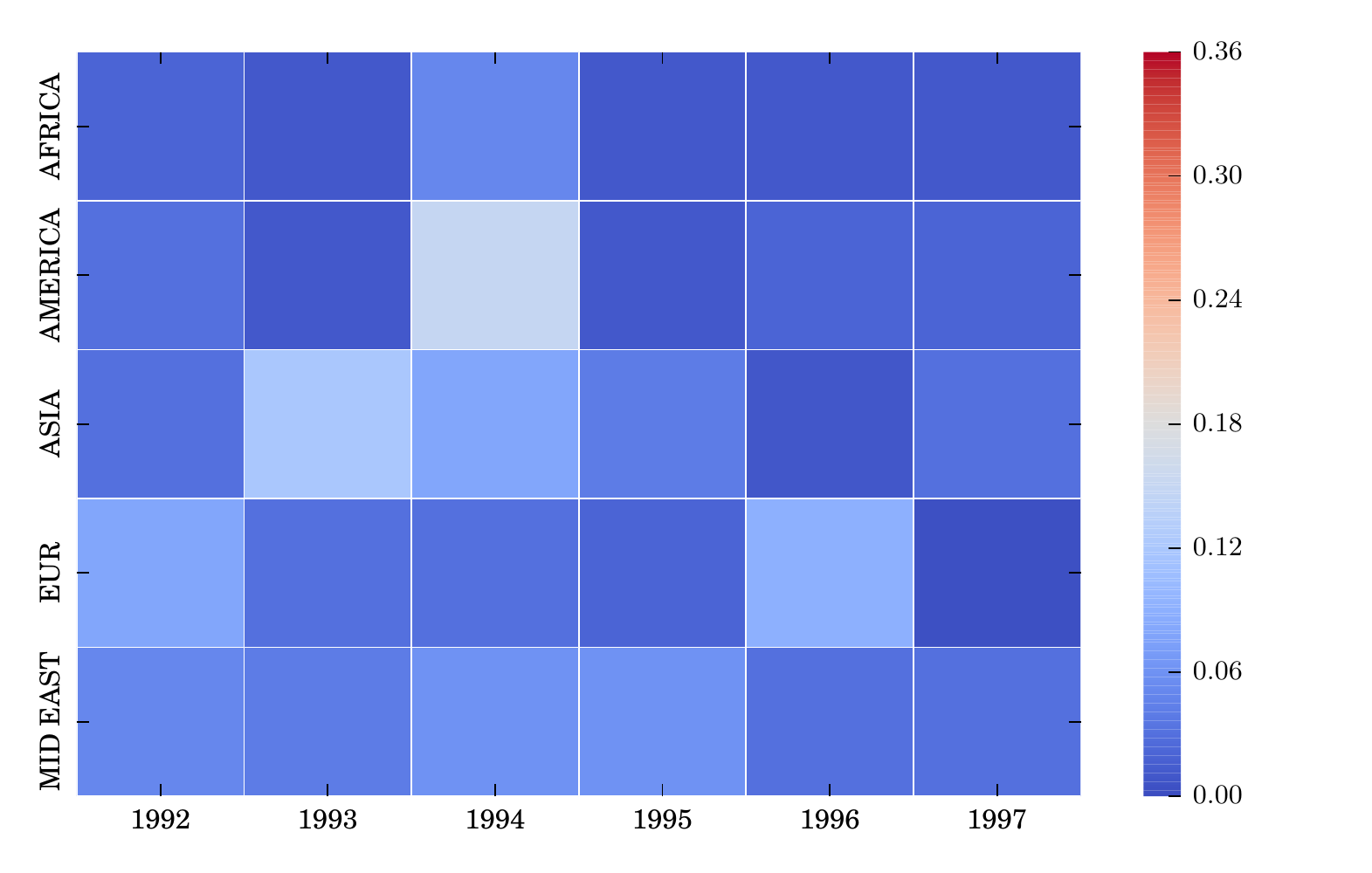}
		\caption{Input heatmap} \label{fig:heatmap:input}
	\end{subfigure}
	\hspace{0.5em}
	\begin{subfigure}{0.3\linewidth}
		\includegraphics[scale=0.35]{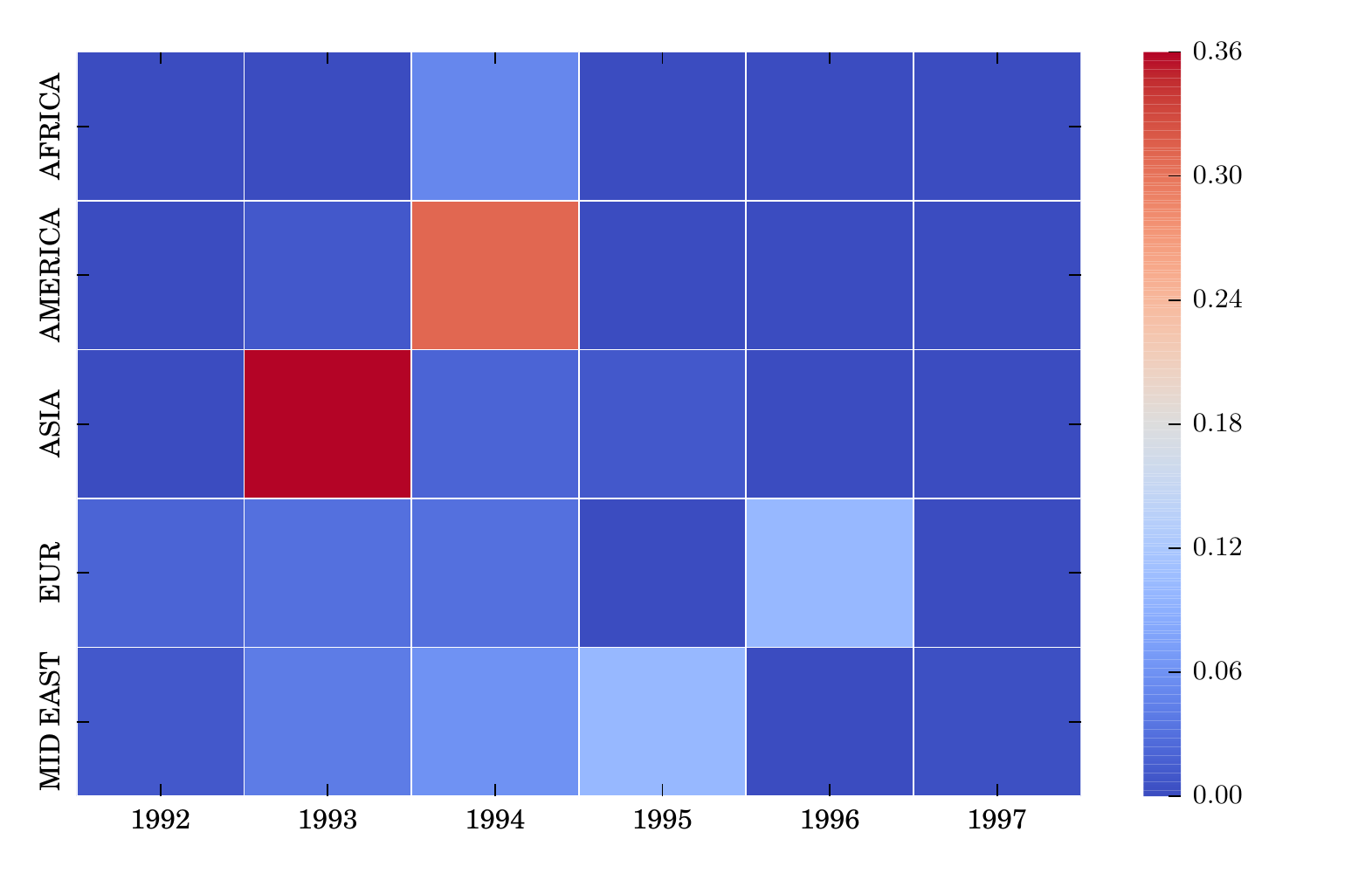} 
		\caption{Random sample heatmap} \label{fig:heatmap:randomsample}
	\end{subfigure}
	\hspace{0.5em}
	\begin{subfigure}{0.3\linewidth}
		\includegraphics[scale=0.35]{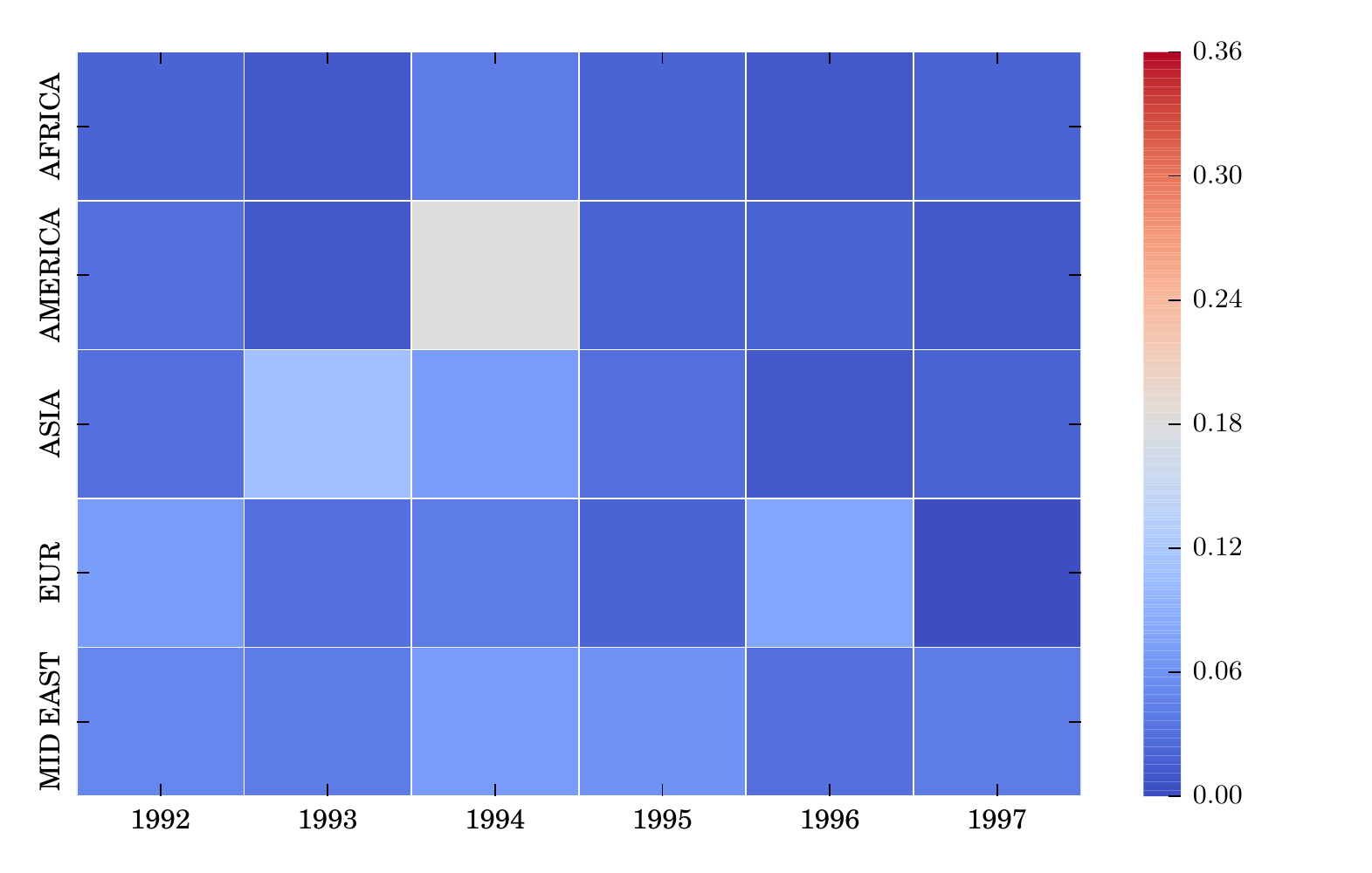} 
		\caption{Submodular  workload heatmap} \label{fig:heatmap:submodular}
	\end{subfigure}
	\caption{Heatmap for two features}
\end{figure*}

\section{Additional Experiments}

\subsection{Index Tuning}

In this section, we describe the experimental results for settings not covered in experimental section. For the first experiment, we compare random sampling with \framework.

\smallskip
\introparagraph{Results}  We study the impact of different features for compression for \textsf{TPC-H} workload. \autoref{table:features:categorical:tpch} and \autoref{table:features:numeric:tpch} show the impact for choosing one feature at a time. Similar to \textsf{SSB} workload, no one feature is good enough to extract a good summary that performs well. Even when using all numeric features, there is still a slowdown of $1.2 \times$. This is also true for categorical features. This experiment again highlights that both numeric and categorical features are important to provide enough signals so that a good summary can be extracted.

\begin{table}[t]
	\caption{Using Categorical Features: Slowdown compared to using all categorical and numeric features for compression} \label{table:features:categorical:tpch}
	\scriptsize
	\scalebox{0.90}{
		\begin{tabular}{ c|c|c|c|c}
			\toprule[0.1em]
			Task $\downarrow$ Feature $\rightarrow$  & \texttt{function\_call} &  \texttt{table\_reference} &  \texttt{group\_by} & \texttt{order\_by}  \\ \midrule[0.1em]
			\textsf{TPC-H} index & $1.56 \times$ &  $1.23 \times$ & $1.29 \times$ & $1.32 \times$  \\ \midrule
			\textsf{TPC-H} views & $1.24 \times$ &  $1.39 \times$ & $1.24 \times$ & $1.35 \times$ \\ \midrule
			\textsf{TPC-H} indexes+views & $2.2 \times$ &  $2.15 \times$ & $1.71 \times$ & $1.32 \times$ \\			
			\bottomrule
	\end{tabular}}
\end{table}

\begin{table}[t]
	\caption{Using Numeric Features: Slowdown compared to using all categorical and numeric features for compression} \label{table:features:numeric:tpch}
	\scriptsize
	\scalebox{0.90}{
		\begin{tabular}{ c|c|c|c|c}
			\toprule[0.1em]
			Task $\downarrow$ Feature $\rightarrow$  & \texttt{execution\_time} &  \texttt{output\_size} &  \texttt{\#joins} & \texttt{all numeric}  \\ \midrule[0.1em]
			\textsf{TPC-H} index & $1.2 \times$ &  $1.31 \times$ & $1.33 \times$ & $1.15 \times$\\ \midrule
			\textsf{TPC-H} views & $1.12 \times$ &  $1.27 \times$ & $1.49 \times$ & $1.09 \times$\\ \midrule
			\textsf{TPC-H} indexes+views & $1.15 \times$ &  $1.31 \times$ & $1.61 \times$ & $1.2 \times$ \\			
			\bottomrule
	\end{tabular}}
\end{table}

\subsection{Algorithmic Optimizations} 

In this section, we design experiments to test the efficiency of the {\em early exit} optimization technique applied for \framework and its effect on runtime. 
We use the same workload of $5000$ queries from Section blah. Since the baseline greedy algorithm is slow, we omit reporting its running time. 

\smallskip
\introparagraph{Results} Figure~\ref{fig:opt1} and Figure~\ref{fig:opt2} show how the two optimization techniques scale with respect to the input workload size (when summary size is kept fixed to $100$) and the summary workload size (when input workload is fixed to $5000$ queries). 
The effect of early exit optimization increases as the dataset size increases. 
This is because early exit prunes away a lot of computation in each iteration of the greedy process. 
To further investigate early pruning, let $\texttt{iter}_{\textit{total}}$ be the total number of iterations in the greedy process without any optimizations and $\texttt{iter}_{\textit{exit}}$ be the number of iterations performed by the algorithm with the early exit optimization switched on. 
Then, the pruning ratio $\texttt{iter}_{\textit{exit}}/\texttt{iter}_{\textit{total}}$ quantifies the effectiveness of the strategy. 
Figure~\ref{fig:opt3} and Figure~\ref{fig:opt4} show the impact of varying workload and summary sizes on the pruning ratio. 
As the workload size increases, the effectiveness of the optimization also increases indicating that the submodular bound becomes tighter and thus more effective. 
With respect to the summary workload size, the fraction of iterations pruned remains uniform and eventually decreases for large summaries.

\subsection{Test workload generation} \label{subsec:test:workload}

In this section, we consider the problem of generating test workloads from production queries. Developers frequently use queries to test their code while developing the functionality in \textsf{RDBMS}. For example, consider a developer who is adding the functionality of indexes on columns for faster scans. She would like to test this functionality on some queries. However, instead of choosing from a set of predefined queries, it is more desirable to choose the test workload from a set of production queries, which increases more confidence in the testing of the functionality.
Additionally, we would also like to obtain metrics that tell us about the {\em coverage} of the test workload. We can now use our metrics of coverage and representativity directly here to generate a test workload from production queries for the developer. For this use-case, we create one feature for every column that appears in the \texttt{\color{blue}WHERE} clause of queries. Let $f_1, \dots, f_k$ denote all such columns. We create a combined feature $f = f_1 \times \dots \times f_k$ with domain $\domain(\bW,f) = \domain(\bW,f_1) \times \dots \times \domain(\bW,f_k)$. Intuitively, $f$ represents a $k$ dimensional cube with dimensions as the domain values of each feature and each cell in the cube stores the number of queries that {\em covers} the particular combination of values. 

\begin{example}
	Consider the \textsf{TPC-H} workload and let $f_1 = \mathsf{REGION}$ and $f_2 = \mathsf{YEAR}$. We construct a two dimensional heat map where each query can contribute to multiple cube cells if its predicates covers the values for a particular cell. Figures~\ref{fig:heatmap:input}, \ref{fig:heatmap:randomsample} and~\ref{fig:heatmap:submodular} show the heat map for the input workload, random sample of $22$ queries and a compressed workload of size $22$. The workload generated by \framework is superior to a random sample since its heat map is much more similar to the input as compared to the heat map of the random sample. Additionally, random sample has low coverage (dark blue cells) since a large number of cells are empty and it two cells (denoted in red and orange) are disproportionately represented.
	
\end{example}

\smallskip
\introparagraph{Results} We perform two sets of experiments. In the first experiment, our goal is to choose a subset workload $S \subseteq W$ that induces a data cube distribution with high coverage and high representativity (by setting $d(\cdot)$ as the input distribution). We use the \textsf{TPC-H} workload, and set the constraint total execution time of $\bS$ is at most $15$ minutes. Our algorithm is able to generate a test workload with coverage $0.75$ and representativity $0.86$ in $<10 s$. To compare with sampling, we generate multiple random samples and choose the workload with total running time at most $15$ minutes and highest metrics. The best coverage and representativity obtained is $0.17$ and $0.91$. This shows the superiority of \framework since we can achieve much higher coverage by only marginally sacrificing representativity. In the second experiment, we wish to generate a summary workload with the same constraint but set $d(\cdot)$ to be uniform instead of the input distribution. Such a scenario is useful in practice to ensure that even if the input is skewed, we generate an output workload that is skew free. Here, we obtain a summary with coverage $0.72$ and representativity $0.77$ while random sampling is unusable since $d(\cdot)$ is not the input distribution.

\subsection{More Micro-benchmarks}

\begin{figure*}[!htp]
	\begin{subfigure}{0.45\linewidth}
		\includegraphics[scale=1]{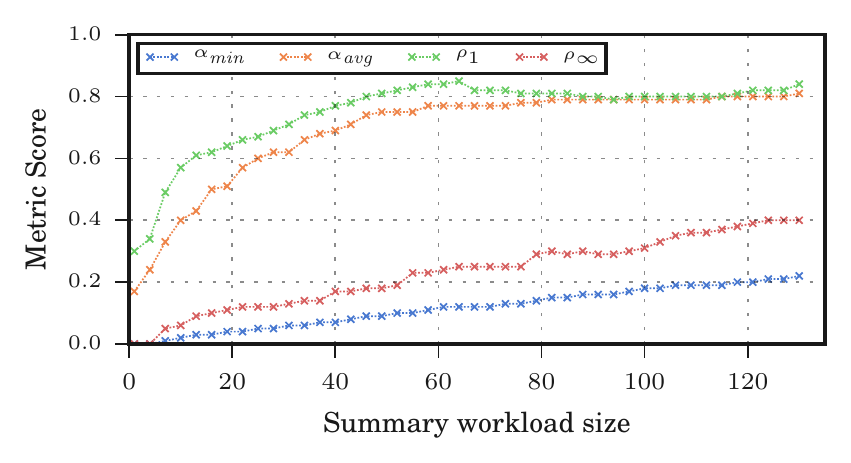}
		\caption{k-medoids on uniform TPC-H workload} \label{fig:tpch:exp1}
	\end{subfigure}
	\hspace{1cm}
	\begin{subfigure}{0.45\linewidth}
		\includegraphics[scale=1]{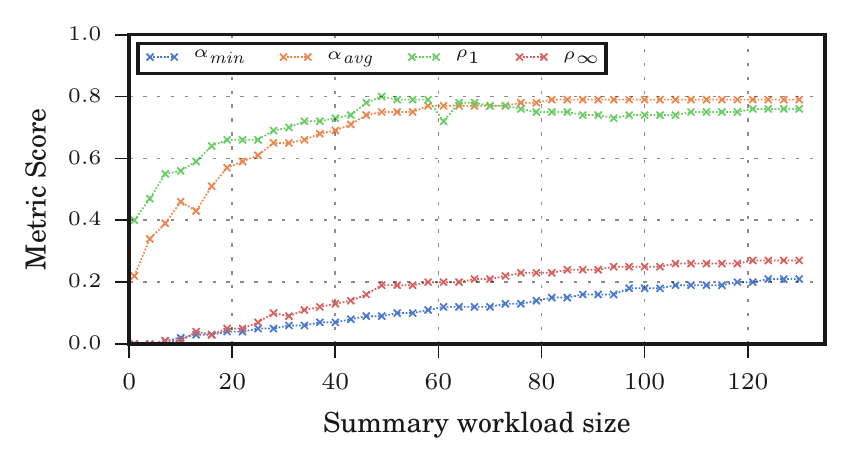}
		\caption{Submodular algorithm on uniform TPC-H workload} \label{fig:tpch:exp2}
	\end{subfigure}
	\caption{Performance metrics for uniform TPC-H workload}
\end{figure*}

In this part of the evaluation, we compare \framework with k-medoids by carefully constructing workloads where clustering performs the best and then understand the strengths and limitations of \framework.

\begin{figure*}[!htp]
	\begin{subfigure}{0.3\linewidth}
		\includegraphics[scale=0.7]{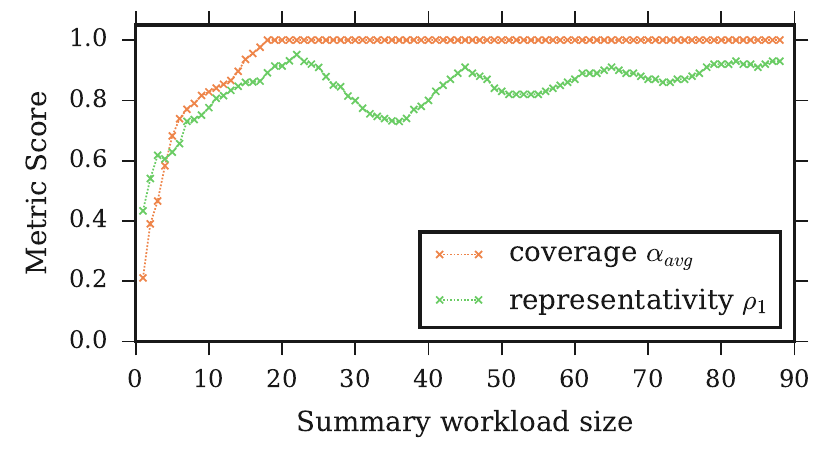}
		\caption{k-medoids} \label{fig:tpch2:exp1}
	\end{subfigure}
	\hspace{1em}
	\begin{subfigure}{0.3\linewidth}
		\includegraphics[scale=0.7]{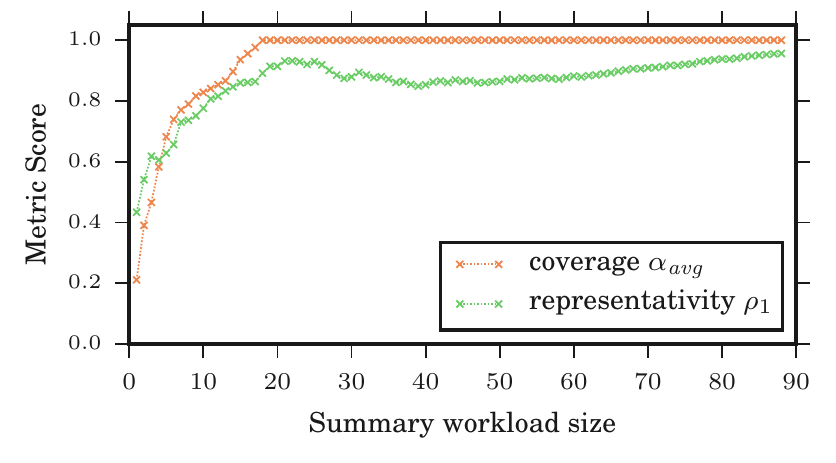}
		\caption{Submodular} \label{fig:tpch2:exp2}
	\end{subfigure}
	\hspace{1em}
	\begin{subfigure}{0.3\linewidth}
		\includegraphics[scale=0.7]{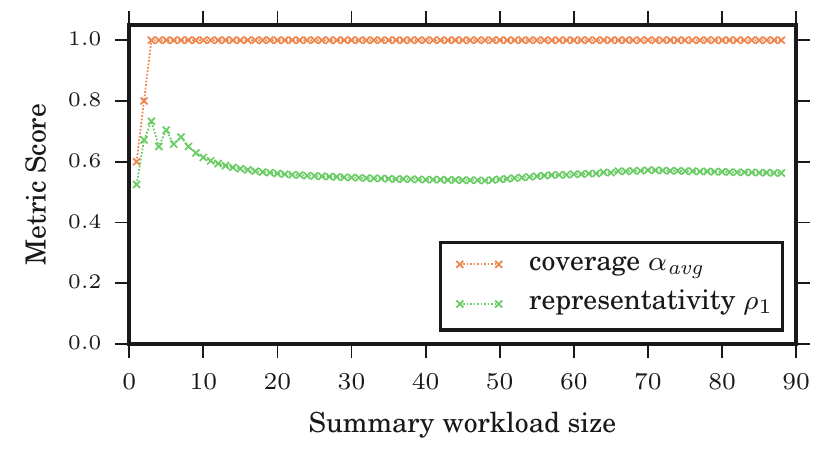}
		\caption{Skewed TPC-H workload} \label{fig:tpch2:exp3}
	\end{subfigure}
	\caption{Evaluation of the TPC-H benchmark using categorical features only.}
\end{figure*}

\begin{figure*}[t]
 	\begin{subfigure}{0.3\linewidth}
 		\includegraphics[scale=0.7]{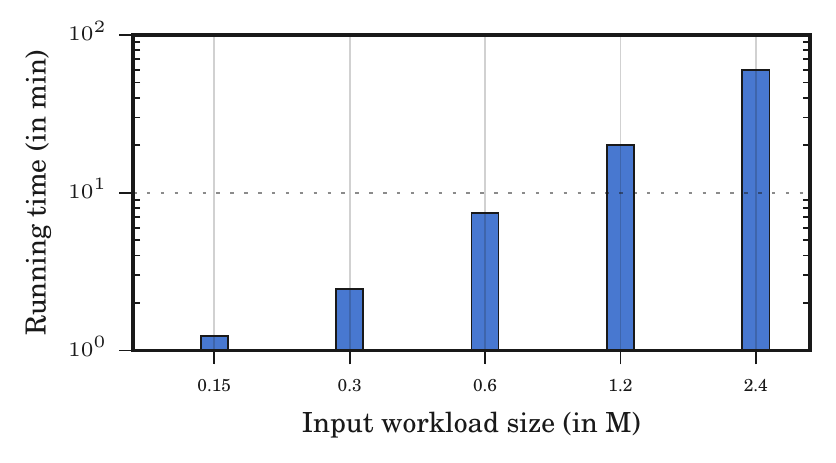}
 		\caption{Runtime varying $|\bW|$, $|\bS|=\sqrt{|\bW|}$} \label{fig:scalability:exp1:real}
 	\end{subfigure}
 \hspace{2em}
 	\begin{subfigure}{0.3\linewidth}
 		\includegraphics[scale=0.7]{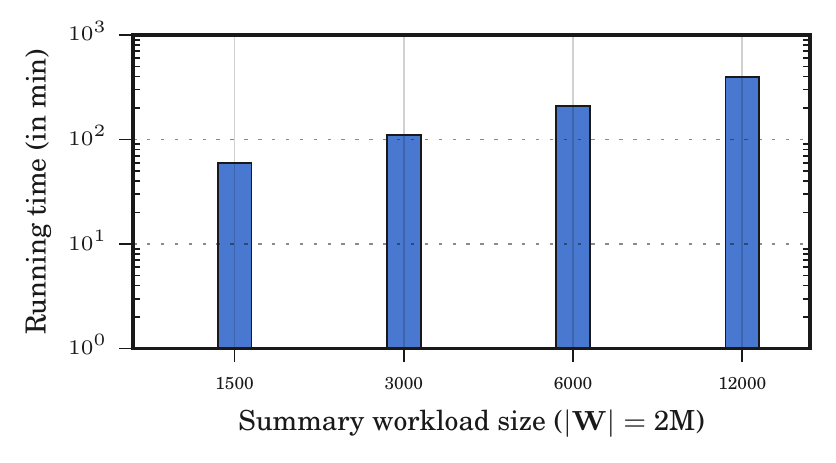}
 		\caption{Runtime varying $|\bS|$, $|\bW|=2$M} \label{fig:scalability:exp2:real}
 	\end{subfigure}
 	\caption{Scalability and Parallel Computation}
 \end{figure*} 

\smallskip
\noindent \introparagraph{Uniform Workload} For our first experiment, we construct a workload with $418$ queries where each TPC-H template is instantiated $19$ times. 
Each instantiation generates different constants in the query which affects the execution statistics. 
Then, we ask both the algorithms to generate summaries ranging from $1$ to $132$ in size. 
Our goal is observe how the different metrics change as the workload size increases. 
To create the summaries, all categorical and numeric metrics introduced in Section~\ref{subsec:encoding} are computed. Figure~\ref{fig:tpch:exp2} shows all metrics for this experiment. 
In both cases, the coverage metric shows monotonic behavior. 
The submodular algorithm has slightly better coverage than k-medoids at each step since it is trying to maximize coverage (recall that $\gamma \rightarrow 0$). However, k-medoids has a better performance for the $\rho$ metric. This behavior is expected since the initialization step of k-medoids chooses cluster center for $C_i$ by finding a query that has the largest distance from cluster center of $C_{i-1}$. 
For the TPC-H workload, all queries are {\em outliers} with respect to each other and therefore, centers are chosen perfectly. 
On the other hand, the submodular algorithm tries to maximize coverage by including at least one instance of each template until $20$. It does not pick an instance of $Q_{14}$ and $Q_{6}$ because their tokens have already been covered by other queries. 
One may find it surprising that there is not much variation in the $\rho$ metrics after $22$ and the low overall coverage. 
The main reason for this is the wide variability in runtime statistics for different instantiations of the same query template. 
To confirm this hypothesis, we calculate $\alpha$ and $\rho$ over categorical features only as shown in Figure~\ref{fig:tpch2:exp1} and Figure~\ref{fig:tpch2:exp2}. 
Here, the categorical feature coverage for both algorithms reaches $1$ in at most $22$ queries. 
Figure~\ref{fig:tpch2:exp1} demonstrates that $\rho_1$ score achieves a local maxima when the summary size is a multiple of $22$. It reaches a maximum when query size is $22$ and then marginally decreases until query $33$ before increasing again. 
The submodular algorithm behaves differently: After maximizing coverage, it picks queries according to the objective function and ensuring that queries are chosen such that the $q(\cdot)$ distribution is maintained. 
Specifically, for workload sizes between $20$ and $77$, the algorithm will always return $20$ queries as that summary size has the highest score \footnote{Recall that the greedy algorithm never picks $Q_1$ and $Q_6$ because most of its feature values are covered by other queries.}.
Beyond $77$,the $\rho_1$ score of summary size $20$ is exceeded and increases in each iteration, thus, those will be returned. 
Both algorithms eventually reach $\rho_1 = 1$ with an increase in summary size. 

So far, we have used uniform weights for all features. 
We observed in some of our experiments that the metrics can be significantly improved by assigning a higher weight to a feature that has lower $\rho$ or $\alpha$. 
We use this heuristic to perform a few weight transfer iterations by lowering the weight of features with high metrics and transferring it to features with low metrics. 
Using this technique, we were able to boost the metrics to $\rho_1 = 0.84, \rho_\infty = 0.44, \alpha_{\mathit{avg}} = 0.82, \alpha_{\mathit{min}} = 0.31$ when summary size is $22$ which is strictly better than k-medoids scores of $\rho_1 = 0.75, \rho_\infty = 0.34, \alpha_{\mathit{avg}} = 0.70, \alpha_{\mathit{min}} = 0.22$.

\smallskip
\noindent \introparagraph{Skewed Workload} Our next experiment verifies the behavior of fixing $\td(\cdot)$ to a user specified distribution. 
For this experiment, we generate a skewed TPC-H workload where query template $Q_i$ is instantiated $\lfloor \frac{1}{i+1} \cdot 66 \rfloor$ times. 
Since $Q_1$ contains \texttt{ \color{blue}SUM, MAX, COUNT}, it dominates the function calls in the workload whereas \texttt{ \color{blue}SUBSTR} and \texttt{ \color{blue} ARRAY-AGG} are present only in $Q_{22}$ and $Q_{19}$ respectively. 
To keep the setting simple, we set $\td(\cdot)$ for function call feature to be uniform, i.e,~$0.2$ for each function call token (as the domain of function call feature has $5$ values) and calculate $\rho_1, \alpha_{\mathit{avg}}$ only for function call feature. 
Figure~\ref{fig:tpch2:exp3} shows the metrics as a function of the summary workload size. 
The first query picked by \framework is $Q_1$ since it has the most function call tokens. 
The next two queries are $Q_{19}$ and $Q_{22}$ to cover the remaining two tokens. 
After another round of the same sequence, all remaining queries picked are $Q_1$ and as the summary size grows, \texttt{ \color{blue}SUM, MAX, COUNT} dominate the summary workload. 
Thus, the algorithm will return a summary size of $3$ with one instance each of $Q_1, Q_{19}, Q_{22}$ that covers all tokens and has the maximum $\rho_1$.

\section{Scalability and Parallel Computation on Production Workload} 
To benchmark scalability, we execute \framework on a single thread on a single machine and use all available categorical and numeric features. 
We use the \texttt{DataViz} workload as the basis and synthetically scale it up to $2.4M$ queries by replicating the workload.

\smallskip
\introparagraph{Results}
\Cref{fig:scalability:exp1:real} shows the runtime in minutes when the input workload size $|\bW|$ varies and the summary size is fixed to $|\bS| = \sqrt{|\bW|}$. 
If $|\bW| = 2.4$ million queries, it takes \framework $60$ minutes to execute compression. {Since the compression algorithm is executed daily, this performance is acceptable in practice. As we will see later, using multiple cores can further improve the running time as well.}
We generally observe a linear increase in runtime when increasing the workload size. \Cref{fig:scalability:exp2:real} shows how the choice of summary size impacts scalability.
Here, we set $|\bW| = 2$ million queries and observe that creating a summary with $|\bS| = 12000$ takes roughly $6$ hours.
Analogous to the results observed when increasing $|\bW|$, we see a linear increase in runtime with an increase in summary size $|\bS|$. 
 
 Next, we study the impact of the number of parallel processors used on \framework. 
 For this experiment, we take \texttt{DataViz} and scale it to a size $|\bW| = 5$M queries. Our goal is to generate of summary of $|\bS| = 2500$ ($\approx \sqrt{|\bW|}$).
 Our parallel \framework implementation partitions the input workload uniformly onto the number of available processors, $\ell$, and computes a summary of size $S$ on each processor. 
 Then, in the final step, the $|\bS| \cdot \ell$ combined queries are used to compute the output summary of size $|\bS|$. 
 
 \smallskip
 \introparagraph{Results}
 \Cref{table:runtime:production} shows the runtime and the impact of parallelization on the summary workload metrics. 
 The first column shows the metrics when a single processor calculates the summary. 
 As the number of processors increases, the speed-up obtained is near linear.
 We further observe minimal impact on coverage and representativity which only drops marginally, i.e.,~from $0.85$ to $0.81$ for coverage and $0.91$ to $0.85$ for representativity.
 
 \begin{table}[t]
 	\caption{Runtime (in minutes) and metrics obtained using parallel processing.}
 	\scalebox{0.95}{
 		\begin{small}
 			\begin{tabular}{@{}lrrrrrr@{}}\toprule
 				\textbf{\# processors $\rightarrow$} & $\mathbf{1}$&$\mathbf{2}$&$\mathbf{3}$&$\mathbf{4}$&$\mathbf{5}$&$\mathbf{6}$  \\ \toprule
 				\textbf{runtime (min)} & 180 & 100 & 64 & 45 & 38 & 31 \\ \midrule
 				\textbf{speedup obtained} & 1x& 1.8x& 2.8x & 4.0x & 4.73x & 5.8x \\\midrule
 				\textbf{coverage} & 0.85 & 0.85 & 0.84 & 0.83 & 0.82 & 0.81 \\\midrule
 				\textbf{representativity} & 0.91 & 0.88 & 0.88 & 0.86 & 0.85 & 0.85 \\
 				\bottomrule
 			\end{tabular}
 		\end{small}
 	}
 	\label{table:runtime:production}
 \end{table}
 
 \section{Feature Normalization for Incremental Computation} \label{sec:normalization}
 
 Numeric features require the knowledge of min/max values of a particular feature in order for normalization to be performed. The normalization transforms a feature value into a real number between $0$ and $1$. This is easy to perform for a given workload $\bW$ since it is easy to find our the range for all numeric features. However, when performing incremental computation, the range of a numeric feature may change as more incremental workloads are summarized. A change in the range of a numeric feature requires re-computation of the normalization value. Further, the bucket $b'_i = \lfloor v'_i \cdot H \rfloor$ may also change which requires compressing all workloads from scratch, an expensive operation.
 
 This problem can be solved by observing that the only constraint we require for normalization when performing incremental computation is that the bucket values of previously compressed workloads does not change. The key insight is that there is no technical limitation that bucket value must be between $0$ and $H$. Once the min and max value for a numeric feature is fixed, we compute the bucket value $b_i$ in the same way, except that $b_i$ can be larger than $H$. The algorithm can now update the $p_{\bW}(t)$ accordingly which is a constant time operation. Thus, as long as we have a reasonable estimate for min and max values of all numeric features, incremental computation can be performed efficiently.
 
 There are other techniques to perform feature independent normalization. Each numeric value $v_i$ can also be assigned bucket $j$ such that $v_i$ is the closest to $(1+\epsilon)^j$. The value of $\epsilon$ is chosen heuristically based on the spread of values of $v_i$. This can be done by analyzing the query log to find the appropriate value. We note that this technique is useful when most of the values are spread far apart since the bucket range increases exponentially. Instead, we can also use equi-width bucket ranges by using the function $(1+\epsilon) \cdot j$. The appropriate function is chosen on a case-by-case basis by analyzing the logs and looking at the distribution of feature values.
	
\end{document}